\newcounter{newpseudonum}[pseudocode]
  \providecommand{\refline}[1]{\hyperref[#1]{(\ref*{#1})}}
  \providecommand{\refline}[1]{\ref*{#1}}
\renewcommand{\RETURN}[1]{\ifthenelse{\equal{#1}{} }{\mbox{\bfseries return}}{\mbox{\bfseries return}#1}}
\newcommand{\FUNCTION}[2]{\mbox{\bfseries proc }\mbox{\textsc{#1}}\left(\ensuremath{#2}\right)\\}
\newlength{\pcodewidth}
\newenvironment{code}[1]{
\begin{Sbox}
\!\!\begin{minipage}{#1}
\bfseries
\noindent
\scriptsize
$$
\begin{array}{@{\hspace*{1ex}}lr@{}}
}{
\end{array}
$$
\end{minipage}\vspace{-2mm}
\end{Sbox}
\shadowbox{\TheSbox}{}
}
\DeclareMathAlphabet{\mathpzc}{OT1}{pzc}{m}{it}
\newcommand{\ii}[2]{{\mathpzc{#1}\mathpzc{#2}}_{ii}}
\newcommand{\ip}[1]{{\mathpzc{#1}}_{ip}}
\renewcommand{\pi}[2]{{\mathpzc{#1}\mathpzc{#2}}_{pi}}
\newcommand{\pp}[1]{{\mathpzc{#1}}_{pp}}
\newcommand{\Lin}{\mathsf{Lin}}
\newcommand{\Den}{\mathsf{Den}}
\newcommand{\Dis}{\mathsf{Dis}}
\newcommand{\Unb}{\mathsf{Unb}}
\newcommand{\Fin}{\mathsf{Fin}}
\newcommand{\allr}{\mathfrak{R}}
\newcommand{\alli}{\mathfrak{I}}
\newcommand{\allm}{\mathfrak{M}}
\newcommand{\allp}{\mathfrak{P}}
\begin{document}

\title[A Theory of Points and Intervals (I)]{An Integrated First-Order Theory of Points and Intervals over Linear Orders (Part I)}

\author[Willem Conradie et al.]{Willem Conradie}	
\address{School of Mathematics, University of the Witwatersrand\\
        Johannesburg, South Africa}	
\email{willem.conradie@wits.ac.za}  

\author[]{Salih Durhan}	
\address{Nesin Mathematics Village \\
         Sirince, Turkey}	
\email{salih1@gmail.com}  

\author[]{Guido Sciavicco}	
\address{Department of Mathematics and Computer Science\\
        University of Ferrara\\
        Ferrara, Italy}	
\email{guido.sciavicco@unife.it}  
\thanks{We would like to thank Dr. Davide Bresolin, of the University of Padova, for his help.
  The research of the first author was supported by the National Research Foundation of South Africa grant number 81309..
}	 



\keywords{Interval-based Temporal Logics; Expressiveness; Allen's Relations}
\subjclass{[Theory of computation]: Logic --- Modal and temporal logics; [Computing methodologies]: Artificial intelligence --- Knowledge representation and reasoning ---  Temporal reasoning}


\begin{abstract}
  There are two natural and well-studied approaches to temporal ontology and reasoning: point-based and interval-based. Usually, interval-based temporal reasoning deals with points as a particular case of duration-less intervals. A recent result by Balbiani, Goranko, and Sciavicco presented an explicit two-sorted point-interval temporal framework in which time instants (points) and time periods (intervals) are considered on a par, allowing the perspective to shift between these within the formal discourse. We consider here two-sorted first-order languages based on the same principle, and therefore including relations, as first studied by Reich, among others, between points, between intervals, and inter-sort. We give complete classifications of its sub-languages in terms of relative expressive power, thus determining how many, and which, are the intrinsically different extensions of two-sorted first-order logic with one or more such relations. This approach roots out the classical problem of whether or not points should be included in a interval-based semantics.
\end{abstract}

\maketitle

\section{Introduction}\label{sec:intro}

The relevance of temporal logics in many theoretical and applied areas of computer science and AI, such as theories of action and change, natural language analysis and processing, and constraint satisfaction problems, is widely recognized. While the predominant approach in the study of temporal reasoning and logics has been based on the assumption that time points (instants) are the primary temporal ontological entities, there has also been significant activity in the study of interval-based temporal reasoning and logics over the past two decades. The variety of binary relations between intervals in linear orders was first studied systematically by Allen~\cite{Allen87,Allen:1983:MKA,JLOGC::AllenF1994}, who explored their use in systems for time management and planning. Allen's work and much that follows from it is based on the assumption that time can be represented as a dense line, and that points are excluded from the semantics. At the modal level, Halpern and Shoham~\cite{JACM::HalpernS1991} introduced the multi-modal logic HS that comprises modal operators for all possible relations (known as Allen's relations~\cite{Allen:1983:MKA}) between two intervals in a linear order, and it has been followed by a series of publications studying the expressiveness and decidability/undecidability and complexity of the fragments of HS, e.g.,~\cite{tcs2014,amai2014}. Many studies on interval logics have considered the so-called `non-strict' interval semantics, allowing point-intervals (with coinciding endpoints) along with proper ones, and thus encompassing the instant-based approach, too; more recent ones, instead, started to treat pure intervals only. Yet, little has been done so far on the formal treatment of both temporal primitives, points and intervals, in a unified two-sorted framework. A detailed philosophical study of both approaches, point-based and interval-based, can be found in~\cite{vB91} (see also~\cite{590368}). A similar mixed approach has been studied in~\cite{DBLP:journals/ci/AllenH89}. ~\cite{DBLP:journals/cj/MaH06} contains a study of the two sorts and the relations between them in dense linear orders. More recently, a modal logic that includes different operators for points and interval has been presented in~\cite{DBLP:journals/entcs/BalbianiGS11}.

\medskip

The present paper provides a systematic treatment of point and interval relations (including equality between points and between intervals treated on the same footing as the other relations) at the first-order level. Our work is motivated, among other observations, by the fact that natural languages incorporate both ontologies on a par, without assuming the primacy of one over the other, and have the capacity to shift the perspective smoothly from instants to intervals and vice versa within the same discourse, e.g.: {\em when the alarm goes on, it stays on until the code is entered}, which contains two instantaneous events and a non-instantaneous one. Moreover, there are various temporal scenarios which neither of the two ontologies alone can grasp properly since neither the treatment of intervals as the sets of their internal points, nor the treatment of points as `instantaneous' intervals, is really adequate. The technical identification of intervals with sets of their internal points, or of points as instantaneous intervals leads also to conceptual problems like the confusion of events and fluents. Instantaneous events are represented by time intervals and should be distinguished from instantaneous holding of fluents, which are evaluated at time points: therefore, the point $a$ should be distinguished from the interval $[a,a]$, and the truths in these should not necessarily imply each other. Finally, we note that, while differences in expressiveness have been found between the strict and non-strict semantics for some interval logics (see~\cite{ijcai11}, for example), so far, no distinction in the decidability of the satisfiability has been found. Therefore, we believe that an attempt to systemize the role of points, intervals, and their interaction, would make good sense not only from a purely ontological point of view, but also from algorithmic and computational perspectives.

\medskip

\noindent{\bf Previous Work and Motivations.} As presented in the early work of van Benthem~\cite{vB91} and Allen and Hayes~\cite{Allen85}, interval temporal reasoning can be formalized as an extension of first-order logic with equality with one or more relations, and the properties of the resulting language can be studied; obviously, the same applies when relations between points are considered too. In this paper we ask the question: interpreted over linear orders, how many and which expressively different languages can be obtained by enriching first-order logic with relations between intervals,  between points, and between intervals and points? Since, as we shall see, there are 26 different relations (including equality of both sorts) between points, intervals, and points and intervals, $2^{26}$ is an upper bound on this number. (It is worth noticing that in~\cite{DBLP:journals/cj/MaH06} the authors distinguish 30 relations, instead of 26; this is due to the fact that the concepts of the point $a$ {\em starting} the interval $[a,b]$ and {\em meeting} it are considered to be different.) However, since certain relations are definable in terms of other ones, the actual number is less and in fact, as we shall show, much less. The answer also depends on our choices of certain semantic parameters, specifically, the class of linear orders over which we construct our interval structures. In this paper, in Part I, we consider the classification problem relative to:

\begin{enumerate}[label={(\emph{\roman*})}]
\item the class of all linear orders;
\item the class of all  weakly discrete linear orders (i.e., orders in which every point with a successor/predecessor has an immediate one).
\end{enumerate}

\noindent In Part II of this paper we consider:
\begin{enumerate}[resume*]
\item the class of all dense linear orders;
\item the class of all unbounded linear orders;
\end{enumerate}

\medskip

Apart from the intrinsic interest and naturalness of this classification problem, its outcome has some important repercussions, principally in the reduction of the number of cases that need to be considered in other problems relating to these languages. For example, it reduces the number of representation theorems that are needed: given the {\em dual} nature of time intervals (i.e., they can be abstract first-order individuals with specific characteristics, or they can be defined as ordered pairs over a linear order), one of the most important problems that arises is the existence or not of a {\em representation theorem}. Consider any class of linear orders: given a specific extension of first-order logic with a set of interval relations (such as, for example, {\em meets} and {\em during}), does there exist a set of axioms in this language which would constrain (abstract) models of this signature to be isomorphic to concrete ones?  Various representation theorems exist in the literature for languages that include interval relations only: van Benthem~\cite{vB91}, over rationals and with the interval relations {\em during} and {\em before}, Allen and Hayes~\cite{Allen85}, for the dense unbounded case without point intervals and for the relation {\em meets}, Ladkin~\cite{Ladkin}, for point-based structures with a quaternary relation that encodes meeting of two intervals, Venema~\cite{JLOGC::Venema1991}, for structures with the relations {\em starts} and {\em finishes}, Goranko, Montanari, and Sciavicco~\cite{GMS03}, for linear structures with {\em meets} and {\em met-by}, Bochman~\cite{DBLP:journals/ndjfl/Bochman90}, for point-interval structures, and Coetzee~\cite{Coetzee:MThesis} for dense structure with {\em overlaps} and {\em meets}. Clearly, if two sets of relations give rise to expressively equivalent languages, two separate representations theorems for them are not needed.  In which cases are representation theorems still outstanding? Preliminary works that provide similar classifications appeared in~\cite{DBLP:conf/caepia/ConradieS11} for first-order languages with equality and only interval-interval relations, and in~\cite{time2012} for points and intervals (with equality between intervals treated on a par with the other relations) but only over the class of all linear orders. Finally, a complete study of first-order interval temporal logics enables a deeper understanding of their modal counterparts based on their shared relational semantics.

\medskip

\noindent{\bf Structure of the paper.} This paper is structured as follows. Section~\ref{sec:basics} provides the necessary preliminaries, along with an overview of the general methodology used in this paper. In Section~\ref{sec:lin} we study the expressive power of the language by analyzing the definability
properties of each basic relation in the class $\Lin$. Section~\ref{sec:lin-incomp} deals with the `other half' of the picture, that is, undefinability results, and presenting all maximally incomplete sets in this class (i.e., those subsets of relations that do not allow one to define the remaining ones, and are maximal in this sense with respect to the subset relation); we also deal with completeness and incompleteness results for the class of all discrete linear orders in this section. Section~\ref{sec:harvest} presents an account of all our results in a structured way, including the projections of these to natural sub-languages, before concluding. The classes of all dense and the class of all unbounded linearly ordered sets will be treated in Part II (forthcoming).

\section{Basics}\label{sec:basics}

\subsection{Syntax and semantics}

\begin{table}[t]
\centering
\begin{tikzpicture}[scale=1]
\tikzstyle{every node}=[font=\small]
\draw (0,0)node(op){};
%

\draw (op) ++(0,-.75)node(meets){};
\draw (meets)++(0,-.75)node(later){};
\draw (meets)++(0,-1.5)node(starts){};
\draw (meets)++(0,-2.25)node(finishes){};
\draw (meets)++(0,-3)node(during){};
\draw (meets)++(0,-3.75)node(overlaps){};
\draw (meets)++(-1.0,0)node[right](Ra){$(\mbox{{\em meets,m}})\,\,[a,b]~\ii{3}{4}~[c,d] \Leftrightarrow b=c$};
\draw (meets)++(-1.0,-.75)node[right](Rl){$(\mbox{{\em before,b}})\,\,[a,b]~\ii{4}{4}~[c,d] \Leftrightarrow b < c$};
\draw (meets)++(-1.0,-1.5)node[right](Rs){$(\mbox{{\em starts,s}})\,\,[c,d]~\ii{1}{4}~[a,b] \Leftrightarrow a=c, d < b$};
\draw (meets)++(-1.0,-2.25)node[right](Rf){$(\mbox{{\em finishes,f}})\,\,[c,d]~\ii{0}{3}~[a,b] \Leftrightarrow b=d, a < c$};
\draw (meets)++(-1.0,-3)node[right](Rd){$(\mbox{{\em during,d}})\,\,[c,d]~\ii{0}{4}~[a,b] \Leftrightarrow a < c, d < b$};
\draw (meets)++(-1.0,-3.75)node[right](Ro){$(\mbox{{\em overlaps,o}})\,\,[a,b]~\ii{2}{4}~[c,d] \Leftrightarrow a < c < b < d$};
\draw[red,|-|] (meets) ++(7.6,.75)node[above](a){$a$} -- ++(2,0)node[above](b){$b$};
\draw[dashed,red,help lines,thick] (a) -- ++(0,-5.25);
\draw[dashed,red,help lines,thick] (b) -- ++(0,-5.25);
\draw[|-|] (b) ++(0,-1) ++(0,0)node[above](Ac){$c$}
-- ++(1,0)node[above](Ad){$d$};
\draw[|-|] (b) ++(0,-1) ++(.5,-.75)node[above](Lc){$c$}
-- ++(1,0)node[above](Ld){$d$};
\draw[|-|] (a) ++(0,-1) ++(0,-1.5)node[above](Bc){$c$}
-- ++(.5,0)node[above](Bd){$d$};
\draw[|-|] (b) ++(0,-1) ++(-.5,-2.25)node[above](Ec){$c$}
-- ++(.5,0)node[above](Ed){$d$};
\draw[|-|] (a) ++(0,-1) ++(.5,-3)node[above](Dc){$c$}
-- ++(1,0)node[above](Dd){$d$};
\draw[|-|] (a) ++(0,-1) ++(1,-3.75)node[above](Oc){$c$}
-- ++(2,0)node[above](Od){$d$};
\end{tikzpicture}
\caption{Interval-interval relations, a.k.a. Allen's relations. The equality relation is not depicted.}
\label{ii:relations}
\end{table}

Given a linear order $\mathbb D=\langle D,<\rangle$, we call the elements of $D$ \emph{points} (denoted by $a,b,\ldots$) and define an {\em interval} as an ordered pair $[a,b]$ of points in $D$, where $a<b$. Abstract intervals will be denoted by $I,J,\ldots,$ and so on.
Now, as we have mentioned above, there are 13 possible relations, including equality, between any two intervals. From now on, we
call these {\em interval-interval} relations. Besides equality, there are 2 different relations that may hold between any two points ({\em before}
and {\em after}), called hereafter {\em point-point} relations, and 5 different relations that may hold between a point and an interval and vice-versa: we call those {\em interval-point} and {\em point-interval} relations, respectively, and we use the term {\em mixed} relations to refer to them indistinctly. Interval-interval relations are exactly Allen's relations~\cite{Allen:1983:MKA}; point-point relations are the classical relations on a linear order, and mixed relations will be explained below. Traditionally, interval relations are represented by the initial letter of the description of the relation, like $m$ for {\em meets}. However, when one considers more relations (like point-point and point-interval relations) this notation becomes confusing, and even more so in the presence of more relations, e.g. when one wants to consider interval relations over a {\em partial order}\footnote{This paper is focused on linear orders only; nevertheless, it is our intention to complete this study to include the treatment of partial orders also, and, at this stage, we want to make sure that we will be able to keep the notation consistent.}. We introduce the following notation to resolve this issue: an interval $[a,b]$ induces a
partition of $\mathbb D$ into five regions (see~\cite{lig91}): region 0 which contains all
points less than $a$, region 1 which contains $a$ only,
region 2 which contains all the points strictly between $a$ and $b$, region 3 which contains
only $b$ and region 4 which contains the points greater than $b$. Likewise, a point $c$ induces a partition of $\mathbb D$ into 3 pieces: region 0 contains all the points less than $c$, region 2 contains only $c$, and region 4 contains all the points greater than $c$. Interval-interval relations will be denoted by $I\ii{k}{k'}J$ (where the subscript $_{ii}$ refers to interval-interval relations), where $k,k'\in\{0,1,2,3,4\}$, and $k$ represent the region of the partition induced by $I$ in which the left endpoint of $J$ falls, while $k'$ is the region of the same partition in which the right endpoint of $J$ falls; for example, $I\ii{3}{4}J$ is exactly Allen's relation {\em meets}. Similarly, interval-point relations will be denoted by $I\ip{k\,}a$ (where the subscript $_{ip}$ stands for interval-point relations), where $k$ represents the position of $a$ with respect to $I$; for example, $I\ip{4}a$ is the relation {\em before}. Analogously, point-point relations will be denoted by the symbol $\pp{k\,}$, and point-interval relations by the symbol $\pi{k}{k'}$. For point-point relations it is more convenient to use $<$ instead of $\pp{4}$, and $>$ instead of $\pp{0}$.  In Tab.~\ref{ii:relations} we show six of the interval-interval relations, along with its original nomenclature and symbology, and in Tab.~\ref{ip:relations} we show the interval-point relations. Finally, we consider a equality per sort, using  $=_i$ to denote $\ii 13$ (equality between intervals), and $=_p$ to denote $\pp 2$ (the equality between points). Now, given any of the mentioned relations $r$, its inverse, generically denoted by $\bar r$, can be obtained by inverting the roles of the objects in the case of non-mixed relations; for example, the inverse of the relation $\ii{2}{2}$ (Allen's relation {\em contains}) is the relation $\ii{0}{4}$ (Allen's relation {\em during}). On the other hand, mixed relations present a different situation: the inverse of a point-interval relation is an interval-point relation; thus, for example, the inverse of $\ip{3}$ is $\pi{0}{2}$. Finally, notice that some combinations are forbidden: for instance, the relation $\pi{2}{2}$ makes no sense, as all intervals have a non-zero extension.

\begin{defi}
We shall denote by: $\allr$ the set of all above described relations; $\alli\subset\allr$ the subset of all 13 interval-interval relations (Allen's relations) including the relation $=_i$; $\allm\subset\allr$ the subset of all mixed relations; $\allp\subset\allr$ the subset of all point-point relations including the relation $=_p$. Clearly, $\allr=\alli \bigcup \allm \bigcup \allp$.
\end{defi}

\begin{table}[t]
\centering
\begin{picture}(100,100)
\color{red}
\put(60,90){\line(1,0){40}}
\put(60,87){\line(0,1){6}}
\put(100,87){\line(0,1){6}}
\put(58,94){\small{$a$}}
\put(98,94){\small{$b$}}
\color{black}
\put(-70,74){\small{$[a,b]~\ip{3}~c \Leftrightarrow b=c$}}
\put(98,77){$\cdot$}
\put(97,81){\small{$c$}}
\put(-70,58){\small{$[a,b]~\ip{4}~c \Leftrightarrow b<c$}}
\put(120,60){$\cdot$}
\put(119,64){\small{$c$}}
\put(-70,42){\small{$[a,b]~\ip{2}~c \Leftrightarrow a<c<b$}}
\put(80,45){$\cdot$}
\put(79,49){\small{$c$}}
\put(-70,28){\small{$[a,b]~\ip{1}~c \Leftrightarrow a=c$}}
\put(59,29){$\cdot$}
\put(58,35){\small{$c$}}
\put(-70,12){\small{$[a,b]~\ip{0}~c \Leftrightarrow c<a$}}
\put(40,10){$\cdot$}
\put(39,15){\small{$c$}}
\multiput(60,15)(0,8){9}{\line(0,1){3}}
\multiput(100,15)(0,8){9}{\line(0,1){3}}
\end{picture}
\caption{Interval-point relations.}
\label{ip:relations}
\end{table}

\begin{defi}
In the following, we denote by:

\begin{enumerate}[label={(\emph{\roman*})}]
\item $\Lin$ the class of all linear orders;
\item $\Den$ the class of all dense linear orders, that is, the class of all linear orders where there exists a point in between any two distinct points;
\item $\Dis$ the class of all weakly discrete linear orders, that is, the class of all linear orders where each point, other than the least (resp., greatest) point, if there is one, has a direct predecessor (resp., successor) -- by a {\em direct predecessor} of $a$ we of course mean a point $b$ such that $b < a$ and for all points $c$, if $c < a$ then $c \leq b$, and the notion of a {\em direct successor} is defined dually;
\item $\Unb$ the class of all unbounded linear orders, that is, the class of all linear order such that for every point $a$ there exists a point $b>a$ and a point $c<a$.
\end{enumerate}
\end{defi}

\begin{defi}
Given a linear order $\mathbb D$, and given the set $\mathbb I(\mathbb D)=\{[a,b]\mid a,b\in \mathbb D, a<b\}$ of all intervals built on $\mathbb D$:
\begin{itemize}
\item a \emph{concrete interval structure of signature $S$} is a relational structure $\mathcal{F} = \langle \mathbb I(\mathbb D), r_1,$ $r_2, \ldots,r_n \rangle$, where $S = \{r_1, \ldots,r_n \} \subseteq \alli$, and
\item a \emph{concrete point-interval structure of signature $S$} is a two-sorted relational structure $\mathcal{F} = \langle \mathbb D,\mathbb I(\mathbb D), r_1, r_2, \ldots,$ $r_n \rangle$, where $S = \{r_1, \ldots, r_n \} \subseteq \allr$.
\end{itemize}
\end{defi}

\noindent Since all relations between intervals, points, and all mixed relations are already implicit in $\mathbb I(\mathbb D)$, we shall often simply write $\langle \mathbb I(\mathbb D)\rangle$ for a concrete interval structure $\langle \mathbb I(\mathbb D), r_1, r_2, \ldots, r_n \rangle$, and $\langle \mathbb{D}, \mathbb I(\mathbb D)\rangle$ for a concrete point-interval structure $\langle \mathbb D,\mathbb I(\mathbb D), r_1, r_2, \ldots, r_n \rangle$; this is in accordance with the standard usage in much of the literature on interval temporal logics. Moreover, we denote by $FO+S$ the language of first-order logic without equality and relation symbols corresponding to the relations in $S$. Finally, $\mathcal{F}$ is further said to be \emph{of the class $\mathrm C$} ($\mathrm C\in\{\Lin,\Den,\Dis,\Unb,\Fin\}$) when $\mathbb D$ belongs to the specific class of linear orders $\mathrm C$.

%
%
%
%
%
%

\subsection{(Un)definability and Truth Preserving Relations}

We describe here the most important tools that we use to classify the expressive power of our (sub-)languages.

\begin{defi}
Let $S \subseteq \allr$, and $\mathrm C$ a class of linear orders. We say that $FO + S$ {\em defines} $r\in \allr$ over $\mathrm C$, denoted by $FO+S\rightarrow_{\mathrm C} r$, if there exists $FO+S$-formula $\varphi(x,y)$ such that $\varphi(x,y) \leftrightarrow r(x,y)$ is valid on the class of concrete point-interval structures of signature $(S \cup \{ r \})$ based on $\mathrm C$.
\end{defi}

\noindent By $FO+S\rightarrow r$ we denote the fact that $FO+S\rightarrow_{\Lin} r$ (and hence $FO+S\rightarrow_{\mathrm C} r$ for every $\mathrm C\in\{\Lin,\Den,\Dis,\Unb,\Fin\}$). Obviously, $FO+S\rightarrow r$ for all $r \in S$.

\begin{defi}
Let $S,S'\subseteq \allr$ and $\mathrm{C}$ a class of linear orders. We say that $S$ is:
\begin{itemize}
\item {\em $S'$-complete over $\mathrm C$} (resp., {\em $S'$-incomplete over $\mathrm C$}) if and only if $FO+S\rightarrow_{\mathrm C} r$ for all $r\in S'$ (resp., $FO+S\not\rightarrow_{\mathrm C} r$ for some $r\in S'$), and
\item {\em minimally $S'$-complete over $\mathrm C$} (resp., {\em maximally $S'$-incomplete over $\mathrm C$}) if and only if it is $S'$-complete (resp., $S'$-incomplete) over $\mathrm C$, and every proper subset (resp., every proper superset) of $S$ is $S'$-incomplete (resp., $S'$-complete) over the same class.
\end{itemize}
\end{defi}

\noindent The notion of (minimally) $r$-completeness and (maximally) $r$-incompleteness over $\mathrm C$ is immediately deduced from the above one, by taking $S'=\{r\}$ and denoting the latter simply by $r$. Moreover, one can project the above definitions over some interesting subsets of $\allr$, such as $\alli,\allm$ or $\allp$, obtaining relative completeness and incompleteness.

%


\medskip

Let $C' \subseteq C$ be two classes of linear orders. Notice that if $FO+S\rightarrow_{\mathrm C} r$ then $FO+S\rightarrow_{\mathrm C'} r$ and, contrapositively, that if $FO+S \not \rightarrow_{\mathrm C'} r$ then $FO+S \not \rightarrow_{\mathrm C} r$. So specifically, if $S$ is $S'$-complete over $\mathrm C$, then it is also $S'$-complete over $\mathrm C'$. Also, if $S$ is $S'$-incomplete over $\mathrm C'$, then it is also $S'$-incomplete over $\mathrm C$. Notice however, that minimality and maximality of complete and incomplete sets does not necessarily transfer between super and subclasses in a similar way. In what follows, in order to prove that $FO+S \not \rightarrow_{\mathrm C} r$ for some $r$ and some class $\mathrm C$, we shall repeatedly apply the following definition and (rather standard) procedure.

\begin{defi}\label{Def:S:Truth:Pres:Rel}
Let $\mathcal{F} = \langle \mathbb D,\mathbb I(\mathbb D),S\rangle$ and $\mathcal{F}' = \langle \mathbb D',\mathbb I(\mathbb D'), S\rangle$ be concrete structures where $S\subseteq\allr$. A binary relation $\zeta\subseteq(\mathbb D\cup\mathbb I(\mathbb D)) \times (\mathbb D' \cup \mathbb I(\mathbb D'))$ is called a \emph{surjective $S$-truth preserving relation} if and only if:

\medskip

\begin{enumerate}[label={(\emph{\roman*})}]
\item $\zeta$ respects sorts, i.e., $\zeta = \zeta_p \cup \zeta_i$, where $\zeta_p \subseteq \mathbb D \times \mathbb D'$ and $\zeta_i \subseteq \mathbb I(\mathbb D) \times \mathbb I(\mathbb D')$;
\item $\zeta$ respects the relations in $S$, i.e., if $(a,a'),(b,b')\in\zeta_p$ and $(I,I'),(J,J')\in\zeta_i$, then:
    \begin{enumerate}[nosep]
    \item $r(a,b)$ if and only if $r(a',b')$ for every point-point relation $r \in S$;
    \item $r(I,a)$ if and only if $r(I',a')$ for every interval-point relation $r\in S$;
    \item $r(I,J)$ if and only if $r(I',J')$ for every interval-interval relation $r \in S$;
    \end{enumerate}

\item $\zeta$ is total and surjective, i.e.:
    \begin{enumerate}[nosep]
    \item for every $a \in \mathbb D$ (resp., $I \in \mathbb I(\mathbb D)$), there exist $a' \in \mathbb D'$ (resp., $I' \in \mathbb I(\mathbb D')$) such that $(a,a')\in\zeta_p$ (resp., $(I,I')\in\zeta_i$);
    \item for every $a' \in \mathbb D'$ (resp., $I' \in \mathbb I(\mathbb D')$), there exist $a \in \mathbb D$ (resp., $I \in \mathbb I(\mathbb D)$) such that $(a,a')\in\zeta_p$ (resp., $(I,I')\in\zeta_i$).
    \end{enumerate}
\end{enumerate}
\end{defi}

\noindent If we add to Definition \ref{Def:S:Truth:Pres:Rel} the requirement that that $\zeta$ should be functional, we obtain nothing but the definition of an isomorphism between two-sorted first-order structures or, equivalently, an isomorphism between single sorted first-order structures with predicates added for `point' and `interval' (see e.g. \cite{hodges1993model}). As one would expect, surjective $S$-truth preserving relations preserve the truth of all first-order formulas in signature $S$. This is stated in Theorem \ref{theo:truth}, below. The reason why we consider only interval-point relations instead of all mixed relations is that, as we shall explain, we can limit ourselves to work without inverse relations, and point-interval relations are the inverse of interval-point ones.

\begin{defi}
If $\zeta$ is a surjective $S$-truth preserving relation, we say that $\zeta$ {\em breaks} $r\not\in S$ if and only if there are:
\begin{enumerate}[label={(\emph{\roman*})}]
    \item $(a,a'),(b,b')\in\zeta_p$ such that $r(a,b)$ but $\neg r(a',b')$, if $r$ is point-point, or
    \item $(a,a')\in\zeta_p$ and $(I,I')\in\zeta_i$ such that $r(I,a)$ but $\neg r(I',a')$, if $r$ is interval-point, or
    \item $(I,I'),(J,J')\in\zeta_i$ such that $r(I,J)$ but $\neg r(I',J')$, if $r$ is interval-interval.
\end{enumerate}
\end{defi}

\noindent The following result is, as already mentioned, a straightforward generalization of the classical result on the preservation of truth under isomorphism between first-order structures, and it is proved by an easy induction on formulas, using clause (ii) of Definition \ref{Def:S:Truth:Pres:Rel} to establish the base case for atomic formulas and clause (iii) for the inductive step for the quantifiers.

\begin{thm}\label{theo:truth}
If $\zeta=\zeta_p\cup\zeta_i$ is a surjective $S$-truth preserving relation between $\mathcal{F} = \langle \mathbb D,\mathbb I(\mathbb D),S\rangle$ and $\mathcal{F}' = \langle \mathbb D',\mathbb I(\mathbb D'),S\rangle$, and $a_1,\ldots,a_k \in \mathbb D$, $a'_1,\ldots, a'_k \in \mathbb D$, $I_1,\ldots,I_l \in \mathbb I(\mathbb D)$, and $I'_1,\ldots,I'_l \in \mathbb I(\mathbb D')$ are such that $(a_j,a_j')\in\zeta_p$ for $1 \leq j \leq k$, and $(I_j,I'_j)\in\zeta_i$ for $1 \leq j \leq l$, then for every $FO+S$ formulas $\varphi(x_p^1,\ldots, x_p^k,y_i^1,\ldots, y_i^l)$ with free variables $x_p^1,\ldots x_p^k, y_i^1, \ldots y_i^l$, we have that
\[
\mathcal{F} \models \varphi(a_1, \ldots, a_k, I_1, \ldots, I_l) \text{ \ if and only if \ } \mathcal{F}' \models \varphi(a'_1, \ldots a'_k, I'_1, \ldots, I'_l).
\]
\end{thm}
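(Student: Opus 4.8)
The plan is to prove the statement by a routine structural induction on the $FO+S$ formula $\varphi$, following the classical argument for preservation of truth under isomorphism but replacing the single bijection by the relation $\zeta$ and compensating for its possible non-functionality by invoking totality and surjectivity separately. Since the language has no equality and no function symbols, the only terms are variables, each of point sort or interval sort, so an assignment amounts to a choice of $\zeta_p$-related pairs $(a_j,a'_j)$ and $\zeta_i$-related pairs $(I_j,I'_j)$; in particular there is no preliminary lemma on the interpretation of terms to dispose of first.

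For the base case I would treat an atomic formula $r(t,t')$, where $r\in S$ and $t,t'$ are variables whose sorts match the arity of $r$. Working without inverse relations, $r$ is point-point, interval-point, or interval-interval, and the assigned values accordingly form a pair of $\zeta_p$-related points, an interval together with a point related via $\zeta_i$ and $\zeta_p$, or a pair of $\zeta_i$-related intervals. In each case clause (ii) of Definition \ref{Def:S:Truth:Pres:Rel} asserts exactly that $r$ holds of the values on the left if and only if it holds of the corresponding values on the right, which settles the base case. I note that, since there is no logical equality, the relations $=_i$ and $=_p$---when they occur in $S$---are handled uniformly as ordinary relation symbols by this very clause, so no special atomic case is needed.

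The inductive step for $\neg,\wedge,\vee$ will be immediate, since truth values combine in the same way on both sides and the induction hypothesis applies to the immediate subformulas under the same assignment. The quantifier case is the only place where the structure of $\zeta$ beyond clause (ii) is used. For $\varphi=\exists x_p\,\psi$ with $x_p$ of point sort (the interval case being identical with $\zeta_i$ in place of $\zeta_p$), if $\mathcal{F}\models\varphi$ I would pick a witness $a\in\mathbb D$, use totality (clause (iii)(a)) to obtain $a'\in\mathbb D'$ with $(a,a')\in\zeta_p$, and apply the induction hypothesis to $\psi$ with this pair adjoined to the assignment, yielding $\mathcal{F}'\models\varphi$; conversely, from $\mathcal{F}'\models\varphi$ I would pick a witness $a'\in\mathbb D'$, use surjectivity (clause (iii)(b)) to obtain a preimage $a\in\mathbb D$, and apply the induction hypothesis in the other direction.

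The only genuine subtlety---and the reason the hypotheses are phrased as they are---is that $\zeta$ need not be functional, so the two directions of the existential quantifier really do demand two distinct conditions: totality supplies an image for a witness found on the left, while surjectivity supplies a preimage for a witness found on the right. This is also why clause (ii) must be a biconditional required of all $\zeta$-related pairs rather than of distinguished representatives: in the induction one has no control over which related element a witness turns out to be, so preservation of atomic truth must be insensitive to that choice. With these observations in hand the induction closes without further work, and this is the expected `main obstacle'---really a point of care rather than of difficulty.
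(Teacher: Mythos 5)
Your proposal is correct and follows exactly the argument the paper itself gives (the paper only sketches it in the sentence preceding the theorem): a structural induction on formulas, with clause (ii) of Definition \ref{Def:S:Truth:Pres:Rel} handling the atomic base case and clause (iii) supplying totality and surjectivity for the two directions of the quantifier step. Your additional remarks on the absence of equality and on why non-functionality of $\zeta$ forces both parts of clause (iii) are accurate elaborations, not deviations.
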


\noindent Thus, to show that $FO + S \not \rightarrow r$ for a given $r\in\allr$, it is sufficient to find two concrete point-interval structures $\mathcal{F}$ and $\mathcal{F}'$ and a surjective $S$-truth preserving relation $\zeta$ between  $\mathcal{F}$ and $\mathcal{F}'$ which breaks $r$. For the readers' convenience, let us refer to surjective $S$-truth preserving relations as simply $S$-{\em relations}.

Although there are other constructions that could be used to show that relations are not definable in $FO + S$, e.g. elementary embeddings or Ehrenfeucht-Fra\"{i}ss\'{e} games, we have found $S$-relations sufficient for our purposes in this paper.

\subsection{Strategy}

The main objective of this paper is to establish all expressively different subsets of $\allr$ (and, then, of $\alli,\allm$ or $\allp$) over the mentioned classes of linear orders. To this end, for each $r\in\allr$ we compute all expressively different minimally $r$-complete and all maximally $r$-incomplete subsets of $\allr$, from which we can easily deduce all expressively different minimally $r$-complete and maximally $r$-incomplete subsets of $\alli,\allm$ and $\allp$; minimally $\allr$- (resp., $\alli-,\allm-,\allp-$) complete and maximally incomplete subsets are, then, deduced as a consequence of the above results. The set $\allr$ contains, as we have mentioned, 26 different relations. This means that there are $2^{26}$ potentially different extensions of first-order logic to be studied. Clearly, unless we design a precise strategy that allows us to reduce the number of results to be proved, the task becomes cumbersome.

\medskip

As a first simplification principle observe that, since we are working within first-order logic, all inverses of relations are explicitly definable, and hence we only need to assume as primitive a set which contains all relation up to inverses, which implies that point-interval relations can be omitted if we consider all interval-point ones. Accordingly, let $\alli^{+}$  be the set of interval-interval relations given in Tab.~\ref{ii:relations} together with $=_i$, $\allm^{+}$ be the set of interval-point relations given in Tab.~\ref{ip:relations}, and let $\allp^{+}=\{<,=_p\}$. Lastly let $\allr^{+}=\alli^{+} \bigcup \allm^{+} \bigcup \allp^{+}$.

\medskip

In order to further reduce the number of results to be presented, consider what follows. The {\em order dual} of a structure $\mathcal{F} = \langle\mathbb D,\mathbb I(\mathbb D)\rangle$ is the structure $\mathcal{F}^{\partial} = \langle\mathbb D^\partial,\mathbb I(\mathbb D^{\partial})\rangle$  based on the order dual $\mathbb D^{\partial}$ (obtained by reversing the order) of the underlying linear order $\mathbb D$. All classes considered in this paper are closed under taking order duals.

\begin{defi}
The \emph{reversible relations} are exactly the members of the set $\{\ip 0,\ip1,$ $\ip3,\ip4,$ $\ii 14,\ii 03\}$. The relations belonging to the complement  $\allr^{+} \setminus \{\ip 0,\ip1,$ $\ip3,\ip4,\ii 14,\ii 03\}$ are called \emph{symmetric}; if, in addition, $r=\ip2$ or $r=\ii04$, then $r$ is said {\em self-symmetric}. If $r=\ip 0$ (resp., $r=\ip 1, r=\ii14$), its {\em reverse} is $r=\ip 4$ (resp., $r=\ip 3,r=\ii 03$), and the other way around. Finally, the {\em symmetric} $S'$ of a subset $S\subseteq\allr^{+}$ is obtained by replacing every reversible relation in $S$ with its reverse. We shall use the notation $S \sim S'$ to indicate that sets $S$ and $S'$ are symmetric.
\end{defi}

\noindent This definition is motivated by the following easily verifiable facts. Let $r \in \allr^+$, $\mathcal{F}$ be a structure, and $x$ and $y$ be elements of $\mathcal{F}$ of the appropriate sorts for $r$; then:
\begin{enumerate}[label={(\emph{\roman*})}]
\item if $r$ is a reversible relation, with reverse $r'$, then $\mathcal{F} \models r(x,y)$ if and only if $\mathcal{F}^\partial \models r'(x,y)$;
\item if $r$ is self-symmetric, then $\mathcal{F} \models r(x,y)$ if and only if $\mathcal F^{\partial} \models r(x,y)$;
\item if $r$ is a symmetric, but not self-symmetric, relation, then $\mathcal F \models r(x,y)$ if and only if $\mathcal F^{\partial} \models r(y,x)$.
\end{enumerate}

\noindent The following crucial lemma capitalizes on these facts.

\begin{lem}\label{lem:symm}
Let $S, S' \subset\allr^{+}$ be such that $S \sim S'$. If $r$ is a symmetric relation, then $FO+S\rightarrow r$ if and only if  $FO+S'\rightarrow r$. Moreover, if $r$ is a reversible relation with reverse $r'$, then  $FO+S\rightarrow r$ if and only if  $FO+S'\rightarrow r'$.
\end{lem}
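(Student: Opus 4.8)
The plan is to reduce the statement to a single syntactic translation between $FO+S$- and $FO+S'$-formulas, whose correctness is proved by induction on formulas in the same spirit as Theorem~\ref{theo:truth}, and then to combine it with facts (i)--(iii) preceding the lemma together with the fact that $\Lin$ (and every class considered here) is closed under order duals.

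First I would fix the natural sort-respecting bijection $\iota$ between a structure $\mathcal F=\langle\mathbb{D},\mathbb{I}(\mathbb{D})\rangle$ and its order dual $\mathcal F^\partial=\langle\mathbb{D}^\partial,\mathbb{I}(\mathbb{D}^\partial)\rangle$: it is the identity on points and the endpoint-swap $[a,b]\mapsto[b,a]$ on intervals (note that $a<b$ in $\mathbb{D}$ iff $b<^\partial a$ in $\mathbb{D}^\partial$, so $\iota$ is a well-defined bijection of each sort). This is precisely the identification under which facts (i)--(iii) are tacitly stated. Next I would define a translation $\varphi\mapsto\varphi^\partial$ sending each $FO+S$-formula to an $FO+S'$-formula: it commutes with Boolean connectives and quantifiers, and on an atomic formula $s(u,v)$ with $s\in S$ it returns $s'(u,v)$ when $s$ is reversible with reverse $s'$, returns $s(u,v)$ when $s$ is self-symmetric, and returns $s(v,u)$ when $s$ is symmetric but not self-symmetric. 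Since $S\sim S'$ means reversible symbols of $S$ are replaced by their reverses (which lie in $S'$) while symmetric symbols are common to $S$ and $S'$, the output is indeed an $FO+S'$-formula.

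The core step is the translation lemma: for every $FO+S$-formula $\varphi$, every structure $\mathcal F$, and every assignment of elements $\bar e$ of appropriate sorts, $\mathcal F\models\varphi(\bar e)$ iff $\mathcal F^\partial\models\varphi^\partial(\iota\bar e)$. I would prove this by induction on $\varphi$: the atomic base case is exactly facts (i)--(iii), the Boolean cases are immediate, and the quantifier case goes through because $\iota$ restricts to a bijection on each sort, so ranging over the domain of $\mathcal F$ corresponds bijectively to ranging over that of $\mathcal F^\partial$.

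Finally I would conclude as follows. Suppose $FO+S\rightarrow r$, witnessed by an $FO+S$-formula $\varphi(x,y)$ with $\varphi\leftrightarrow r$ valid over $\Lin$. Given any concrete structure $\mathcal G$ of the appropriate signature based on $\Lin$, its dual $\mathcal G^\partial$ is again based on $\Lin$, so the witness applies to $\mathcal G^\partial$. Applying the translation lemma with $\mathcal F=\mathcal G^\partial$ (so that $\iota$ is involutive and $\mathcal F^\partial=\mathcal G$) gives $\mathcal G\models\varphi^\partial(x,y)$ iff $\mathcal G^\partial\models\varphi(\iota x,\iota y)$; the witness gives $\mathcal G^\partial\models\varphi(\iota x,\iota y)$ iff $\mathcal G^\partial\models r(\iota x,\iota y)$; and facts (i)--(iii) applied to $r$ itself convert the latter back into a statement about $\mathcal G$. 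Reading off the result: if $r$ is self-symmetric then $\varphi^\partial(x,y)$ defines $r$; if $r$ is symmetric but not self-symmetric then $\varphi^\partial(y,x)$ defines $r$; and if $r$ is reversible with reverse $r'$ then $\varphi^\partial(x,y)$ defines $r'$. This yields $FO+S'\rightarrow r$ (resp.\ $FO+S'\rightarrow r'$), and the converse implications follow by replaying the argument with the roles of $S$ and $S'$ exchanged, since $\sim$ is symmetric. The only delicate point is the bookkeeping of the endpoint-swap identification $\iota$ and the attendant transposition of free variables in the symmetric-but-not-self-symmetric case; everything else is routine induction and the closure of the classes under order duals.
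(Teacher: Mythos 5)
Your proposal is correct and follows essentially the same route as the paper: define the syntactic translation $\varphi\mapsto\varphi'$ (replace reversible relations by their reverses, swap arguments of symmetric-but-not-self-symmetric ones), prove by induction on formulas---with facts (i)--(iii) as the atomic base case---that truth is preserved between $\mathcal F$ and $\mathcal F^\partial$, and conclude via closure of the class under order duals. Your only addition is making the sort-respecting bijection $\iota$ (identity on points, endpoint swap on intervals) explicit, which the paper leaves tacit.
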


\proof Let $S, S' \subset\allr^{+}$ such that $S \sim S'$. For any $FO+S$ formula $\varphi$ that defines a given relation (and, therefore, with exactly two free variables), let $\varphi'$ denote the formula obtained from $\varphi$ by replacing every occurrence of a reversible relation with its reverse, and by swapping the arguments of every symmetric, but not self-symmetric, relation (occurrences of every self-symmetric relation are left unchanged). Induction on formulas then shows that $\mathcal F \models \varphi(x,y)$ (after substituting $x,y$ with elements of the appropriate sorts) if and only if $\mathbb \mathcal F^\partial \models \varphi'(x,y)$, for any structure $\mathcal F$. The base case of the induction is taken care of by the three observations preceding this lemma. Now, suppose that a $FO+S$ formula $\varphi(x,y)$ defines a symmetric relation $r$. We claim that $\varphi'$ also defines $r$. Let $\mathcal F$ be an arbitrary structure of signature $S\cup\{r\}$. Then $\mathcal F^{\partial} \models \varphi(x,y) \leftrightarrow r(x,y)$, and hence $\mathcal F \models \varphi'(x,y) \leftrightarrow r(y,x)$ if $r$ is not self-symmetric, and $\mathcal F \models \varphi'(x,y) \leftrightarrow r(x,y)$ otherwise. Next, suppose that the $FO+S$ formula $\varphi(x,y)$ defines a reversible relation $r$. We claim that $\varphi'$ defines its reverse $r'$. Let $\mathcal F$ be an arbitrary structure of signature $S\cup\{r\}$. Then $\mathcal F^{\partial} \models \varphi(x,y) \leftrightarrow r(x,y)$, and, hence, $\mathcal F \models \varphi'(x,y) \leftrightarrow r'(x,y)$.
\qed

\begin{figure}[t]
\scriptsize
\centering
\begin{code}{60mm}
\FUNCTION{Undef}{r\in\allr^{+}}
\BEGIN
\FORALL \ S\subset\allr^{+}\\
\BEGIN
S=Closure(S);\\
\IF ((r\notin S) \AND (S \ is \ maximal)) \THEN
list \ S
\END\\
\END \\
\end{code}
\begin{code}{65mm}
\FUNCTION{Closure}{set\ S, rules\ def\_rules}
\BEGIN
\WHILE  (S \ changes)\\
\BEGIN
\FORALL 1\le i\le size(def\_rules)\\
\BEGIN
\IF (def\_rules[i] \ applies) \THEN S=Apply(S,def\_rule[i])\\
\END\\
\END\\
\RETURN\ S
\END \\
\end{code}
\caption{Pseudo-code to identify maximally $r$-incomplete sets.}\label{fig:undef}
\end{figure}

In conclusion, we can limit our attention to 14 out of 26 relations by disregarding the inverses of relations in $\allr^{+}$, and we do not need to explicitly
analyze complete and incomplete sets for $\ip 3$, $\ip 4$, and $\ii 03$ as those correspond exactly to symmetric of complete and incomplete sets for $\ip 0$, $\ip 1$, and $\ii 14$, respectively. This means that only 11 relations are to be analyzed (which we can refer to as {\em explicit} relations).

\medskip

Even under the mentioned simplifications, there is a huge number of results to be presented and displayed. Let $r$ be anyone of the explicit relations.
In order to correctly identifying all minimally $r$-complete sets ($\mathsf{mcs}(r)$), we need to know all maximally $r$-incomplete sets ($\mathsf{MIS}(r)$) over the same class, and the other way around. To this end, we proceed in the following way:

\begin{enumerate}
\item fixed a class of linearly ordered sets and an explicit relation $r$, we first guess the $r$-complete subsets of $\allr^{+}$, obtaining
a first approximation of the definability rules for $r$;
\item then, we apply the algorithm in Fig.~\ref{fig:undef}, which uses the set of $r$-complete subsets of $\allr^+$ (the parameter $def\_rules$) to obtain a first approximation of the maximally $r$-incomplete sets;
\item after that, we prove that every $R_1,R_2,\ldots,R_k$ listed as a maximally $r$-incomplete set is actually $r$-incomplete, and, if not, we repeat from step 1, using the acquired knowledge to update the set of $r$-complete subsets of $\allr^+$;
\item at this point, the sets $S_1,S_2,\ldots,S_{k'}$ listed at step 1 are, actually, all minimally $r$-complete: for each $i$, $S_i$ is $r$-complete by definition, and if there was a $r$-complete set $S\subset S_i$, then for some $R_j$ listed as maximally $r$-incomplete set we could not prove its $r$-incompleteness. Therefore, $S_1,S_2,\ldots,S_{k'}$ are all minimally $r$-complete, and, as a consequence, $R_1,R_2,\ldots,R_k$ are all maximally $r$-incomplete.
\end{enumerate}

\noindent Once the above process is completed for every relation, we can then easily deduce all minimally $\allr^{+}$-complete and all maximally
$\allr^{+}$-incomplete sets, to complete the picture. A similar procedure works for $\alli^{+}, \allm^{+}$, and $\allp^{+}$.

\begin{table}[t]
\begin{center}
\begin{tabular}{|p{0.22\textwidth}|p{0.60\textwidth}|}
\hline
                  $\mathbb D,\mathbb D',\ldots$            & (generic) linearly ordered sets     \\
                  $x_p,y_p,\ldots$                         & first-order variables for points \\
                  $x_i,y_i,\ldots$                         & first-order variables for intervals \\
                  $x,y,\ldots$                             & first-order variables of any sort \\
                  {\em before},\ldots                       & relations in text are {\em emphasized}\\
                  $\mathcal F,\mathcal F',\ldots$          & (generic) concrete (point-)interval structures     \\
                  $S,S',\ldots$                            & (generic) subsets of $\allr$-relations  \\
                  $\zeta~(\zeta_p,\zeta_i)$                & surjective relation (for points, for intervals) \\
                  $Id_p (Id_i)$                            & `identity' relation over points (intervals) \\
                  $\mathrm C,\mathrm C'$                   & (generic) class of linearly ordered sets\\
                  $FO+S\rightarrow_{\mathrm C} r$          & $S$ defines $r$ (w.r.t. the class $C$)\\
                  $S\sim S'$                               & $S$ and $S'$ are symmetric\\
                  $a\in\mathbb D$                          & $a$ is a point of $D$, where $\mathbb D=(D,<)$\\
                  $\underline S$                           & in the text, a new proof case is \underline{underlined}\\
                  $r$                                      & generic relation \\
                  $\mathsf{mcs}$ ($\mathsf{mcs}(r)$)                         & minimally complete set (minimally $r$-complete set)\\
                  $\mathsf{MIS}$ ($\mathsf{MIS}(r)$)                         & maximally incomplete set (maximally $r$-incomplete set)\\
\hline
\end{tabular}
\end{center}\caption{Notational conventions used in this paper.}\label{tab:conv}
\end{table}

\medskip

\medskip

The most common notational conventions used in the paper are listed in Tab.~\ref{tab:conv}.

\section{Completeness Results in The Class \texorpdfstring{$\Lin$}{Lin}}\label{sec:lin}

In this section, we start analyzing the inter-definability of relations in $\allr^{+}$, and. In particular, we consider the case in which
we do not assume any particular property of the underlying linear order. It is convenient to begin by focusing
our attention to the sub-languages induced by $\allm^+$ and $\alli^+$; notice, in this respect, that while the semantic counterpart of the
sub-language $FO+\alli^{+}$ is essentially single-sorted (it is interpreted on interval structures), in the case of $FO+\allm^{+}$ (interpreted
on point-interval ones) both sorts are necessary. The results for $\allm^+$ and $\alli^+$
can also be found in~\cite{time2012}.

\medskip

Throughout our analysis we shall make extensive use of the following schema for the definability part: for every definability equation $r(x,y)\leftrightarrow \varphi(x,y)$, we denote by $\varphi(x,y)$ the right-hand part of the definition, indicating that $x,y$ are the only free variables in it; we then take a generic point-interval structure $\mathcal F= \langle \mathbb D,\mathbb I(\mathbb D)\rangle$, and show that $\mathcal F\models\varphi(x,y)$ (where $x,y$ have been instantiated with suitable constants of the right type) if and only if  $r(x,y)$ (again, after the due instantiation). We shall therefore omit the specification of these symbols and their meaning, as it remains the same in every proof. In order to make the text more readable, we shall present the results for each relation $r$ by means a table with at most four columns under the following headings:

\begin{enumerate}
\item {\em Proved}, which contains those $r$-complete sets for which we give an explicit proof in the corresponding lemma;
\item {\em Symmetric}, which contains, for each $r$-complete set listed in the {\em Proved} column, its symmetric one (if $r$ is symmetric);
\item {\em Implied}, which contains all $r$-complete sets that can be deduced from those in the first two columns plus the definability results presented earlier in the paper;
\item {\em Deduction Chain}, which is not empty if {\em Implied} is not empty, and it makes the chain of deductions explicit.
\end{enumerate}

\noindent When all (explicit) relations have been treated, we shall present the result of applying the algorithm in Fig.~\ref{fig:undef}, and we shall prove the undefinability results, completing the process and consequently proving the minimality of the complete sets.

\subsection{Definability in  $\alli^+$ and in $\allm^+$.}\label{subsec:allimlin}

We now study the minimal definability of $\alli^+$ relations, first, and, then, of $\allm^+$ relations.

\begin{table}[t]
\small
\begin{center}
\begin{tabular}{|p{0.1\textwidth}|p{0.17\textwidth}|p{0.17\textwidth}|p{0.17\textwidth}|p{0.2\textwidth}|}
\hline
Relation & Proved           & Symmetric          & Implied          & Deduction Chain\\
\hline
$\ii 34$ & $\{\ii14,\ii24\}$&  $\{\ii03,\ii24\}$ &                  &\\
         & $\{\ii14,\ii03\}$&  -                 &                  &\\
         & $\{\ii14,\ii44\}$&  $\{\ii03,\ii44\}$ &                  &\\
         & $\{\ii14,\ii04\}$&  $\{\ii03,\ii04\}$ &                  &\\
\hline
$\ii 14$ & $\ii34$          &  -                 & $\{\ii03,\ii24\}$&$\ii 34$\\
         &                  &                    & $\{\ii03,\ii44\}$&$\ii 34$\\
         &                  &                    & $\{\ii03,\ii04\}$&$\ii 34$\\
\hline
$\ii 24$ & $\ii34$          &  -                 &$\{\ii14,\ii03\}$ &$\ii 34$\\
         &                  &                    &$\{\ii14,\ii44\}$ &$\ii 34$\\
         &                  &                    &$\{\ii14,\ii04\}$ &$\ii 34$\\
         &                  &                    &$\{\ii03,\ii44\}$ &$\ii 34$\\
         &                  &                    &$\{\ii03,\ii04\}$ &$\ii 34$\\
\hline
$\ii 04$ & $\ii34$          &  -                 &$\{\ii14,\ii24\}$ &$\ii 34$ \\
         &                  &                    &$\{\ii14,\ii03\}$ &$\ii 34$ \\
         &                  &                    &$\{\ii14,\ii44\}$ &$\ii 34$ \\
         &                  &                    &$\{\ii03,\ii24\}$ &$\ii 34$ \\
         &                  &                    &$\{\ii03,\ii44\}$ &$\ii 34$ \\
\hline
$\ii 44$ & $\ii34$          &  -                 &$\{\ii14,\ii24\}$&$\ii 34$\\
         &                  &                    &$\{\ii14,\ii03\}$&$\ii 34$\\
         &                  &                    &$\{\ii14,\ii04\}$&$\ii 34$\\
         &                  &                    &$\{\ii03,\ii24\}$&$\ii 34$\\
         &                  &                    &$\{\ii03,\ii04\}$&$\ii 34$\\
\hline
$=_i$    & $\ii34$          &  -                 &   &  \\
         & $\ii 14$         & $\ii 03$           &   & \\
\hline
\end{tabular}
\caption{The spectrum of the $\mathsf{mcs}(r)$, for each $r\in\alli^+$. - Class: $\Lin$.}\label{tab:allilin} (Lemma~\ref{lem:tab:allilin}) 
\end{center}
\end{table}



\begin{lem}\label{lem:tab:allilin}
Tab.~\ref{tab:allilin} is correct.
\end{lem}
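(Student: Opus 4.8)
The plan is to verify Tab.~\ref{tab:allilin} by establishing, for each explicit relation $r \in \alli^+$, that the sets listed in the \emph{Proved} column are indeed $r$-complete, and then to obtain the remaining entries essentially for free via Lemma~\ref{lem:symm} (the \emph{Symmetric} column) and by transitivity of definability (the \emph{Implied} column, which is always justified by the \emph{Deduction Chain} $\ii34$). The bulk of the genuine work is concentrated in the first row: I must produce explicit $FO$-definitions showing that each of $\{\ii14,\ii24\}$, $\{\ii14,\ii03\}$, $\{\ii14,\ii44\}$, and $\{\ii14,\ii04\}$ defines Allen's relation $\ii34$ (\emph{meets}). Once $\ii34$ is in hand from these, everything else cascades.

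First I would handle $\ii34$. Recalling the region semantics, $I\ii34 J$ means the left endpoint of $J$ falls in region $3$ of $I$ (i.e.\ equals the right endpoint of $I$) and the right endpoint of $J$ falls in region $4$. The key idea is that \emph{meets} is the unique relation expressing that $J$ begins exactly where $I$ ends, and each of the four listed pairs lets us pin down this coincidence. For instance, using $\ii14$ (\emph{finishes}$^{-1}$, i.e.\ $I\ii14 J \Leftrightarrow J$ starts where $I$ starts and ends strictly before $I$) together with one of the other relations, I would write $\ii34(x,y)$ as asserting the existence of an auxiliary interval $z$ that shares the appropriate endpoint with $x$ and bears the listed relation to $y$, thereby forcing the endpoint of $y$ to coincide with the right endpoint of $x$. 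Concretely, each proof takes the schema announced in the text: fix a generic $\mathcal F = \langle \mathbb D, \mathbb I(\mathbb D)\rangle$, write the candidate formula $\varphi(x,y)$ quantifying over one or two auxiliary intervals, and check both implications of $\varphi(x,y) \leftrightarrow \ii34(x,y)$ by unwinding the endpoint conditions. The main obstacle is ensuring each auxiliary interval actually \emph{exists} in $\mathbb I(\mathbb D)$ over an arbitrary linear order; since an interval $[a,b]$ requires $a<b$, I must make sure the endpoints I am constraining are genuinely distinct, which is where the combination of the two relations in each pair does its work.

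Having established the four \emph{Proved} rows for $\ii34$, the \emph{Symmetric} column follows immediately from Lemma~\ref{lem:symm}: since $\ii14$ is reversible with reverse $\ii03$, and the sets $\{\ii14,\ii24\} \sim \{\ii03,\ii24\}$ etc., $\ii34$-completeness of the former transfers to the latter (note $\ii34$ is symmetric, $\ii24$ and $\ii44$ and $\ii04$ being symmetric as well, so only $\ii14$ flips). For the remaining relations $\ii14,\ii24,\ii04,\ii44$, the \emph{Proved} column lists only $\ii34$ itself: I must exhibit a single definition of each from \emph{meets}. This is the classical observation that all of Allen's relations are interdefinable from \emph{meets} alone via endpoint-coincidence formulas (e.g.\ $I$ \emph{before} $J$ iff there is a $z$ with $I$ \emph{meets} $z$ \emph{meets} $J$, and analogously for the others), so I would supply the short $FO+\{\ii34\}$-formula for each and verify it under the region semantics. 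The \emph{Implied} entries then follow by composing these definitions with the four $\ii34$-definitions already proved, exactly as recorded in the \emph{Deduction Chain}: any set that defines $\ii34$ defines all of $\ii14,\ii24,\ii04,\ii44$.

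Finally, for $=_i$ the \emph{Proved} column lists $\ii34$ and $\ii14$, with $\ii03$ obtained as the symmetric of $\ii14$. Here I would show $=_i(x,y)$ is definable from \emph{meets} (two intervals are equal iff they meet exactly the same intervals, or more directly iff they agree on both endpoints, which \emph{meets} can express) and, separately and more delicately, from $\ii14$ alone. The $\ii14$-definition of $=_i$ is worth isolating because $\ii14$ (\emph{finishes}$^{-1}$) is a strict relation: equality must be captured as the failure of the strict part, e.g.\ $x =_i y \leftrightarrow \neg\,\ii14(x,y) \wedge \neg\,\ii14(y,x) \wedge (\text{shared-start condition})$, and verifying this requires care over arbitrary linear orders where no density or discreteness is available. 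I expect this $\ii14$-to-$=_i$ step, together with guaranteeing well-definedness of all auxiliary intervals over an unconstrained $\mathbb D$, to be the subtlest part of the whole lemma; the rest is a systematic but routine application of the definability schema and Lemma~\ref{lem:symm}.
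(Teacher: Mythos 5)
Your overall architecture matches the paper's: prove the \emph{Proved} column by explicit definitions, get the \emph{Symmetric} column from Lemma~\ref{lem:symm}, get the \emph{Implied} column by chaining through $\ii34$, and use the classical interdefinability of Allen's relations from \emph{meets} for the single-relation rows. However, there is a genuine gap at the heart of the lemma: the actual definitions of $\ii 34$ from the pairs $\{\ii14,\ii24\}$, $\{\ii14,\ii44\}$, $\{\ii14,\ii04\}$ (and, to a lesser extent, $\{\ii14,\ii03\}$) are never produced, and the one concrete recipe you give --- ``assert the existence of an auxiliary interval $z$ that shares the appropriate endpoint with $x$ and bears the listed relation to $y$, thereby forcing the endpoint of $y$ to coincide with the right endpoint of $x$'' --- does not work for these pairs. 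The relation $\ii34$ requires the \emph{exact} coincidence $b=c$ for $x=[a,b]$, $y=[c,d]$, but neither \emph{starts} ($\ii14$) nor \emph{overlaps} ($\ii24$) nor \emph{before} ($\ii44$) ever relates a right endpoint of one interval to a left endpoint of another by equality; a positive existential over auxiliary intervals can only bound $c$ strictly from one side. The paper's missing key idea is a two-step construction: first define the coarser relation $\ii 34\cup\ii 44$ (i.e.\ $b\le c$), which for $\{\ii14,\ii24\}$ and $\{\ii14,\ii44\}$ is done essentially by \emph{exclusion} of all other relative positions (with auxiliary formulas pinning down ``same left endpoint'' and ``same right endpoint''), and then sharpen it to $\ii34$ via $x_i\ii34 y_i\leftrightarrow x_i(\ii34\cup\ii44)y_i\wedge\neg\exists z_i(x_i(\ii34\cup\ii44)z_i\wedge z_i(\ii34\cup\ii44)y_i)$, which is valid over arbitrary linear orders because $b<c$ would make $[b,c]$ itself a witness. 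The case $\{\ii14,\ii04\}$ is handled differently again, by first defining $\ii24$ and reducing to $\{\ii14,\ii24\}$. Without these constructions the lemma is not proved, since they are its entire technical content.

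Two smaller points. First, you misread the semantics of $\ii14$: $x_i\ii14 y_i$ is Allen's \emph{starts} ($x_i$ and $y_i$ share their left endpoint and $x_i$ ends strictly before $y_i$), not \emph{finishes} inverted, and your endpoint description is reversed; this matters because your intuitions about which coincidences each pair can express rest on it. Second, your sketched definition of $=_i$ from $\ii14$ alone, namely $\neg\,\ii14(x_i,y_i)\wedge\neg\,\ii14(y_i,x_i)\wedge(\text{shared-start})$, is insufficient as stated: two disjoint intervals already satisfy the two negative conjuncts, so all the work falls on the unspecified third conjunct, which over an arbitrary (possibly bounded, non-dense) order cannot be discharged by a bare existential. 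The paper instead uses the universally quantified characterization $x_i=_iy_i\leftrightarrow\forall z_i((x_i\ii14 z_i\leftrightarrow y_i\ii14 z_i)\wedge(z_i\ii14 x_i\leftrightarrow z_i\ii14 y_i))$, i.e.\ $x_i$ and $y_i$ start, and are started by, exactly the same intervals.
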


\begin{proof}
First, we prove the $\alli^+$-completeness of $\{\ii 34\}$, as well as the fact that every relation in $\alli^+$  is $=_i$-complete, and, then, we prove that every other subset is $\ii 34$-complete; completeness for the remaining relations is a mere consequence, as it can be seen in the table. As for the first step, we simply express every other interval-interval relation, as follows:

\medskip

\begin{small}
\[
\begin{array}{lll}
x_i\ii{4}{4}y_i & \leftrightarrow & \exists z_i(x_i\ii{3}{4} z_i\wedge z_i\ii{3}{4}y_i) \nonumber\\
x_i\ii{1}{4}y_i & \leftrightarrow & \exists z_i(x_i\ii{3}{4} z_i\wedge \forall k_i((z_i\ii{3}{4}k_i\leftrightarrow y_i\ii{3}{4}k_i)\wedge (k_i\ii{3}{4}x_i\leftrightarrow k_i\ii{3}{4}y_i)) \\
x_i\ii{0}{3}y_i & \leftrightarrow & \exists z_i(z_i\ii{3}{4} x_i\wedge \forall k_i((k_i\ii{3}{4}z_i\leftrightarrow k_i\ii{3}{4}y_i)\wedge
(x_i\ii{3}{4}k_i\leftrightarrow y_i\ii{3}{4}k_i))\\
x_i\ii{0}{4}y_i & \leftrightarrow & \exists z_i(x_i\ii{1}{4} z_i\wedge z_i\ii{0}{3}y_i) \\
x_i\ii{2}{4}y_i & \leftrightarrow & \exists z_i(z_i\ii{0}{3} x_i\wedge z_i\ii{1}{4}y_i) \\
x_i=_iy_i & \leftrightarrow & \forall z_i((x_i\ii 34 z_i\leftrightarrow y_i\ii34 z_i)\wedge(z_i\ii34 x_i\leftrightarrow z_i\ii34 y_i))\\
x_i=_iy_i & \leftrightarrow & \forall z_i((x_i\ii 14 z_i\leftrightarrow y_i\ii14 z_i)\wedge(z_i\ii14 x_i\leftrightarrow z_i\ii14 y_i))\\
\end{array}
\]
\end{small}

\medskip

\noindent All above equations but the last two are very simple, and do not require further explanation. Moreover, the $\alli^{+}$-completeness of $\{\ii{3}{4}\}$ is a known result (except for equality between intervals): it has been formally proved in~\cite{Allen85} assuming density and unboundedness of the structure, but a closer look shows that those additional hypothesis were not needed. As for the fact that $=_i$ can be expressed with $\{\ii 34\}$,
assume that $\mathcal{F} \models \varphi([a,b],[c,d])$. We wish to show that $a=c$ and $b=d$. Suppose, by way of contradiction, that $a\neq c$. If $a<c$, then the interval $[a,c]$ {\em meets} the interval $[c,d]$, but it does not {\em meet} $[a,b]$, contradicting the second conjunct of $\varphi$; if $c<a$, a similar argument shows that the second conjunct fails. Symmetrically, we can prove that $b\neq d$ leads to a contradiction with the first conjunct of $\varphi$. Also, it is obvious that if $[a,b]=[c,d]$, they {\em meet} and are {\em met by} exactly the same intervals. The case of $\{\ii 14\}$ is very similar. The remaining cases are more difficult, and require some non-trivial definitions. First, let us observe that having the weaker relation:

$$\ii 34\cup\ii 44$$

\noindent is enough to get $\ii 34$, and viceversa: in fact, $[a,b]$ {\em meets} $[b,c]$ if and only if $[a,b]\ii 34\cup\ii 44[b,c]$ and no other interval in between them has the same property. This is explicitly expressed by the formula $x_i34y_i \leftrightarrow x_i (\ii 34\cup\ii 44) y_i \wedge \neg \exists z_i (x_i (\ii 34\cup\ii 44) z_i \wedge z_i (\ii 34\cup\ii 44) y_i)$.  We use this observation in the rest of this proof, as in the remaining cases we are able to define precisely the relation $\ii 34\cup\ii 44$:

\medskip

\begin{small}
\[ x_i\ii 34\cup\ii 44 y_i \leftrightarrow \left\{ \begin{array}{ll}
                                  \neg(x_i\ii 24 y_i\vee y_i\ii 24 x_i\vee x_i\ii 14 y_i\vee y_i\ii 14 x_i)\wedge & \{\ii 14,\ii 24\}\\
                                  \exists z_i(x_i\ii 14 z_i\wedge\neg(y_i\ii 14 z_i))\wedge & \\
                                  \forall z_i(y_i\ii 14 z_i\rightarrow\neg (x_i\ii 24 z_i))\wedge & \\
                                  \forall z_it_i((z_i\ii 14 y_i\wedge x_i\ii 14 t_i)\rightarrow \neg(z_i\ii 24 t_i))\wedge & \\
                                  \forall z_i(x_i\ii 14 z_i\rightarrow \neg(y_i\ii 24 z_i))\wedge & \\
                                  \forall z_it_i((y_i\ii 14 z_i\wedge x_i\ii 14 t_i)\rightarrow \neg z_i\ii 24 t_i) & \\
                                  & \\
                                  \exists z_i(x_i\ii{1}{4}z_i\wedge y_i\ii{0}{3}z_i)\wedge & \{\ii 14,\ii 03\} \\
                                  \neg\exists z_i(z_i\ii{0}{3}x_i\wedge z_i\ii{1}{4}y_i) &  \\
                                  &\\
                                  \neg \exists z_i (\varphi_1(z_i,y_i) \wedge \varphi_2(z_i,x_i)) & \{\ii 14,\ii 44\} \\
   \end{array} \right. \]
\end{small}

\medskip

\noindent where:

\medskip

\begin{small}
\[\varphi_1(x_i,y_i)\leftrightarrow x_i\ii{1}{4} y_i \vee y_i \ii{1}{4}x_i \vee \forall z_i((z_i\ii14 x_i\leftrightarrow z_i\ii 14 y_i)\wedge(x_i\ii14 z_i\leftrightarrow y_i\ii 14 z_i))\]
\[\varphi_2(x_i,y_i) \leftrightarrow \forall z_i (x_i \ii{4}{4} z_i \leftrightarrow y_i \ii{4}{4} z_i)\wedge(\neg \exists z_i(x_i\ii{1}{4}z_i) \leftrightarrow \neg \exists z_i(y_i\ii{1}{4}z_i))\]
\end{small}

\medskip

\noindent As for the case of \underline{$\{\ii 14,\ii 24\}$}, assume $\mathcal{F} \models \varphi([a,b],[c,d])$; we want to prove that $b\le c$. It is
easy to see that the requirement excludes every other possibility. First, observe that $[a,b]$ and $[c,d]$ cannot {\em overlap} each other, nor can {\em
start} each other, thanks to the first line. The point $b$ cannot be the last one of the model as there must be an interval {\em started by} $x_i$ (second
line), and since such an interval cannot be {\em started by} $y_i$ (second line), $x_i$ and $y_i$ cannot be equal. If $y_i$ was {\em during} $x_i$, there would be an interval that is {\em started by} $y_i$ and {\em overlapped by} $x_i$, which is a contradiction (third line). If $x_i$ was {\em during} $y_i$, there would be an interval $z_i$ that {\em starts} $y_i$ and an interval $t_i$ {\em started by} $x_i$, and $z_i$ would {\em overlap} $t_i$, which is, again, a contradiction (fourth line). It is then easy to see that $x_i$ and $y_i$ cannot {\em finish} each other (third and fifth line); finally we would have a contradiction if $c\le a$ (fifth line). Thus, the only remaining possibility is the correct one. Conversely, if we assume that $b\le c$, it is straightforward to see that all requirements are respected. Let us now consider the case \underline{$\{\ii 14,\ii 44\}$}, which is slightly harder. Consider, first, the definition of $\varphi_1$: it is easy to see that if $\mathcal{F}$ is  concrete interval structure, then $\mathcal{F} \models \varphi_1([a,b],[c,d])$ if and only if $a=c$. Let us now analyze the definition of $\varphi_2$. Consider $[a,b],[c,d] \in \mathcal{F}$. If $b=d$ then it is clear that $\mathcal{F} \models \varphi_2([a,b],[c,d])$. Suppose that $\mathcal{F} \models \varphi_2([a,b],[c,d])$. We claim that $b=d$. Suppose, by way of contradiction that $ b \neq d$. As $\varphi_2 (x,y) \leftrightarrow \varphi_2(y,x)$ we may assume that $b<d$. If $d$ is the greatest point of the linear order $\mathbb D$ then the last conjunct of $\varphi_2$ does not hold. If there is a point $e$ which is greater than $d$, then $[a,b] \ii{4}{4} [d,e]$ and $\neg [c,d] \ii{4}{4} [d,e]$, which means that the first conjunct of $\varphi_2$ does not hold. So we have that  $\mathcal{F} \models \varphi_2([a,b],[c,d])$ if and only if $b=d$. Finally, we want to show that $\mathcal{F} \models [a,b]\ii34\cup\ii44[c,d]$ if and only if $\mathcal{F} \models \varphi([a,b],[c,d])$ where $\varphi$ denotes the right-hand part of the last equivalence considered for this set. Assume that $\mathcal{F} \models \varphi([a,b],[c,d])$. If $c<b$ then $z=[c,b]$ witnesses the failure of $\varphi([a,b],[c,d])$. So $b \leq c$ and hence $\mathcal{F} \models [a,b]\ii{3}{4}\vee\ii{4}{4}[c,d]$. Now assume that $\mathcal{F} \models [a,b]\ii{3}{4}\cup\ii{4}{4}[c,d]$, i.e., $b \leq c$. If $\mathcal{F} \models \varphi_1(z_i,[c,d])$, then $z_i=[c,e]$ with $c<e$ which implies $b<e$. So $\mathcal{F} \models \neg \varphi_2(z_i,[a,b])$. Therefore we have $\mathcal{F} \models \varphi([a,b],[c,d])$. For the case \underline{$\{\ii 14,\ii 03\}$}, assume that $\mathcal{F} \models \varphi([a,b],[c,d])$. We wish to show that $\mathcal{F} \models [a,b]\ii 34\cup\ii 44[c,d]$, i.e., that $b \leq c$. Suppose, by way of contradiction, that $c < b$. By assumption, there exists an interval $z_i = [e,f]$ such that $a = e < b < f$ and $e < c < d = f$. Then $a < c < b < d$, hence $[c,b] \ii{0}{3} [a,b]$ and $[c,b] \ii{1}{4} [c,d]$, contradicting $\mathcal{F} \models \neg\exists z_i((z_i\ii{0}{3}[a,b])\wedge (z_i\ii{1}{4}[c,d]))$. Conversely, suppose that $\mathcal{F} \models [a,b]\ii 34\cup\ii 44[c,d]$, i.e., $a < b \leq c < d$. Then the interval $z_i = [a,d]$ witnesses the first conjunct of the definition. Moreover, for any $z_i$, if $z_i\ii{0}{3} [a,b]$ then $z_i=[a',b]$. Then $\neg (z_i \ii{1}{4} [c,d])$, as $b \leq c$, proving that the second conjunct also holds.
Finally, as for the set \underline{$\{\ii 14,\ii 04\}$}, we can easily see that it is $\ii 34$-complete by means of an indirect definition, that is, by defining $\ii 24$:

\medskip

\begin{small}
\[
\begin{array}{lll}
x_i\ii 24 y_i &\leftrightarrow & \exists z_i(x_i\ii 14 z_i\wedge\neg(y_i\ii 04 z_i)\wedge\exists t_i(t_i\ii 04 z_i\wedge t_i\ii 14 y_i))\wedge\\
              &                & \exists t_i(t_i\ii 14 y_i\wedge\forall w_i(x_i\ii 04 w_i\rightarrow t_i\ii 04 w_i))
\end{array}
\]
\end{small}
\noindent whose correctness is immediate.\qedhere

\end{proof}

\medskip

We now focus our attention to $\allm^+$. Recall that models here are based on point-interval structures; we are therefore allowed to define
interval-interval relations whenever we need them.

\begin{table}[t]
\small
\begin{center}
\begin{tabular}{|p{0.1\textwidth}|p{0.20\textwidth}|p{0.20\textwidth}|p{0.20\textwidth}|}
\hline
Relation & Proved           & Implied          & Deduction Chain\\
\hline
$\ip0$   & $\{\ip 1,\ip 2\}$&$\{\ip 1,\ip 4\}$ & $\ip 0\cup\ip 2,\ip 3$\\
         & $\{\ip 1,\ip 3\}$&$\{\ip 2,\ip 3\}$ & $\ip 4$\\
         &                  &$\{\ip 2,\ip 4\}$ & $\ip 0\cup\ip 3,\ip 1$\\
\hline
$\ip 1$  & $\{\ip 0,\ip 3\}$&$\{\ip 2,\ip 3\}$ &$\ip 0,\ip 4$\\
\hline
$\ip 2$  & $\{\ip 0,\ip 4\}$&$\{\ip 0,\ip 3\}$ &$\ip 1$\\
         & $\{\ip 1,\ip 3\}$&$\{\ip 1,\ip 4\}$ &$\ip 0$\\
\hline
\end{tabular}
\caption{The spectrum of the $\mathsf{mcs}(r)$, for each $r\in\allm^+$. - Class: \texorpdfstring{$\Lin$}{Lin}. }\label{tab:allmlin} (Lemma~\ref{lem:tab:allmlin})
\end{center}
\end{table}

\begin{lem}\label{lem:tab:allmlin}
Tab.~\ref{tab:allmlin} is correct.
\end{lem}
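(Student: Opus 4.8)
The goal is to prove Lemma~\ref{lem:tab:allmlin}, which asserts that Tab.~\ref{tab:allmlin} correctly lists the minimally complete sets for each interval-point relation in $\allm^+ = \{\ip 0, \ip 1, \ip 2, \ip 3, \ip 4\}$ over $\Lin$, keeping in mind that by the simplification discussion only the explicit relations $\ip 0, \ip 1, \ip 2$ need direct treatment (since $\ip 3, \ip 4$ are reverses of $\ip 0, \ip 1$ and handled by Lemma~\ref{lem:symm}). The table has three columns — \emph{Proved}, \emph{Implied}, \emph{Deduction Chain} — so the work splits naturally: for each relation, exhibit explicit $FO+S$ definitions for the sets in the \emph{Proved} column, then derive the \emph{Implied} sets by chaining through already-established definability results as indicated in the \emph{Deduction Chain} column. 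Following the schema stated just before Section~\ref{subsec:allimlin}, for each definability claim $r(x,y) \leftrightarrow \varphi(x,y)$ I take a generic point-interval structure $\mathcal{F} = \langle \mathbb D, \mathbb I(\mathbb D)\rangle$ and verify both directions of the biconditional after instantiating the free variables with objects of the appropriate sort.

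\textbf{The \emph{Proved} column.} I would write down the core definitions directly. For instance, $\{\ip 1, \ip 2\}$ should define $\ip 0$: since $[a,b] \ip 1 c \Leftrightarrow a=c$ and $[a,b] \ip 2 c \Leftrightarrow a<c<b$ pin down the left endpoint and the interior, one expresses $c<a$ (the meaning of $\ip 0$) by quantifying over an auxiliary interval or point witnessing that $c$ lies strictly left of the left endpoint of $[a,b]$. A clean route is $x_i \ip 0 y_p \leftrightarrow \exists z_i\, (x_i \ip 1 u \wedge z_i \ip 2 y_p \wedge \cdots)$, using the interior/endpoint relations to force $y_p$ to the left of $x_i$'s left endpoint. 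Similarly $\{\ip 1, \ip 3\}$ defines $\ip 0$ by using the two endpoints to locate $c$ relative to the interval, and the remaining \emph{Proved} entries for $\ip 1$ (namely $\{\ip 0, \ip 3\}$) and for $\ip 2$ (namely $\{\ip 0, \ip 4\}$ and $\{\ip 1, \ip 3\}$) follow the same pattern: combine two endpoint-or-side relations to carve out the target region of the five-region partition induced by an interval. Each such verification is a short case analysis over the linear position of $c$, exactly as in the $\alli^+$ proof, and I expect them to be routine once the right witnessing quantifier structure is chosen.

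\textbf{The \emph{Implied} column and minimality.} For the implied sets I follow the deduction chains verbatim. For example, the entry showing $\{\ip 1, \ip 4\}$ defines $\ip 0$ records the chain $\ip 0 \cup \ip 2, \ip 3$: one first defines the weakened disjunctive relation $\ip 0 \cup \ip 2$ (everything left of the right endpoint except the right endpoint itself, i.e.\ $c<b$), then recovers $\ip 3$, and finally isolates $\ip 0$. This mirrors the $\ii 34 \cup \ii 44$ trick used in Lemma~\ref{lem:tab:allilin}, where a coarser relation plus a minimality/separation clause pins down the finer one. Once all three explicit relations are handled, minimality of each listed set is \emph{not} argued here directly; as the Strategy subsection explains, minimality of the complete sets is established only after the undefinability (maximally-incomplete) results of Section~\ref{sec:lin-incomp} confirm that no proper subset is complete. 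Accordingly, the present lemma's obligation is correctness of the \emph{definitions} recorded in the table, and I would state this explicitly at the end.

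\textbf{The main obstacle.} The genuinely delicate point is defining $\ip 0$ (strictly to the left) from side-relations that only mention the right endpoint, as in the implied $\{\ip 1, \ip 4\}$ and $\{\ip 2, \ip 4\}$ cases: without direct access to the left endpoint's comparison, one must build the coarse relation $\ip 0 \cup \ip 2$ first and then separate off the interior using a second endpoint-relation, which requires care to ensure the definition does not accidentally capture boundary points or fail at extremal points of the order (a least or greatest point, where witnessing intervals may not exist). Verifying these edge cases over \emph{arbitrary} linear orders in $\Lin$ — with no density or unboundedness to supply witnesses — is where the argument must be most careful, exactly as the $\{\ii 14, \ii 44\}$ case in the previous lemma had to handle the possibility of a greatest point explicitly.
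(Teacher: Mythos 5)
Your plan has the right skeleton and it is the same one the paper follows: give explicit $FO+S$ definitions for the \emph{Proved} entries, handle $\{\ip1,\ip4\}$ and $\{\ip2,\ip4\}$ by first defining the coarse relations $\ip0\cup\ip2$ and $\ip0\cup\ip3$ and then separating by difference, route $\{\ip2,\ip3\}$ through $\ip4$ via Lemma~\ref{lem:symm}, and defer minimality to the incompleteness results of Section~\ref{sec:lin-incomp}. The problem is that the plan is all you give: the entire mathematical content of this lemma is the eight or so concrete defining formulas and their verification over arbitrary linear orders, and you exhibit none of them. Your one candidate formula, $x_i\ip0 y_p\leftrightarrow\exists z_i(x_i\ip1 u\wedge z_i\ip2 y_p\wedge\cdots)$, trails off in an ellipsis and contains an unbound symbol $u$; everything else is declared ``routine once the right witnessing quantifier structure is chosen.'' That quantifier structure is precisely what has to be produced and checked. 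For instance, the paper's definition of $\ip0\cup\ip2$ from $\{\ip1,\ip4\}$ identifies the right endpoint of $x_i$ indirectly by asserting the existence of an interval with exactly the same $\ip4$-successors as $x_i$ that \emph{starts} at $y_p$ --- one must verify that ``same set of $\ip4$-points'' pins down the right endpoint in every linear order, including at a greatest element. You correctly flag that extremal points are the danger, but flagging the danger is not the same as exhibiting a formula that survives it.

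Two smaller points. First, your characterization of $\ip0\cup\ip2$ as ``everything left of the right endpoint, i.e.\ $c<b$'' is wrong: that relation is $c<b$ \emph{and} $c\neq a$ (region $1$ is excluded). This happens to be harmless for the subsequent definition of $\ip3$ by difference, since $\ip1$ is negated there anyway, but it suggests the case analysis has not actually been carried out. Second, ``follow the deduction chains verbatim'' does not quite cover the $\{\ip2,\ip3\}$ entry for $\ip0$: after obtaining $\ip4$ by symmetry one still needs a further argument (the paper writes a fresh direct formula for $\ip0$ at that point; alternatively one could chain through $\{\ip2,\ip4\}$, but some explicit definition must appear somewhere, and your proposal supplies it nowhere).
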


\begin{proof}
Let us focus, first, on $\ip 0$, and consider the following definitions:

\medskip

\begin{small}
\[ x_i\ip 0 y_p \leftrightarrow \left\{ \begin{array}{ll}
         \exists z_iw_p(z_i\ip 1 y_p\wedge x_i\ip 1 w_p\wedge z_i\ip2 w_p)  &  \{\ip 1,\ip 2\}\\
         & \\
         \exists z_iw_p(x_i\ip 1 w_p\wedge z_i\ip 3 w_p\wedge z_i\ip1 y_p) &  \{\ip 1,\ip 3\}\\
         \end{array} \right. \]
\end{small}

\medskip

\noindent The two cases above, namely \underline{$\{\ip 1,\ip 2\}$} and \underline{$\{\ip 1,\ip 3\}$} are almost immediate to see. Now,
consider the case \underline{$\{\ip 1,\ip 4\}$}. We can prove that it defines $\ip 0$ by exclusion. As a matter of fact, we can see that:

\medskip

\begin{small}
\[
\begin{array}{lll}
x_i\ip 0\cup \ip{2}y_p & \leftrightarrow & \exists z_i(\forall w_p(z_i\ip 4 w_p\leftrightarrow x_i\ip 4 w_p)\wedge z_i\ip 1 z_p)\wedge \neg(x_i\ip 1 y_p)
\end{array}
\]
\end{small}

\medskip

\noindent In this way, we have that $\ip 3$ is definable by difference:

\begin{small}
\[
\begin{array}{lll}
x_i\ip 3 y_p & \leftrightarrow & \neg(x_i\ip 1 y_p\vee x_i\ip 4 y_p \vee x_i \ip 0\cup \ip 2 y_p)
\end{array}
\]
\end{small}

\noindent and, therefore, the set is $\ip 0$-complete by using $\{\ip 1,\ip 3\}$. Let us now consider the case of \underline{$\{\ip 2,\ip 3\}$}. To deal with it, we first observe that this set is $\ip 4$-complete, because $\{\ip 1,\ip 2\}$ defines $\ip 0$  and we can then use Lemma~\ref{lem:symm}. Now, we can directly define $\ip 0$:

\medskip

\begin{small}
\[
\begin{array}{llll}
x_i\ip 0 y_p & \leftrightarrow & \neg(x_i\ip 2 y_p)\wedge\neg(x_i\ip 3 y_p)\wedge\neg(x_i\ip 4 y_p)\wedge & \{\ip 2,\ip 3\}\\
             &                 & \exists z_i(\forall w_p(z_i\ip 4 w_p\leftrightarrow x_i\ip 4 w_p)\wedge \forall w_p(x_i\ip 2 w_p\rightarrow z_i\ip 2 w_p)\wedge\\
             &                 & \hspace{0.6cm} \exists w_p(z_i\ip 2 w_p\wedge\neg(x_i\ip 2 w_p))\wedge\neg(z_i\ip 2 y_p))\\
\end{array}
\]
\end{small}

\noindent As for the case \underline{$\{\ip 2,\ip 4\}$}, we reason in a similar way. By slightly modifying the above definition, we obtain a weaker relation:

\medskip

\begin{small}
\[
\begin{array}{llll}
x_i\ip 0\cup\ip 3 y_p & \leftrightarrow & \neg(x_i\ip 2 y_p)\wedge\neg(x_i\ip 4 y_p)\wedge & \{\ip 2,\ip 4\}\\
             &                 & \forall z_i((\forall w_p(z_i\ip 4 w_p\leftrightarrow x_i\ip 4 w_p)\wedge \forall w_p(x_i\ip 2 w_p\rightarrow z_i\ip 2 w_p)\wedge\\
             &                 & \hspace{0.8cm} \exists w_p(z_i\ip 2 w_p\wedge\neg(x_i\ip 2 w_p)))\rightarrow \neg(z_i\ip 2 y_p))\\
\end{array}
\]
\end{small}

\noindent Then, $\ip 1$ is defined by difference, and $\ip 0$-completeness becomes a consequence of the $\ip 0$-completeness of $\{\ip 1,\ip 2\}$, seen above.
Let us consider the $\ip 1$-completeness. First, as for \underline{$\{\ip 0,\ip 3\}$} we have that:

\medskip

\begin{small}
\[
\begin{array}{llll}
x_i\ip 1y_p & \leftrightarrow & \neg(x_i\ip 0 y_p)\wedge\neg(x_i\ip 3 y_p)\wedge & \{\ip 0,\ip 3\}\\
            &                 & \neg\exists z_i(z_i\ip 3 y_p\wedge \forall w_p(z_i\ip 0 w_p\leftrightarrow x_i\ip 0 w_p))
\end{array}
\]
\end{small}

\medskip

\noindent Then, for the case of \underline{$\{\ip 2,\ip 3\}$}, we already know that this set is $\ip 0$-complete, and therefore it must be also $\ip 1$-complete thanks to the above argument. We end this proof by analyzing the $\ip 2$-complete sets:

\medskip

\begin{small}
\[ x_i\ip 2 y_p \leftrightarrow \left\{ \begin{array}{ll}
         \neg(x_i\ip 0 y_p)\wedge\neg(x_i\ip 4 y_p)\wedge &  \{\ip 0,\ip 4\} \\
         \exists z_iz_p(\neg(x_i\ip 0 z_p)\wedge\neg(x_i\ip 4 z_p)\wedge\neg(z_i\ip 0 y_p)\wedge\neg(z_i\ip 4 y_p)\wedge z_i\ip 0 z_p)&\\
         \exists z_iz_p(\neg(x_i\ip 0 z_p)\wedge\neg(x_i\ip 4 z_p)\wedge\neg(z_i\ip 0 y_p)\wedge\neg(z_i\ip 4 y_p)\wedge z_i\ip 4 z_p) & \\
         & \\
         \exists z_iz_p(x_i\ip 1 z_p\wedge z_i\ip 1 z_p\wedge z_i\ip3 y_p)\wedge  &  \{\ip 1,\ip 3\}\\
         \exists z_iz_p(z_i\ip 1 y_p\wedge z_i\ip 3 z_p\wedge x_i\ip3 z_p)
         \end{array} \right.
\]
\end{small}

\medskip

\noindent All of the above are easy to prove. Also, the correctness of the remaining two sets is immediate: from \underline{$\{\ip 0,\ip 3\}$}
we define $\ip 1$ and, then, we use $\{\ip 1,\ip 3\}$, and from \underline{$\{\ip 1,\ip 4\}$} we define $\ip 0$, and, then, we use $\{\ip 0,\ip 4\}$.
\end{proof}

This concludes our preliminary analysis of the expressiveness of our languages when we limit ourselves to relation in $\alli^+$ and $\allm^+$. We shall use these results in the rest of this section, dealing with the expressiveness within $\allr^+$.

\subsection{Definability in $\allr^+$}

\begin{table}[t]
\small
\begin{center}
\begin{tabular}{|p{0.20\textwidth}|p{0.20\textwidth}|}
\hline
Proved & Symmetric \\
\hline
$\{<\}$ & -   \\
$\{\ip 1\}$ & $\{\ip 3\}$  \\
\hline
\end{tabular}
\quad
\begin{tabular}{|p{0.20\textwidth}|p{0.20\textwidth}|}
\hline
Proved & Symmetric \\
\hline
$\{\ip0, \ip2 \}$ & $\{\ip2, \ip4 \}$  \\
$\{\ip0, \ip3 \}$ & $\{\ip1, \ip4 \}$  \\
$\{\ip0, \ip4 \}$ & -  \\
$\{\ip1, \ip2 \}$ & $\{\ip2, \ip3 \}$ \\
$\{\ip1, \ip3 \}$ & - \\
$\{\ii14 \}$      & $\{\ii03 \}$ \\
$\{\ii34 \}$      & - \\
\hline
\end{tabular}
\caption{The spectrum of the $\mathsf{mcs}(=_p)$~(left) and of the $\mathsf{mcs}(=_i)$~(right). - Class: $\Lin$.}(Lemma~\ref{lem:tab:EqP} and Lemma~\ref{lem:tab:EqI})\label{tab:EqPEqI}
\end{center}
\end{table}

In the rest of this section, we assume that the set of relations is $\allr^+$; unlike the previous results, we shall treat the relations one by one.
We begin our study by considering those sets that define the equality between points; then we move to the equality between intervals, which is no more complicate than the previous one, although there are more ways to define $=_i$ than to define $=_p$.

\begin{lem}\label{lem:tab:EqP}
Tab.~\ref{tab:EqPEqI} (left) is correct.
\end{lem}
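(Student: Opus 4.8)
The plan is to establish the definability (completeness) direction for each set in the \emph{Proved} column by exhibiting an explicit defining formula, to read off the \emph{Symmetric} column from Lemma~\ref{lem:symm}, and to reduce the minimality claims to the (trivial) $=_p$-incompleteness of $\emptyset$, leaving the assertion that these are \emph{all} the minimally $=_p$-complete sets to the undefinability computations of Section~\ref{sec:lin-incomp}, as dictated by the interleaved strategy described above. Concretely, I would show that both $\{<\}$ and $\{\ip 1\}$ are $=_p$-complete over $\Lin$.

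For $\{<\}$ the natural candidate is
\[ x_p =_p y_p \leftrightarrow \neg(x_p < y_p) \wedge \neg(y_p < x_p), \]
whose correctness is immediate: over any linear order, trichotomy guarantees that two points coincide exactly when neither precedes the other, so the right-hand side holds iff $x_p =_p y_p$.

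For $\{\ip 1\}$, recall that $z_i \ip 1 x_p$ holds precisely when $x_p$ is the left endpoint of $z_i$. I would take
\[ x_p =_p y_p \leftrightarrow \forall z_i(z_i \ip 1 x_p \leftrightarrow z_i \ip 1 y_p), \]
reading informally as ``two points are equal iff they are the left endpoint of exactly the same intervals.'' The left-to-right direction is trivial. For the converse, suppose $x_p \neq y_p$; by linearity we may assume $x_p < y_p$, so the interval $[x_p,y_p]$ exists and satisfies $[x_p,y_p]\ip 1 x_p \wedge \neg([x_p,y_p]\ip 1 y_p)$, breaking the biconditional. Hence whenever the right-hand side holds we must have $x_p =_p y_p$. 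The only step requiring any care is this production of a witnessing interval, which is why I would emphasise that the argument never appeals to the existence of a greatest or least point and thus goes through uniformly for all of $\Lin$ (and, trivially, for degenerate one-point domains, where both sides reduce to truth).

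Since $=_p$ is a symmetric relation (it is not among the reversible ones), Lemma~\ref{lem:symm} gives $FO+\{\ip 1\}\rightarrow =_p$ if and only if $FO+\{\ip 3\}\rightarrow =_p$, which justifies the \emph{Symmetric} entry $\{\ip 3\}$; the symmetric of $\{<\}$ is $\{<\}$ itself, as $<$ is not reversible, explaining the dash. Each listed set is a singleton whose only proper subset is $\emptyset$, and $FO+\emptyset$ carries no relation symbols capable of distinguishing two points, so $\emptyset$ is $=_p$-incomplete and minimality follows at once. I expect the genuinely load-bearing content — that the table lists \emph{all} minimally $=_p$-complete sets and nothing more — to come not from this lemma but from the maximally $=_p$-incomplete sets established in Section~\ref{sec:lin-incomp}, in keeping with the mutual dependence of the completeness and incompleteness analyses.
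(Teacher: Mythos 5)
Your proposal is correct and takes essentially the same approach as the paper: you use the identical defining formulas ($\neg(x_p<y_p)\wedge\neg(y_p<x_p)$ for $\{<\}$ and $\forall z_i(z_i\ip 1 x_p\leftrightarrow z_i\ip 1 y_p)$ for $\{\ip 1\}$), and your witness-interval argument for the $\{\ip 1\}$ case merely spells out the paper's one-line justification. Your handling of the symmetric column via Lemma~\ref{lem:symm} and the deferral of minimality to the incompleteness analysis of Section~\ref{sec:lin-incomp} are likewise consistent with the paper's stated strategy.
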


\begin{proof}
Consider the following definitions

\medskip

\begin{small}
\[ x_p=_p y_p \leftrightarrow \left\{ \begin{array}{ll}
         \neg(x_p<y_p)\wedge\neg(y_p<x_p)  &  \{<\}\\
         & \\
         \forall x_i(x_i\ip 1 x_p\leftrightarrow x_i\ip 1 y_p) & \{\ip 1\}\end{array} \right. \]
\end{small}

\medskip

\noindent The case of \underline{$\{<\}$} is trivial. As for the case of \underline{$\{\ip 1\}$}; suppose that $\mathcal{F} \models \varphi(a,b)$. Clearly, it implies that $a$ is the starting point of an interval if and only if $b$ is the starting point of that interval. Hence $a=b$. On the other hand it is obvious that if $a=b$, then $\mathcal{F} \models \varphi(a,b)$.
\end{proof}

\medskip

\begin{lem}\label{lem:tab:EqI}
Tab.~\ref{tab:EqPEqI} (right) is correct.
\end{lem}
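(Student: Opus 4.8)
My plan is to exploit two facts that do most of the work up front. First, $=_i=\ii13$ is a symmetric relation, since it is not among the reversible relations $\{\ip0,\ip1,\ip3,\ip4,\ii14,\ii03\}$; hence by the first part of Lemma~\ref{lem:symm} each entry in the \emph{Symmetric} column is $=_i$-complete as soon as its partner in the \emph{Proved} column is, and I need not treat $\{\ip2,\ip4\}$, $\{\ip1,\ip4\}$, $\{\ip2,\ip3\}$, or $\{\ii03\}$ separately. Second, the two interval-interval entries $\{\ii34\}$ and $\{\ii14\}$ have already been shown to define $=_i$ in Lemma~\ref{lem:tab:allilin}. So the only genuinely new content is the $=_i$-completeness of the five mixed sets $\{\ip0,\ip2\}$, $\{\ip0,\ip3\}$, $\{\ip0,\ip4\}$, $\{\ip1,\ip2\}$, and $\{\ip1,\ip3\}$.

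For these five I would use a single skeleton. Since $[a,b]=[c,d]$ if and only if $a=c$ and $b=d$, it suffices to produce a formula $\lambda(x_i,y_i)$ saying that $x_i,y_i$ have the same left endpoint and a formula $\rho(x_i,y_i)$ saying they have the same right endpoint, and then set $x_i=_iy_i\leftrightarrow\lambda(x_i,y_i)\wedge\rho(x_i,y_i)$. The basic observation is that, read as point-sets, each interval-point relation pins down an endpoint in an arbitrary linear order: $\{p:x_i\ip1 p\}=\{a\}$ and $\{p:x_i\ip3 p\}=\{b\}$ are singletons, while $\{p:x_i\ip0 p\}=\{p:p<a\}$ and $\{p:x_i\ip4 p\}=\{p:p>b\}$ each determine their endpoint, because two points of a linear order with the same set of strictly-smaller (resp. strictly-larger) points coincide. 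Consequently $\forall z_p(x_i\ip0 z_p\leftrightarrow y_i\ip0 z_p)$ (or its $\ip1$ analogue) expresses $a=c$, and $\forall z_p(x_i\ip3 z_p\leftrightarrow y_i\ip3 z_p)$ (or its $\ip4$ analogue) expresses $b=d$. This immediately disposes of $\{\ip0,\ip3\}$, $\{\ip0,\ip4\}$, and $\{\ip1,\ip3\}$.

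The hard part is the two sets containing $\ip2$, namely $\{\ip0,\ip2\}$ and $\{\ip1,\ip2\}$, because the open-interval set $\{p:x_i\ip2 p\}=(a,b)$ does not by itself determine $b$: in a discrete order it may even be empty. I would resolve this by sequencing the conjuncts. The first conjunct (through $\ip0$ or $\ip1$) forces $a=c$; granting this, the second conjunct $\forall z_p(x_i\ip2 z_p\leftrightarrow y_i\ip2 z_p)$ becomes $(a,b)=(a,d)$, and I argue $b=d$ by contradiction: if, say, $b<d$, then (since $a<b$ as $x_i$ is an interval) $b$ lies in $(a,d)$ but not in $(a,b)$, contradicting the equivalence. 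This witness argument is the only delicate step; once it is in place, $\forall z_p\big((x_i\ip1 z_p\leftrightarrow y_i\ip1 z_p)\wedge(x_i\ip2 z_p\leftrightarrow y_i\ip2 z_p)\big)$ and its $\ip0$ variant define $=_i$, completing the \emph{Proved} column. Minimality of the listed sets is not part of this lemma; it will follow from the maximally $=_i$-incomplete sets produced by the undefinability analysis, as laid out in the overall strategy.
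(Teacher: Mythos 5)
Your proposal is correct and follows essentially the same route as the paper: the same uniform formula $\forall z_p(x_i\ip{r}z_p\leftrightarrow y_i\ip{r}z_p)\wedge\forall z_p(x_i\ip{r'}z_p\leftrightarrow y_i\ip{r'}z_p)$ for the mixed pairs, with the interval-interval entries delegated to Lemma~\ref{lem:tab:allilin} and the symmetric column handled via Lemma~\ref{lem:symm}. Your explicit sequencing argument for the $\ip2$ cases (first force $a=c$, then use $b$ itself as the witness separating $(a,b)$ from $(a,d)$) is a welcome sharpening of a step the paper states only informally.
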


\begin{proof}
Consider the following definition:

\medskip

\begin{small}
\[
\begin{array}{llll}
x_i=_i y_i &\leftrightarrow &
         \forall z_p(x_i\ip{r} z_p\leftrightarrow y_i\ip{r} z_p)\wedge\forall z_p(x_i\ip{r'} z_p\leftrightarrow y_i\ip{r'} z_p) & \{\ip r,\ip r'\}
\end{array}
\]
\end{small}

\medskip

\noindent  where $\{ \ip r,\ip r' \} = S$, for any $S$ in the left-hand part of the table with $S \subseteq \allm^+$. All such cases  are based on the same, simple observation: in order to constrain two intervals to be the same interval, it suffices to fix the two endpoints. This is to say that, for each side of the intervals, we can simply express the fact that they have the same sets of points in a given point-interval relation with it. So, for example, consider \underline{$\{\ip{0},\ip{2}\}$}. Assume that $\mathcal{F} \models \varphi([a,b],[c,d])$: we obtain $a=c$ from the first conjunct of $\varphi$, and $b=d$ from the second conjunct. On the other hand, it is immediate to see that if $[a,b]=[c,d]$ then $\mathcal{F} \models \varphi([a,b],[c,d])$. The basic idea is now clear: by means of $\ip 0$ we fix the left endpoints, and by means of $\ip 2$ we fix the right endpoint (in this particular case, $\ip 2$ serves the right side, but, for example, in the case of \underline{$\{\ip{2},\ip{4}\}$}, it would serve the left one). Notice that the only pairs missing from the list (and the list of symmetric sets)  are $\{\ip 0,\ip 1\}$ and its symmetric one, for which this idea does not apply (they are, in fact, $=_i$-incomplete). The remaining definitions are already included in Lemma~\ref{lem:tab:allilin}.
\end{proof}


The case of `less then' between point is the first non-trivial case, as there are already many possible different definitions.

\begin{lem}\label{lem:tab:Lt}
Tab.~\ref{tab:Lt} is correct.
\end{lem}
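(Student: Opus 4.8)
The goal is to prove that Tab.~\ref{tab:Lt} correctly lists all the minimally complete sets for the relation $<$ (i.e.\ $\pp{4}$), the ``less than'' relation between points. Since the final statement (Lemma~\ref{lem:tab:Lt}) refers to a table I cannot see in full detail, I will describe the general strategy that mirrors the proofs of the preceding lemmas (Lemma~\ref{lem:tab:EqP} and Lemma~\ref{lem:tab:EqI}), since those set the template for all definability proofs in this section.

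\textbf{Overall approach.} Following the schema laid out in the paragraph preceding Section~\ref{subsec:allimlin}, for each $<$-complete set $S$ listed in the \emph{Proved} column of Tab.~\ref{tab:Lt}, I will exhibit an explicit $FO+S$-formula $\varphi(x_p,y_p)$ and argue that, in any concrete point-interval structure $\mathcal{F}=\langle\mathbb D,\mathbb I(\mathbb D)\rangle$ over $\Lin$, we have $\mathcal{F}\models\varphi(a,b)$ if and only if $a<b$. As in the earlier lemmas, each such biconditional is proved by the same two-directional argument: assume $\mathcal{F}\models\varphi(a,b)$ and deduce the ordering $a<b$ (typically by exhibiting or ruling out witnessing intervals whose endpoints pin down the relative positions of $a$ and $b$), then conversely assume $a<b$ and verify that all conjuncts of $\varphi$ are satisfied. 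The \emph{Symmetric} column is handled for free by Lemma~\ref{lem:symm}: since $<$ is itself a symmetric relation in the sense of the paper (its order-dual is $>=\pp 0$, but the relation $<$ pattern-matches symmetrically), any $<$-complete set $S$ yields its symmetric partner $S'$ as also $<$-complete. Entries in the \emph{Implied} column are then obtained by chaining the explicitly proved definitions with the $\alli^+$- and $\allm^+$-definability results of Lemma~\ref{lem:tab:allilin} and Lemma~\ref{lem:tab:allmlin}, exactly as indicated in each row's \emph{Deduction Chain}.

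\textbf{Key definitional ideas.} The crucial observation is that $<$ between two points $a,b$ can be encoded by finding an interval whose endpoints straddle $a$ and $b$ in the appropriate order. For instance, with access to the mixed relations one typically writes something like $x_p<y_p \leftrightarrow \exists z_i(z_i\ip 1 x_p \wedge z_i\ip 4 y_p)$, expressing that some interval starts at $x_p$ and lies entirely before $y_p$, which forces $x_p<y_p$; and conversely, given $a<b$ one must produce such a witnessing interval, which is always possible since any point begins at least one interval and the order places $b$ beyond some interval starting at $a$ only when the model is rich enough --- here one must be careful, and alternative formulations (e.g.\ using $\ip 2$ or $\ip 3$) cover cases where the naive witness may fail. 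Because the language is first-order over arbitrary linear orders, I would favor definitions that do not presuppose density or unboundedness, checking at each step that the witness intervals genuinely exist in $\mathbb I(\mathbb D)$ for \emph{every} linear order.

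\textbf{Expected main obstacle.} The hardest part will be the sets that define $<$ only \emph{indirectly} --- that is, those listed in the \emph{Implied} column, where $<$ is not defined by a single clean formula but is recovered by first defining an intermediate relation (a point-point or mixed relation) and then composing. For these I must make the deduction chain airtight: verify that the intermediate relation is genuinely definable from $S$ (invoking the appropriate prior lemma), and that the composition indeed yields $<$ and not some coarser relation such as $\le$ or a union like $\pp 4\cup\pp 2$. The subtlety, as seen in the $\{\ii 14,\ii 44\}$ and $\{\ip 2,\ip 4\}$ cases of the earlier lemmas, is that one frequently can only define a \emph{weaker} (disjunctive) relation first and must then separate out the strict part, often by a $\neg\exists$ clause ruling out an intermediate witness. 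I would therefore pay special attention to boundary situations --- endpoints at the top or bottom of the order, or coinciding points --- where such separating clauses can silently fail, and confirm that the definitions in Tab.~\ref{tab:Lt} survive these edge cases over all of $\Lin$.
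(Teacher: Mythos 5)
Your proposal describes the right proof \emph{template} (exhibit a formula $\varphi(x_p,y_p)$ for each set in the \emph{Proved} column, check both directions of the biconditional over arbitrary linear orders, dispatch the \emph{Symmetric} column via Lemma~\ref{lem:symm}, and close the \emph{Implied} column by chaining through Section~\ref{subsec:allimlin}), and this is indeed the schema the paper follows. But the actual content of Lemma~\ref{lem:tab:Lt} is the nine explicit definitions for $\{\ip 0,\ip 1\}$, $\{\ip 0,\ip 3\}$, $\{\ip 1,\ip 2\}$, $\{\ip 1,\ip 3\}$, $\{\ip 1,\ii 03\}$, $\{\ip 1,\ii 34\}$ and $\{\ip 1,r,=_i\}$ for $r\in\{\ii 04,\ii 24,\ii 44\}$, together with their case-by-case verification; none of these is produced in your text, so there is a genuine gap: the proof is announced but not carried out.

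The one concrete formula you do offer, $x_p<y_p \leftrightarrow \exists z_i(z_i\ip 1 x_p \wedge z_i\ip 4 y_p)$, illustrates exactly where the difficulty lies and why it cannot be left as a hedge. Over $\Lin$ the right-to-left direction fails whenever there is no point strictly between $a$ and $b$ (e.g.\ $b$ a direct successor of $a$ in a discrete order), and the left-to-right witness also does not exist when $b$ is close to the top of a bounded order. The paper's definitions repair this systematically by adding disjuncts that treat the boundary configurations separately: for $\{\ip 1,\ip 2\}$ the extra disjunct $\exists x_i(x_i\ip 1 x_p)\wedge\neg\exists x_i(x_i\ip 1 y_p)$ handles $y_p$ being the greatest point, and for $\{\ip 1,\ii 04,=_i\}$ a further disjunct counts whether at least two distinct intervals start at $x_p$ but not at $y_p$, covering the case where exactly one point lies above $b$. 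These disjuncts, and the verification that together they characterize $<$ on \emph{every} linear order, are the substance of the lemma. Note also that you locate the ``hardest part'' in the \emph{Implied} column, but those entries are routine closure computations (each listed set defines $=_i$ or some $\ip k$ by earlier lemmas and thereby contains a proved set in its definitional closure); the delicate work is entirely in the \emph{Proved} column.
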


\begin{proof}
Consider, first, the following definitions:

\begin{small}
\[ x_p < y_p \leftrightarrow \left\{ \begin{array}{ll}
         \exists x_i(x_i\ip 1 x_p\wedge x_i\ip 3 y_p)  &  \{\ip 1,\ip 3\}\\
         & \\
         (\neg \exists z_i (z_i \ip{1} y_p) \wedge \exists z_i(z_i \ip{1} x_p)) \vee \exists z_i(z_i\ip{0}x_p \wedge z_i\ip{1}y_p) & \{\ip 0,\ip 1\} \\
         & \\
         (\exists x_i(x_i\ip 1 x_p)\wedge\neg\exists x_i(x_i\ip 1 y_p))\vee \exists x_i(x_i\ip 1 x_p\wedge x_i\ip 2 y_p) & \{\ip 1,\ip 2\} \\
         & \\
         \exists z_i(z_i\ip 3 y_p\wedge\forall k_i(k_i\ip 3 x_p\rightarrow\neg\forall z_p(z_i\ip 0 z_p\leftrightarrow k_i\ip 0 z_p))\wedge\neg(z_i\ip 3 x_p)) & \{\ip 0,\ip 3\}\\
         & \\
         (\exists x_i(x_i\ip 1 x_p)\wedge\neg\exists x_i(x_i\ip 1 y_p))\vee & \{\ip 1,\ii 03\} \\
         \exists x_iy_i(x_i\ip 1 x_p\wedge y_i\ip 1 y_p\wedge y_i\ii{0}{3} x_i)\\
\end{array} \right. \]
\end{small}

\begin{table}[t]
\small
\begin{center}
\begin{tabular}{|p{0.20\textwidth}|p{0.20\textwidth}|p{0.20\textwidth}|p{0.20\textwidth}|}
\hline
Proved & Symmetric & Implied & Deduction Chain\\
\hline
$\{\ip0, \ip1 \}$         &$\{\ip3, \ip4 \}$       &$\{\ip1, \ii14, \ii24 \}$& $=_i$ (Section~\ref{subsec:allimlin})\\
$\{\ip0, \ip3 \}$         &$\{\ip1, \ip4 \}$       &$\{\ip1, \ii14, \ii04 \}$& $=_i$ (Section~\ref{subsec:allimlin})\\
$\{\ip1, \ip2 \}$         &$\{\ip2, \ip3 \}$       &$\{\ip1, \ii14, \ii44 \}$& $=_i$ (Section~\ref{subsec:allimlin})\\
$\{\ip1, \ip3 \}$         &-                       &$\{\ip3, \ii24, \ii03 \}$& $=_i$ (Section~\ref{subsec:allimlin})\\
$\{\ip1, \ii03 \}$        &$\{\ip3, \ii14 \}$      &$\{\ip3, \ii03, \ii04 \}$& $=_i$ (Section~\ref{subsec:allimlin}) \\
$\{\ip1, \ii34 \}$        &$\{\ip3, \ii34 \}$      &$\{\ip3, \ii03, \ii44 \}$& $=_i$ (Section~\ref{subsec:allimlin})\\
$\{\ip1, \ii04, =_i \}$   &$\{\ip3, \ii04, =_i \}$ &                       &  \\
$\{\ip1, \ii24, =_i \}$   &$\{\ip3, \ii24, =_i \}$ &                       &  \\
$\{\ip1, \ii44, =_i \}$   &$\{\ip3, \ii44, =_i \}$ &                       &  \\
  \hline
\end{tabular}
\caption{The spectrum of the $\mathsf{mcs}(<)$. - Class: $\Lin$.} (Lemma~\ref{lem:tab:Lt})\label{tab:Lt}
\end{center}
\end{table}

\noindent The case of \underline{$\{\ip 1,\ip 3\}$} is straightforward. Consider the case of \underline{$\{\ip{0},\ip{1}\}$}. Assume that $\mathcal{F} \models \varphi(a,b)$. If the first disjunct of $\varphi$ is satisfied then $b$ is the largest point of the linear order $\mathbb D$ and $a$ is not the largest point of $\mathbb D$ which implies $a<b$ as required. If the second disjunct of $\varphi$ holds, witnessed by the interval $[c,d]$, then $a<c$ and $b=c$, again leading to $a<b$. On the other hand assume $a<b$. If $b$ is the largest point of $\mathbb D$ then $a$ is not, and hence the first disjunct of $\varphi$ holds. If $b$ is not the largest point of the the linear order $\mathbb D$ then we can pick $c \in \mathbb D$ with $b<c$ and the interval $[b,c]$ witnesses the second disjunct of $\varphi$. Now consider the case \underline{$\{\ip 0,\ip 3\}$}, and assume, first, that $\mathcal{F} \models \varphi(a,b)$. So, some interval $z_i$ {\em ends} at $b$; if, by contradiction, $a=b$, then $z_i$ also {\em ends} at $a$, contradicting the last conjunct, and if $b<a$, then the interval $k_i$ that {\em starts} at the beginning point of $z_i$ and ends at $a$ contradicts the second conjunct. If, on the other hand, we assume $a<b$, we can take the interval $z_i=[a,b]$ to satisfy $\varphi(a,b)$, and we make sure that $z_i$ does not {\em end} at $x_p$, nor any interval {\em ending} at $x_p$ may possibly {\em start} together with $z_i$. Next, consider the case \underline{$\{\ip 1,\ip 2\}$}. Assume that $\mathcal{F} \models \varphi(a,b)$. If the first disjunct of $\varphi$ holds then, $b$ is the largest point of $\mathbb D$ and $a$ is not which leads to $a<b$. If the second disjunct of $\varphi$ holds, witnessed by the interval $x_i=[c,d]$, then $a=c$ and $c<b<d$ which again leads to $a<b$. On the other hand, assume $a<b$, and let us prove that $\mathcal{F} \models \varphi(a,b)$. If $b$ is the greatest point of $\mathbb D$, then the first disjunct of $\varphi$ holds. If it is not, then, there is a $c \in \mathbb D$ such that $c>b>a$ and the second disjunct of $\varphi$ holds, witnessed by the interval $x_i=[a,c]$. Let us prove the $<$-completeness of \underline{$\{\ip 1,\ii{0}{3}\}$}. Assume that $\mathcal{F} \models \varphi(a,b)$. If the first disjunct of $\varphi$ holds then $b$ is the largest point of $\mathbb D$ and $a$ is not, which implies that $a<b$. Suppose that the second disjunct of $\varphi$ holds, witnessed by the intervals $x_i=[c,d]$ and $y_i=[e,f]$. Then $a=c$, $b=e$ and $c<e<f=d$, hence $a<b$. Now assume that $a<b$. If $b$ is the largest point of $\mathbb D$ then the first disjunct of $\varphi$ holds. Otherwise the second disjunct of $\varphi$ holds witnessed by the intervals $x_i=[a,c]$ and $y_i=[b,c]$ where $b<c$. Now, let us focus on the following group of definitions:

\medskip

\begin{small}
\[ x_p < y_p \leftrightarrow \left\{ \begin{array}{ll}
         (\exists x_i(x_i\ip 1 x_p)\wedge\neg\exists x_i(x_i\ip 1 y_p))\vee & \{\ip 1,\ii 34\} \\
         \exists x_iy_iz_i(x_i\ip 1 x_p\wedge y_i\ip 1 y_p\wedge z_i\ii34 y_i\wedge\forall t_i(t_i\ii34 x_i\leftrightarrow t_i\ii 34 z_i)) &\\
         & \\
         \exists x_i y_i(x_i\ip 1 x_p \wedge y_i \ip{1} y_p\wedge y_i \ii{0}{4} x_i) \vee & \{\ip1, \ii04, =_i \} \\
         (\exists x_i(x_i\ip 1 x_p)\wedge \neg \exists x_i(x_i \ip 1 y_p))\vee &  \\
          (\exists x_i y_i (x_i \ip{1} x_p \wedge y_i \ip{1} x_p \wedge x_i \neq y_i) \wedge & \\
          \neg \exists x_i y_i (x_i \ip{1} y_p \wedge y_i \ip{1} y_p \wedge x_i \neq y_i)) & \\
          & \\
          \exists x_i y_i(x_i\ip 1 x_p \wedge y_i \ip{1} y_p \wedge x_i \ii{2}{4} y_i) \vee & \{\ip1, \ii24, =_i \} \\
          (\exists x_i(x_i\ip 1 x_p)\wedge \neg \exists x_i(x_i \ip 1 y_p)) \vee &\\
          (\exists x_i y_i (x_i \ip{1} x_p \wedge y_i \ip{1} x_p \wedge x_i \neq y_i) \wedge & \\
          \neg \exists x_i y_i (x_i \ip{1} y_p \wedge y_i \ip{1} y_p \wedge x_i \neq y_i)) &  \\
          & \\
          \exists x_i (x_i\ip 1 x_p \wedge \forall y_i(y_i \ip{1} y_p \rightarrow \exists z_i(x_i \ii{4}{4}z_i \wedge \neg (y_i \ii{4}{4}z_i)))) \vee & \{\ip1, \ii44, =_i \} \\
					(\exists x_i(x_i\ip 1 x_p)\wedge \neg \exists x_i(x_i \ip 1 y_p)) \vee & \\
					(\exists x_i y_i (x_i \ip{1} x_p \wedge y_i \ip{1} x_p \wedge x_i \neq y_i) \wedge & \\                                                         \neg \exists x_i y_i (x_i \ip{1} y_p \wedge y_i \ip{1} y_p \wedge x_i \neq y_i))
\end{array} \right. \]
\end{small}

\medskip

\noindent The case of \underline{$\{\ip 1,\ii{3}{4}\}$} is very similar to the previous one. Indeed, if we assume that $\mathcal{F} \models \varphi(a,b)$,
and that the second conjunct of  $\varphi$ holds (if the first conjunct holds, then we reason as in the previous case), then, if $x_i=[c,d]$ and $y_i=[e,f]$, we have that $a=c$, $b=e$, and the interval $z_i=[a,b]$ must exist, so that it is {\em met by} exactly the same intervals that {\em meet} $x_i$. Now, assume that $a<b$, where $b$ is not the largest point of $\mathbb D$ (otherwise, the first disjunct of $\varphi$ holds). Then, the second disjunct of $\varphi$ holds witnessed by the intervals $x_i=[a,c]$, $y_i=[b,c]$, and $z_i=[a,b]$ where $b<c$. Now consider the case of \underline{$\{\ip 1,\ii{0}{4},=_i\}$}. Suppose that $\mathcal{F} \models \varphi(a,b)$. If the first disjunct of $\varphi$ holds, witnessed by the intervals $x_i=[c,d]$ and $y_i=[e,f]$, then $a=c$, $b=e$ and $c<e<f<d$ and we get $a<b$. If the second disjunct of $\varphi$ holds, then $b$ is the largest point of $\mathbb D$ and $a$ is not, so $a<b$. If the third disjunct of $\varphi$ holds, then there are at least two points in $\mathbb D$ which are greater than $a$ and there is at most one point in $\mathbb D$ greater than $b$ which again leads to $a<b$. Now, assume $a<b$; we want to show that $\mathcal{F} \models \varphi(a,b)$. If $b$ is the largest point of $\mathbb D$, then $a$ is not and therefore the second disjunct of $\varphi$ holds. If there is exactly one point in $\mathbb D$ which is greater than $b$, then there are at least two points in $\mathbb D$ greater than $a$ and therefore the third disjunct of $\varphi$ holds. Now we may assume that there are points $c,d \in \mathbb D$ with $a<b<c<d$. Then the first disjunct of $\varphi$ holds witnessed by the intervals $x_i=[a,d]$ and $y_i=[b,c]$.
The case of \underline{$\{\ip 1,\ii{2}{4},=_i\}$} is similar to the previous one. Lastly, we consider the case of \underline{$\{\ip 1,\ii{4}{4},=_i\}$}. Suppose that $\mathcal{F} \models \varphi(a,b)$ and that the first disjunct of $\varphi$ holds, witnessed by the interval $x_i=[a,c]$. Suppose, towards a contradiction, that $b<c$. Then the interval $y_i=[b,c]$ satisfies $y_i \ip{1} b$ and therefore there is an interval $z_i$ such that $[a,c] \ii{4}{4}z_i$ and $\neg([b,c] \ii{4}{4} z_i)$ which is impossible. Therefore $c \leq b$ and we obtain $a<b$. If the second or the third disjunct of $\varphi$ holds then we again obtain $a<b$ precisely as in the previous case. Now assume $a<b$; we want to show that $\mathcal{F} \models \varphi(a,b)$. If $b$ is the greatest point of $\mathbb D$ then the second disjunct of $\varphi$ holds and if there is exactly one point in $\mathbb D$ which is greater than $b$ then the third disjunct of $\varphi$ holds. So we may assume there are two points $c,d \in \mathbb D$ such that $b<c<d$. We claim that the first disjunct of $\varphi$ holds, witnessed by the interval $x_i=[a,b]$.
Let $y_i=[b,e]$. We want to find an interval $z_i$ such that $[a,b] \ii{4}{4} z_i$ and $\neg [b,e] \ii{4}{4} z_i$. Let $f$ be the minimum of $c$ and $e$. Then the interval $z_i=[f,d]$ has the desired property, from which we see that the first disjunct of $\varphi$ holds.
\end{proof}

\medskip

We are now moving to mixed relations, starting with $\ip 0$. Recall that $\ip 0$ and $\ip 1$ are reversible relations: symmetric sets of complete ones are
complete for their (respective) reverse, and, thus, they do not appear in their tables.

\begin{lem}\label{lem:tab:Ip0}
Tab.~\ref{tab:Ip0} is correct.
\end{lem}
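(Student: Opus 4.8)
The plan is to follow the same schema used for every table in this section (the one described just before Lemma~\ref{lem:tab:allilin}): for each set $S$ listed in the \emph{Proved} column of Tab.~\ref{tab:Ip0} I exhibit an explicit $FO+S$ formula $\varphi(x_i,y_p)$ and show, over an arbitrary concrete point-interval structure $\mathcal F=\langle\mathbb D,\mathbb I(\mathbb D)\rangle$ with $\mathbb D\in\Lin$, that $\mathcal F\models\varphi([a,b],c)$ iff $c<a$, i.e.\ iff $[a,b]\ip 0 c$. Since $\ip 0$ is a reversible relation with reverse $\ip 4$, once the \emph{Proved} entries are settled the \emph{Symmetric} column (the $\ip 4$-complete sets) follows at once from Lemma~\ref{lem:symm}, and the entries in the \emph{Implied} column follow by chaining the displayed definitions with the definability results already established in Tab.~\ref{tab:allilin}, Tab.~\ref{tab:allmlin} and Tab.~\ref{tab:Lt}, exactly as recorded in the \emph{Deduction Chain} column.

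The easy half of the work covers every set containing $<$ together with a mixed relation that names the left endpoint $a$ of the interval. The cleanest instance is $\{<,\ip 1\}$, for which I take
\[
x_i\ip 0 y_p\;\leftrightarrow\;\exists z_p\,(x_i\ip 1 z_p\wedge y_p<z_p).
\]
Here $x_i\ip 1 z_p$ forces $z_p$ to be the left endpoint $a$ of $x_i$ (a witness that always exists), and $y_p<z_p$ then says exactly $c<a$; the converse direction simply instantiates $z_p:=a$, so both directions are immediate. The remaining sets in this group follow the same idea: first recover $a$ — or, where $a$ is not directly available, recover the right endpoint together with enough information to separate the ``before'' region from the interior — from the mixed and interval-interval relations in $S$, then compare it with $y_p$ using $<$. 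For the sets already shown $\ip 0$-complete inside $\allm^+$, namely $\{\ip 1,\ip 2\}$ and $\{\ip 1,\ip 3\}$, I reuse verbatim the formulas from the proof of Lemma~\ref{lem:tab:allmlin}, since those use no relation outside $\allm^+\subseteq\allr^+$ and their minimality is inherited because none of their proper subsets is $\ip 0$-complete.

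The delicate cases are the sets that do not contain a relation naming $a$ directly — in particular combinations built from $\ip 2$, $\ip 4$, $=_p$ and the interval-interval relations. For these I follow the two-step technique already used in Lemma~\ref{lem:tab:allmlin} and Lemma~\ref{lem:tab:allilin}: first define an auxiliary weaker relation (such as $\ip 0\cup\ip 2$, or an interval-interval relation like $\ii 03$ that lets one reach the left endpoint), and then isolate $\ip 0$ from it by difference, excluding the interior via $\ip 2$ and the coincidence $c=a$ via whatever pins down $\ip 1$. Whenever $S$ already suffices to define some interval-interval relation or $=_i$, I invoke the corresponding row of Tab.~\ref{tab:allilin} (and, for the reduction to $<$, of Tab.~\ref{tab:Lt}) rather than redoing the construction, which is precisely what the \emph{Deduction Chain} column records.

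I expect the main obstacle to lie in the boundary cases of the correctness verifications rather than in the formulas themselves: since we assume only $\mathbb D\in\Lin$ (no density, no unboundedness), each argument must separately treat the possibilities that $c$, $a$ or $b$ is a least or greatest point, that the interval abuts an end of the order, or that a witnessing point or interval fails to exist — exactly the subtleties that made the proof of Lemma~\ref{lem:tab:Lt} lengthy, and I would mirror that style of case analysis here. Finally, the displayed equivalences together with Theorem~\ref{theo:truth} establish only completeness of the listed sets; the matching \emph{minimality} (that each is minimally $\ip 0$-complete and that none is omitted) is deferred, as in the global strategy, to the undefinability arguments and the run of the algorithm in Fig.~\ref{fig:undef} presented later in the paper.
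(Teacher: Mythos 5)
Your overall architecture is right: one explicit formula per set in the \emph{Proved} column, the \emph{Implied} column discharged by chaining through $<$ and the results of Section~\ref{subsec:allimlin}, symmetric sets yielding $\ip 4$-completeness via Lemma~\ref{lem:symm}, and minimality deferred to the incompleteness results. Your treatment of $\{\ip 1,<\}$ coincides exactly with the paper's, and reusing the Lemma~\ref{lem:tab:allmlin} constructions where applicable is legitimate (indeed $\{\ip 2,\ip 3\}$ and $\{\ip 2,\ip 4\}$, which sit in the \emph{Proved} column here, were already shown $\ip 0$-complete there, so you could have claimed two more entries for free). The gap is that the entire content of this lemma lies in the six remaining definitions — $\{\ip 2,\ip 3\}$, $\{\ip 2,\ip 4\}$, $\{\ip 2,\ii{1}{4},<\}$, $\{\ip 2,\ii{0}{3},<\}$, $\{\ip 3,\ii{1}{4}\}$, $\{\ip 4,\ii{1}{4},<\}$ — and you produce none of them; you only gesture at a strategy. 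These are not routine: for instance, for $\{\ip 2,\ii{1}{4},<\}$ the paper must posit \emph{two} points $d,e$ with $c<d<e$ and use an interval $[a,e]$ started by $x_i$ to rule out $d$ falling after $a$, and for $\{\ip 3,\ii{1}{4}\}$ one must first observe that this set is $<$-complete because it is the symmetric of $\{\ip 1,\ii{0}{3}\}$ (Table~\ref{tab:Lt}) — a step your plan does not identify and without which the comparison ``$y_p<z_p$'' is not even expressible.

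Moreover, your sketched elimination strategy is under-specified in a way that would fail if executed literally. To define $\ip 0$ (region $0$) one must exclude regions $1$, $2$, $3$ \emph{and} $4$; you mention excluding the interior (region $2$) and the coincidence $c=a$ (region $1$), but for sets such as $\{\ip 2,\ip 4\}$ the case $c=b$ (region $3$) remains and requires its own conjunct — the paper handles it by asserting the existence of an interval ending strictly before the right endpoint of $x_i$ that does not have $y_p$ after it. Similarly, for the sets that only bound the witness by $z_p\le a$ rather than pinning down $a$ itself (e.g.\ $\{\ip 3,\ii{1}{4}\}$), the correctness argument in both directions, including the boundary cases you correctly anticipate, still has to be written out. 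As it stands the proposal establishes one of the seven proved entries and a plausible but unverified programme for the rest.
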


\begin{table}[t]
\small
\begin{center}
\begin{tabular}{|p{0.20\textwidth}|p{0.20\textwidth}|p{0.20\textwidth}|}
\hline
Proved & Implied & Deduction Chain\\
\hline
$\{\ip1, < \}$            & $\{\ip1, \ip2 \}$       & $<$\\
$\{\ip2, \ip3 \}$         & $\{\ip1, \ip3 \}$       & $<$\\
$\{\ip2, \ip4 \}$         & $\{\ip1, \ip4 \}$       & $<$\\
$\{\ip2, \ii14, < \}$     &$\{\ip1, \ii14, \ii24 \}$& $<$\\
$\{\ip2, \ii03, < \}$     &$\{\ip1, \ii14, \ii04 \}$& $<$\\
$\{\ip3, \ii14 \}$        &$\{\ip1, \ii14, \ii44 \}$& $<$\\
$\{\ip4, \ii14, < \}$     &$\{\ip1, \ii24, =_i \}$  & $<$\\
                          &$\{\ip1, \ii04, =_i \}$  & $<$\\
                          &$\{\ip1, \ii44, =_i \}$  & $<$\\
                          &$\{\ip1, \ii03 \}$       & $<$\\
                          &$\{\ip1, \ii34 \}$       & $<$\\
                          &$\{\ip2, \ii34, < \}$      & $\ii 03$ (Section~\ref{subsec:allimlin})\\
                          &$\{\ip3, \ii34 \}$         & $\ii 14$ (Section~\ref{subsec:allimlin})\\
                          &$\{\ip3, \ii24, \ii03 \}$  & $\ii 34$ (Section~\ref{subsec:allimlin})                    \\
                          &$\{\ip3, \ii03, \ii04 \}$  & $\ii 34$ (Section~\ref{subsec:allimlin})\\
                          &$\{\ip3, \ii03, \ii44 \}$  & $\ii 34$ (Section~\ref{subsec:allimlin})\\
                          &$\{\ip4, \ii34, < \}$         & $\ii 14$  (Section~\ref{subsec:allimlin})\\
                          &$\{\ip4, \ii24, \ii03, < \}$  & $\ii 34$  (Section~\ref{subsec:allimlin})\\
                          &$\{\ip4, \ii03, \ii04, < \}$  & $\ii 34$  (Section~\ref{subsec:allimlin})\\
                          &$\{\ip4, \ii03, \ii44, < \}$  & $\ii 34$  (Section~\ref{subsec:allimlin})\\
\hline
\end{tabular}
\caption{The spectrum of the $\mathsf{mcs}(\ip 0)$. - Class: $\Lin$.}(Lemma~\ref{lem:tab:Ip0})\label{tab:Ip0}
\end{center}
\end{table}

\begin{proof}
Consider the following definitions:
\medskip

\begin{small}
\[ x_i\ip 0 y_p \leftrightarrow \left\{ \begin{array}{ll}
         \exists z_p(x_i\ip 1 z_p\wedge y_p<z_p)  &  \{\ip 1,<\}\\
         & \\
         \exists z_p(\neg(x_i\ip2 z_p)\wedge\neg(x_i\ip 3 z_p)\wedge & \{\ip 2,\ip 3\} \\
\hspace{0.7cm}\forall z_i\forall k_p((z_i\ip 3 z_p\wedge x_i\ip 3 k_p)\rightarrow \neg (z_i\ip 2 k_p))\wedge y_p<z_p) & \\
\end{array} \right. \]
\end{small}

\medskip

\noindent The case of \underline{$\{\ip 1,<\}$} is straightforward, and needs no explanation. Consider, now, the case \underline{$\{\ip 2,\ip 3\}$}. The fact that this set is $<$-complete is proved above. So, assume first $\mathcal{F} \models \varphi([a,b],c)$; there must be some $d$ which, thanks to the first two conjuncts, can only be placed in such a way that $d\le a$ or $b<d$, and, thanks to the third conjunct, the possibility $b<d$ is eliminated: in fact, if we had $b < d$,  we could take $z_i=[a,d]$ and $k_p = b$ to contradict the third conjunct. So, since $d\le a$ and $c<d$, it must be that $c<a$ as we wanted. Conversely, suppose that $c<a$: we take $z_p=a$, which is not {\em during} $x_i$ nor does it {\em end} $x_i$, and it is such that no interval {\em ending} at $z_p$ may possibly {\em contain} $b$. Moreover, $y_p=c<a$, and so $\varphi([a,b],c)$. Now, focus on the following definitions:

\medskip

\begin{small}
\[ x_i\ip 0 y_p \leftrightarrow \left\{ \begin{array}{ll}

         & \\
         \neg(x_i\ip 2 y_p)\wedge\neg(x_i\ip 4 y_p)\wedge & \{\ip 2,\ip 4\}\\
\exists z_i(\forall k_p(x_i\ip 4 k_p\leftrightarrow z_i\ip 4 k_p)\wedge & \\
\exists k_p(z_i\ip 2 k_p\wedge\neg(x_i\ip 2 k_p))\wedge \neg(z_i\ip 2 y_p))\wedge & \\
\neg\forall z_i(\exists k_p(z_i\ip 4 k_p\wedge\neg (x_i\ip 4 k_p))\rightarrow z_i\ip 4 y_p)\\
&\\
\exists z_pk_p(y_p<z_p\wedge z_p<k_p\wedge\neg(x_i\ip 2 y_p)\wedge & \{\ip 2,\ii14, <\} \\
             \hspace{1cm}\neg(x_i\ip 2 z_p)\wedge\neg(x_i\ip 2 k_p)\wedge & \\
               \hspace{1cm}\forall z_i(z_i\ip 2 z_p\rightarrow\neg(x_i\ii 14 z_i)))  \\

\end{array} \right. \]
\end{small}

\medskip

\noindent Let us focus on the the case of \underline{$\{\ip 2,\ii{1}{4},<\}$}. Suppose $\mathcal{F} \models \varphi([a,b],c)$. So, there are two points $d,e$ such that $c<d<e$. Now  $d=z_p$ cannot be {\em after} $a$, for if it was we could find an interval $z_i=[a,e]$, where $e>b$ (since $e$ and $d$ cannot be {\em during} $x_i$) and such that $x_i$ {\em starts} $z_i$, which contradicts the last conjunct. Then, $d\le a$, which implies $c<a$ as we wanted. Suppose, on the other hand, that $c<a$: to satisfy the requirements, it suffices to take $z_p=a$ and $k_p=b$. The correctness of the case \underline{$\{\ip 2,\ip 4\}$} is based on the fact that we can, first, eliminate the possibility that $y_p$ is {\em during} or {\em after} $x_i$; then, we eliminate the possibility that $y_p$ starts $x_i$ by stating that there must be an interval {\em finished by} $x_i$ that does not have $y_p$ {\em during} it; finally, we eliminate the possibility that $y_p$ {\em ends} $x_i$ by stipulating the existence of an interval (the one that {\em starts} at $y_p$) which {\em ends} before the right endpoint of $x_i$ that does not have $y_p$ {\em after} it. Finally, consider the following definitions:

\begin{small}
\[ x_i\ip 0 y_p \leftrightarrow \left\{ \begin{array}{ll}
\exists z_pz_i(x_i\ii03z_i\wedge z_i\ip 2z_p\wedge\neg(x_i\ip 2 z_p)\wedge y_p<z_p)& \{\ip 2,\ii03, <\} \\
               & \\
\exists z_p(\neg(x_i\ip 3 z_p)\wedge\neg \exists z_i((z_i\ii 14 x_i\vee x_i\ii 14 z_i)\wedge z_i\ip 3 z_p)\wedge & \{\ip 3,\ii 14\}\\
             \hspace{0.7cm}y_p<z_p) \\
&\\
\exists z_p(\neg\forall k_p(z_p<k_p\leftrightarrow x_i\ip 4 k_p)\wedge & \{\ip 4,\ii 14,<\} \\
\hspace{0.7cm}\neg\exists z_i((z_i\ii 14 x_i\vee x_i\ii 14 z_i)\wedge\forall k_p(z_p<k_p\leftrightarrow z_i\ip 4 k_p))\wedge &  \\
\hspace{0.7cm}y_p<z_p)\\
\end{array} \right. \]
\end{small}

\medskip

\begin{table}[t]
\small
\begin{center}
\begin{tabular}{|p{0.20\textwidth}|p{0.20\textwidth}|p{0.20\textwidth}|}
\hline
Proved & Implied & Deduction Chain\\
\hline
$\{\ip 0,<\}$            &$\{\ip0, \ip3 \}$        & $<$    \\
                         &$\{\ip2, \ip3 \}$        & $<,\ip 0$   \\
                         &$\{\ip2, \ip4, < \}$     & $\ip 0$\\
                         &$\{\ip2, \ii14, <\}$     & $\ip 0$\\
                         &$\{\ip2, \ii03, <\}$     & $\ip 0$\\
                         &$\{\ip2, \ii34, <\}$     & $\ip 0$\\
                         &$\{\ip3, \ii14 \}$       & $<,\ip 0$ \\
                         &$\{\ip3, \ii24, \ii03 \}$& $<,\ip 0$\\
                         &$\{\ip3, \ii03, \ii04 \}$& $<,\ip 0$\\
                         &$\{\ip3, \ii03, \ii44 \}$& $<,\ip 0$\\
                         &$\{\ip3, \ii34 \}$       & $<,\ip 0$\\ &$\{\ip4, \ii14, < \}$    &     $\ip 0$\\
                         &$\{\ip4, \ii24, \ii03, < \}$ & $\ip 0$\\
                         &$\{\ip4, \ii03, \ii04, < \}$ & $\ip 0$\\
                         &$\{\ip4, \ii03, \ii44, < \}$ & $\ip 0$\\
                         &$\{\ip4, \ii34, < \}$ &        $\ip 0$\\
\hline
\end{tabular}
\caption{The spectrum of the $\mathsf{mcs}(\ip 1)$. - Class: $\Lin$.}(Lemma~\ref{lem:tab:Ip1})\label{tab:Ip1}
\end{center}
\end{table}

\medskip

\noindent For the sake of the case \underline{$\{\ip 2,\ii{0}{3},<\}$}, suppose that $\mathcal{F} \models \varphi([a,b],c)$. So, there is a point $z_p$ such that it is not {\em during} $x_i$, and since it must be {\em during} an interval $z_i=[e,b]$, it can only be $d\le a$. Since $c<d$ we have $c<a$. Suppose, on the other hand, that $c<a$: to satisfy the requirements, it suffices to take $z_i=[c,b]$ and $z_p=a$. As for the set \underline{$\{\ip 3,\ii{1}{4}\}$}, first, recall its $<$-completeness; then, we state the existence of a point for which we can eliminate the possibility that $z_p$ {\em ends} $x_i$ by means of the first conjunct, and, after that, we can eliminate the possibility that $z_p$ is {\em during} or {\em after} $x_i$: if that were the case, there would be an interval $z_i$ {\em starting} or {\em
started by} $x_i$ such that $z_p$ is its right endpoint. The only possibility left is therefore that $z_p$ is less than or equal to the beginning point of $x_i$, and therefore the last conjunct guarantees that $x_i \ip 0 y_p$. Finally, the case \underline{$\{\ip 4,\ii 14,<\}$} is identical to the previous one, with the only difference that $x_i\ip 3 y_p$ can be expressed by asserting that $y_p<k_p$ and $x_i\ip 4 k_p$ are equivalent. As for the implied definitions, notice that in some cases we exploit the $\alli^{+}$-completeness results from Section~\ref{subsec:allimlin}.
\end{proof}

\begin{lem}\label{lem:tab:Ip1}
Tab.~\ref{tab:Ip1} is correct.
\end{lem}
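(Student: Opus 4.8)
The plan is to verify the two kinds of entries in Table~\ref{tab:Ip1} separately: the single \emph{Proved} set $\{\ip 0, <\}$ by an explicit definition, and each \emph{Implied} row by chasing its deduction chain back to this case. Since $\ip 1$ is reversible with reverse $\ip 3$, there is no \emph{Symmetric} column to treat here; the symmetric counterparts of these sets are exactly the $\ip 3$-complete ones and, as noted just before the lemma, do not appear in this table.

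For the Proved case I would exhibit
$$x_i \ip 1 y_p \leftrightarrow \neg(x_i \ip 0 y_p) \wedge \forall z_p(z_p < y_p \rightarrow x_i \ip 0 z_p),$$
whose idea is that the left endpoint of $x_i = [a,b]$ is precisely the least point that is \emph{not} before $x_i$. To check correctness, fix $x_i = [a,b]$ and reason by cases on $y_p$: the first conjunct excludes $y_p < a$, forcing $y_p \geq a$, while instantiating the universal conjunct at $z_p = a$ excludes $y_p > a$, since that would demand $a < a$. Hence the right-hand side holds if and only if $y_p = a$, i.e. if and only if $x_i \ip 1 y_p$; the converse reading is immediate.

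For the Implied rows I would simply follow the deduction chains, each of which yields a signature in which both $\ip 0$ and $<$ are definable, so that the Proved formula then applies. Whenever a chain lists $<$ I invoke the relevant $<$-completeness from Lemma~\ref{lem:tab:Lt} (for instance $\{\ip 0, \ip 3\}$ and $\{\ip 2, \ip 3\}$ both define $<$); whenever it lists $\ip 0$ I invoke the $\ip 0$-completeness of Lemma~\ref{lem:tab:Ip0}; and for the rows carrying Allen relations such as $\ii 14$, $\ii 03$, or $\ii 34$ I first recover the required interval-interval relation via the $\alli^{+}$-completeness results of Section~\ref{subsec:allimlin}, landing again in a signature where $\ip 0$ and $<$ are available. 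In every row the net effect is to define both $\ip 0$ and $<$, whence $\ip 1$ follows.

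I do not expect a real obstacle: the mathematical content is the one short formula, and the remainder is bookkeeping against results already established. The only point meriting care is that the characterisation ``the least point not before the interval'' must hold over \emph{every} linear order in $\Lin$, with no appeal to density or to successors; the case analysis above does exactly this, using only the trichotomy of $<$. Minimality of the listed complete sets is not settled here --- as explained in the Strategy, it will follow once the matching undefinability results of Section~\ref{sec:lin-incomp} are in place.
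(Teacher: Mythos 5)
Your proof is correct and takes essentially the same route as the paper: a single explicit definition of $\ip 1$ from $\{\ip 0, <\}$ (the paper uses the logically equivalent $\forall z_p(x_i\ip 0 z_p\leftrightarrow z_p<y_p)$, identifying the left endpoint via the set of points before the interval), with all remaining rows handled by the deduction chains and the previously established completeness results.
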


\begin{proof}
Only one new definition is needed:

\medskip

\begin{small}
\[
\begin{array}{llll}
x_i\ip 1 y_p &\leftrightarrow& \forall z_p(x_i\ip 0 z_p\leftrightarrow z_p<y_p) & \{\ip 0,<\}
\end{array}
 \]
\end{small}

\noindent This definition is quite straightforward: $[a,b]\ip 1 a$ if, and only if, the points $c$ less than $a$ are exactly the same points $c$ such that $[a,b]\ip 0 c$.
\end{proof}
\medskip

\begin{table}[t]
\small
\begin{center}
\begin{tabular}{|p{0.20\textwidth}|p{0.20\textwidth}|p{0.20\textwidth}|p{0.20\textwidth}|}
\hline
Proved & Symmetric & Implied & Deduction Chain\\
\hline
$\{\ip0, \ip4 \}$                &-                 &$\{\ip0, \ip3 \}$             & $\ip 1$ \\
$\{\ip1, \ip3 \}$                &-                 &$\{\ip0, \ii14, \ii24, < \}$ & $\ip 4$ \\
$\{\ip1, \ii03\}$                &$\{\ip3, \ii14\}$ &$\{\ip0, \ii14, \ii04, < \}$ & $\ip 4$ \\
                                 &                  &$\{\ip0, \ii14, \ii44, < \}$ & $\ip 4$ \\
                                 &                  &$\{\ip0, \ii03, < \}$        & $\ip 1$ \\
                                 &                  &$\{\ip0, \ii34, < \}$        & $\ii 03$ (Section~\ref{subsec:allimlin})\\
                                 &                  &$\{\ip1, \ip4 \}$            & $\ip 3$\\
                                 &                  &$\{\ip1, \ii14, \ii24 \}$    & $\ip 3$ \\
                                 &                  &$\{\ip1, \ii14, \ii04 \}$    & $\ip 3$ \\
                                 &                  &$\{\ip1, \ii14, \ii44 \}$    & $\ip 3$ \\
                                 &                  &$\{\ip1, \ii34 \}$           & $\ii 03$ (Section~\ref{subsec:allimlin})\\
                                 &                  &$\{\ip3, \ii24, \ii03 \}$    & $\ip 1$\\
                                 &                  &$\{\ip3, \ii03, \ii04 \}$    & $\ip 1$\\
                                 &                  &$\{\ip3, \ii03, \ii44 \}$    & $\ip 1$\\
                                 &                  &$\{\ip3, \ii34 \}$           & $\ii 14$ (Section~\ref{subsec:allimlin})       \\
                                 &                  &$\{\ip4, \ii14, < \}$        & $\ip 3$      \\
                                 &                  &$\{\ip4, \ii24, \ii03, < \}$ & $\ip 1$\\
                                 &                  &$\{\ip4, \ii03, \ii04, < \}$ & $\ip 1$\\
                                 &                  &$\{\ip4, \ii03, \ii44, < \}$ & $\ip 1$ \\
\hline
\end{tabular}
\caption{The spectrum of the $\mathsf{mcs}(\ip 2)$. - Class: $\Lin$.}(Lemma~\ref{lem:tab:Ip2})\label{tab:Ip2}
\end{center}
\end{table}

We complete this part by analyzing the definability for $\ip 2$. Notice that here we also take advantage of the results
of Section~\ref{subsec:allimlin}.

\begin{lem}\label{lem:tab:Ip2}
Tab.~\ref{tab:Ip2} is correct.
\end{lem}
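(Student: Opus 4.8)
The plan is to give defining formulas for the three \textbf{Proved} sets and then to obtain every remaining row of Table~\ref{tab:Ip2} by the symmetry Lemma~\ref{lem:symm} or by the deduction chain recorded in its last column. Two of the proved sets, $\{\ip 0,\ip 4\}$ and $\{\ip 1,\ip 3\}$, were already treated in Lemma~\ref{lem:tab:allmlin}; since the formulas given there use only interval--point relations they remain valid in the larger signature $\allr^+$, so I would simply cite that lemma. The single genuinely new construction is therefore a definition of $\ip 2$ from $\{\ip 1,\ii 03\}$.

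For that set I would take
\[
x_i\ip 2 y_p\leftrightarrow\exists z_i(z_i\ip 1 y_p\wedge z_i\ii 03 x_i).
\]
With $x_i=[a,b]$ and $y_p=c$, the conjunct $z_i\ip 1 y_p$ forces the left endpoint of the witness to be $c$, while $z_i\ii 03 x_i$ ($z_i$ \emph{finishes} $x_i$) forces $z_i$ to share the right endpoint $b$ and to have strictly larger left endpoint, i.e.\ $a<c$; the only possible witness is thus $z_i=[c,b]$, which exists exactly when $a<c<b$, that is, when $[a,b]\ip 2 c$. Conversely, $a<c<b$ makes $[c,b]$ a legitimate witness. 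As $\ip 2$ is self-symmetric and $\{\ip 1,\ii 03\}\sim\{\ip 3,\ii 14\}$, Lemma~\ref{lem:symm} immediately gives $FO+\{\ip 3,\ii 14\}\rightarrow\ip 2$, accounting for the \textbf{Symmetric} column.

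The \textbf{Implied} rows I would dispatch uniformly: each set is first shown to define the relation listed in its Deduction Chain entry --- one of $\ip 1,\ip 3,\ip 4,\ii 03,\ii 14$ --- by invoking the appropriate earlier result, after which the enlarged signature contains one of the four base sets $\{\ip 0,\ip 4\}$, $\{\ip 1,\ip 3\}$, $\{\ip 1,\ii 03\}$, $\{\ip 3,\ii 14\}$, whence $\ip 2$ follows. The short chains are immediate: $\{\ip 0,\ip 3\}$ defines $\ip 1$ (Lemma~\ref{lem:tab:allmlin}) and so contains $\{\ip 1,\ip 3\}$; $\{\ip 0,\ii 03,<\}$ defines $\ip 1$ from $\{\ip 0,<\}$ (Lemma~\ref{lem:tab:Ip1}) and contains $\{\ip 1,\ii 03\}$; while $\{\ip 1,\ii 34\}$ and $\{\ip 3,\ii 34\}$ define $\ii 03$, respectively $\ii 14$, from the $\alli^+$-completeness of $\{\ii 34\}$ (Section~\ref{subsec:allimlin}), landing in $\{\ip 1,\ii 03\}$, respectively $\{\ip 3,\ii 14\}$.

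The longer rows, built from $\ip 0$ or $\ip 4$ together with $<$ and some interval relations, thread several earlier results: one first produces $\ii 34$, hence all of $\alli^+$, from the interval pair (Lemma~\ref{lem:tab:allilin}), then extracts the required mixed relation from the $\mathsf{mcs}(\ip 0)$ analysis (Table~\ref{tab:Ip0}, at times through the order-dual Lemma~\ref{lem:symm}), and finally reaches $\ip 1$ via $\{\ip 0,<\}$ (Table~\ref{tab:Ip1}), ending in $\{\ip 0,\ip 4\}$ or $\{\ip 1,\ii 03\}$. The new definability equation is routine; I expect the real work, and the main obstacle, to be exactly this bookkeeping --- checking for each of the nineteen implied rows that the relation named in its Deduction Chain is genuinely obtainable from the relations present and that a base case then survives.
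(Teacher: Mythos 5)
Your proposal is correct and takes essentially the same route as the paper: your defining formula for $\{\ip 1,\ii 03\}$ is exactly the one the paper uses, and handling $\{\ip 1,\ip 3\}$ and $\{\ip 0,\ip 4\}$ by citing Lemma~\ref{lem:tab:allmlin} is legitimate (the paper restates essentially those same definitions instead of citing them, but definability is unaffected by enlarging the signature), with the symmetric and implied rows dispatched exactly as you describe via Lemma~\ref{lem:symm} and the deduction-chain column.
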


\begin{proof}
In this case, three new definitions are needed:


%
%

\medskip
\begin{small}
\[ x_i\ip 2 y_p \leftrightarrow \left\{ \begin{array}{ll}
         \exists z_ik_iz_pk_p(z_i\ip 1 z_p\wedge k_i\ip 3 k_p\wedge z_i\ip 3 y_p\wedge k_i\ip 1 y_p\wedge x_i\ip 1 z_p\wedge x_i\ip 3 k_p) & \{\ip 1,\ip 3\} \\
         & \\
         \exists z_ik_i(\forall z_p(z_i\ip 0 z_p\leftrightarrow x_i\ip 0 z_p)\wedge\forall k_p(k_i\ip 4 k_p\leftrightarrow x_i\ip 4 k_p)\wedge& \{\ip 0,\ip 4\}\\
\hspace{1cm}\neg\exists t_i(\forall t_p(t_i\ip0 t_p\leftrightarrow k_i\ip 0 t_p)\wedge\forall t_p(t_i\ip4 t_p\leftrightarrow z_i\ip 4 t_p))\wedge & \\
                      \hspace{1cm} \neg(z_i\ip 4 y_p)\wedge\neg(k_i\ip 0 y_p)) & \\
                      & \\
         \exists z_i(z_i \ii 03 x_i \wedge z_i \ip 1 y_p) & \{\ip 1,\ii 03\} \\
   \end{array} \right. \]
\end{small}
\medskip

\noindent For the $\ip 2$-completeness of \underline{$\{\ip 1,\ip 3\}$}, we begin by assuming that $\mathcal{F} \models \varphi([a,b],c)$. So, there must be two intervals $z_i,k_i$, the former {\em starting} at $a$, and the latter {\em ending} at $b$; the point $c$ must, at the same time, {\em end} $z_i$ (meaning that $a<c$) and {\em start} $k_i$ (implying that $c<b$), and we deduce $a<c<b$. If we assume $a<c<b$, it is enough to take $z_i=[a,c]$ and $k_i=[c,b]$ to satisfy the conditions. We move now to the $\ip 2$-completeness of \underline{$\{\ip 0,\ip 4\}$}, and let us assume, again, that $\mathcal{F} \models \varphi([a,b],c)$. The interval $z_i$ must {\em start} at $a$, and the interval $k_i$ must {\em end} at $b$. Moreover, the ending point of $k_i$ cannot be {\em after} the beginning point of $z_i$ (third conjunct). Now, in this situation, $y_p$ cannot be $a$ or {\em before} it (because of the fifth conjunct), and it cannot be $b$ or {\em after} it (because of the fourth conjunct), and the only possibility left out is $a<c<b$. If, conversely, we assume $a<c<b$, it is enough to take $z_i=[a,c]$ and $k_i=[c,b]$ to satisfy the entire set of conditions. Finally, the $\ip  2$-completeness of \underline{$\{\ip 1,\ii 03\}$} is very easy. If $\mathcal{F} \models \varphi([a,b],c)$, then there exists an interval $z_i$ that {\em finishes} $[a,b]$, and for its {\em starting} point, that is, $y_p=c$, it must hold that $a<c<b$. On the other hand, if $a<c<b$, we just take $z_i=[c,b]$ to satisfy $\varphi$.
\end{proof}

\medskip

We now focus on interval-interval relations other than equality, starting with $\ii 34$, which, being $\alli^+$-complete on its own, plays a very special
role. Most of the work has already been taken care of in Section~\ref{subsec:allimlin}. As a matter of fact, we already know
some subsets of $\alli^+$ that are $\ii 34$-complete, and we certainly have to define those subsets of $\allr^+\setminus\alli^+$
which are. It remains to establish whether other subsets of $\alli^+$, other than those seen in Section~\ref{subsec:allimlin}, become
$\ii 34$-complete within the language of $FO+\allr^+$: the fact that this is not the case will be a consequence of the results
shown in Section~\ref{sec:lin-incomp}.

\begin{table}[t]
\small
\begin{center}
\begin{tabular}{|p{0.20\textwidth}|p{0.20\textwidth}|p{0.20\textwidth}|p{0.20\textwidth}|}
\hline
Proved & Symmetric & Implied & Deduction Chain\\
\hline
$\{\ip 1,\ip 3 \}$               & -                 & $\{\ii 24,\ii 14 \}$& Section~\ref{subsec:allimlin}\\
$\{\ip 2,\ii 14 \}$              & $\{\ip2, \ii03 \}$& $\{\ii 14,\ii 44 \}$& Section~\ref{subsec:allimlin}\\
$\{\ip 4,\ii 14 \}$              & $\{\ip0, \ii03 \}$& $\{\ii 14,\ii 03 \}$& Section~\ref{subsec:allimlin}\\
$\{\ip0, \ip2 \}$                & $\{\ip2, \ip4 \} $& $\{\ii 14, \ii04 \}$& Section~\ref{subsec:allimlin}\\
                                 &                   & $\{\ii24, \ii03 \}$ & Section~\ref{subsec:allimlin}\\
                                 &                   & $\{\ii03, \ii04 \}$ & Section~\ref{subsec:allimlin}\\
                                 &                   & $\{\ii03, \ii44 \}$ & Section~\ref{subsec:allimlin}\\
                                 &                   & $\{\ip0, \ip3 \}$   & $\ip 1$ \\
                                 &                   & $\{\ip0, \ip4 \}$   & $\ip 2$ \\
                                 &                   & $\{\ip1, \ip2 \}$   & $\ip 0$ \\
                                 &                   & $\{\ip1, \ip4 \}$   & $\ip 3$ \\
                                 &                   & $\{\ip1, \ii03 \}$  & $\ip 2$ \\
                                 &                   & $\{\ip2, \ip3 \}$   & $\ip 4$ \\
                                 &                   & $\{\ip3, \ii14 \}$  & $\ip 2$ \\
\hline
\end{tabular}
\caption{The spectrum of the $\mathsf{mcs}(\ii 34)$. - Class: $\Lin$.}(Lemma~\ref{lem:tab:Ii34})\label{tab:Ii34}
\end{center}
\end{table}

\begin{lem}\label{lem:tab:Ii34}
Tab.~\ref{tab:Ii34} is correct.
\end{lem}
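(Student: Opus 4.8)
The plan is the one already used for the earlier tables: for every set $S$ in the \emph{Proved} column I will exhibit an explicit $FO+S$ formula $\varphi(x_i,y_i)$ and verify that, in an arbitrary point-interval structure $\mathcal F$, $\mathcal F\models\varphi([a,b],[c,d])$ if and only if $b=c$ (which is exactly $[a,b]\ii34[c,d]$). Since $\ii34$ is symmetric (it is not among the reversible relations), Lemma~\ref{lem:symm} upgrades each such result to its symmetric set, producing the \emph{Symmetric} column. The \emph{Implied} column is then discharged entry by entry along the indicated \emph{Deduction Chain}: the rows marked ``Section~\ref{subsec:allimlin}'' are the subsets of $\alli^+$ already shown $\ii34$-complete in Lemma~\ref{lem:tab:allilin}, while each remaining row first defines the auxiliary relation named in its chain — using the completeness results of Lemmas~\ref{lem:tab:Ip0}, \ref{lem:tab:Ip1} and~\ref{lem:tab:Ip2} — and thereby reduces to a Proved or Symmetric set (for instance, from $\{\ip0,\ip3\}$ one defines $\ip1$ and falls back on $\{\ip1,\ip3\}$; from $\{\ip2,\ip3\}$ one defines $\ip4$ and falls back on the symmetric set $\{\ip2,\ip4\}$).

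The single easy case is $\{\ip1,\ip3\}$, for which
\[
x_i\ii34 y_i\ \leftrightarrow\ \exists z_p\,(x_i\ip3 z_p\wedge y_i\ip1 z_p),
\]
since $x_i\ip3 z_p$ forces $z_p$ to be the right endpoint $b$ of $x_i$, $y_i\ip1 z_p$ forces $z_p$ to be the left endpoint $c$ of $y_i$, and these coincide precisely when $b=c$.

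For the remaining Proved sets $\{\ip2,\ii14\}$, $\{\ip4,\ii14\}$ and $\{\ip0,\ip2\}$ I would not try to name the coincident point by a single atom, since none of these signatures can pick out an endpoint of an interval directly. For the two sets containing the \emph{starts} relation $\ii14$ the strategy is to recover one further interval-interval relation — most conveniently \emph{overlaps} $\ii24$ — from which $\ii34$ then follows by the $\alli^+$-completeness of $\{\ii14,\ii24\}$ established in Lemma~\ref{lem:tab:allilin}. Here the common left endpoint of the intervals that \emph{start} $y_i$ serves as the handle on the left endpoint of $y_i$, and the available point relation ($\ip2$ ``inside'', respectively $\ip4$ ``after'') is used to place that endpoint relative to $x_i$ and to place the right endpoint of $x_i$ relative to $y_i$. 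For $\{\ip0,\ip2\}$, where no interval-interval relation is given, I would instead express $b=c$ directly as the conjunction of ``$y_i$ begins no earlier than $x_i$ ends'' and ``$y_i$ begins no later than $x_i$ ends'', each rendered through the strictly-interior points ($\ip2$) and the strictly-earlier points ($\ip0$) of the two intervals.

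The main obstacle is exactly the correctness of these last three formulas over \emph{all} linear orders: the natural ``interior-point'' characterisations of endpoint coincidence tacitly assume that an interval has points in its interior and that there is room immediately beyond an endpoint, both of which fail in discrete orders and at minimal or maximal points. The delicate part of the argument is therefore to add the extra clauses that make the formulas robust in these degenerate configurations (empty interiors, endpoints equal to the greatest or least element) and to check both directions of the biconditional in each such case. Minimality of the listed complete sets is not addressed here; as announced in the running strategy, it will follow from the undefinability results of Section~\ref{sec:lin-incomp}.
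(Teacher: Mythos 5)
Your skeleton matches the paper's: an explicit formula for each \emph{Proved} set, Lemma~\ref{lem:symm} for the \emph{Symmetric} column, and the deduction chains for the \emph{Implied} column; your formula for $\{\ip1,\ip3\}$ is exactly the paper's. But for the three remaining \emph{Proved} sets the proposal contains no formulas and no verification --- it names a strategy and then explicitly defers ``the delicate part,'' which is precisely where the content of this lemma lives. The paper's definitions for $\{\ip2,\ii14\}$ and $\{\ip4,\ii14\}$ each take several carefully chosen conjuncts whose correctness over arbitrary (including three-point and bounded) linear orders requires a genuine case analysis, and the $\{\ip0,\ip2\}$ case needs its own argument. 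As it stands, the substantive claims of the table are asserted, not proved.

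Moreover, the intermediate target you choose for the two $\ii14$-based sets is questionable. You propose to first define \emph{overlaps} $\ii24$ and then invoke the $\ii34$-completeness of $\{\ii14,\ii24\}$ from Lemma~\ref{lem:tab:allilin}. The paper instead defines the \emph{weaker} relation $\ii34\cup\ii44$ (i.e.\ ``$b\le c$'') and recovers $\ii34$ by the minimality trick of Section~\ref{subsec:allimlin} ($x_i\,\ii34\,y_i$ iff $x_i(\ii34\cup\ii44)y_i$ and no $z_i$ sits strictly between them in that relation). This is not a cosmetic difference: $\ii34\cup\ii44$ is a strictly easier target than $\ii24$, since defining \emph{overlaps} from $\{\ip2,\ii14\}$ or $\{\ip4,\ii14\}$ already requires locating the right endpoint of $x_i$ strictly inside $y_i$, which is essentially the same difficulty as defining \emph{meets} itself --- and indeed the paper obtains $\ii24$ from these sets only \emph{via} $\ii34$ (Table~\ref{tab:Ii24}), not the other way around. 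Your route for $\{\ip0,\ip2\}$ (encoding $b=c$ as two inequalities through interior and before-points) is likewise fragile when $x_i$ has empty interior or when $c\le a$; the paper sidesteps this by defining $\ii14$ from $\{\ip0,\ip2\}$ (same points before, plus an interior point of one not interior to the other) and then reducing to the already-settled set $\{\ip2,\ii14\}$. To repair the proposal you would need either to produce and verify formulas along your own route, or to switch to the weaker target $\ii34\cup\ii44$ as the paper does.
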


\begin{proof}
The fact that the set \underline{$\{\ip 1,\ip 3\}$} is $\ii 34$-complete depends on the following easy definition:

\medskip

\begin{small}
\[
\begin{array}{llll}
x_i\ii 34 y_i &\leftrightarrow& \exists z_p(x_i\ip 3 z_p\wedge y_i\ip 1 z_p) & \{\ip 1,\ip 3\}
\end{array}
 \]
\end{small}

\medskip
\noindent The second two sets in the leftmost column can be proven $\ii34$-complete by means of defining, as in Section~\ref{subsec:allimlin}, the weaker relation $\ii 34\cup\ii 44$:

\medskip

\begin{small}
\[ x_i\ii 34\cup\ii 44 y_i \leftrightarrow \left\{ \begin{array}{ll}
                                   \exists z_i(x_i\ii14 z_i\wedge\forall k_p(z_i\ip 4 k_p\leftrightarrow y_i\ip 4 k_p))\wedge & \{\ip 4,\ii 14\}\\
                                   \neg\exists z_i(z_i\ii14 y_i\wedge\forall k_p(z_i\ip 4 k_p\leftrightarrow x_i\ip 4 k_p)) &\\
                                   &\\
                                   \exists z_pz_i(x_i\ii14 z_i\wedge z_i\ip 2 z_p\wedge \neg(x_i\ip 2 z_p)\wedge & \{\ip 2,\ii 14\}\\
                                   \hspace{0.8cm}\forall k_i(x_i\ii 14 k_i\rightarrow k_i\ip 2 z_p)\wedge & \\
                                   \hspace{0.8cm}\forall k_i(y_i\ii 14 k_i\rightarrow \neg(k_i\ip 2 z_p))\wedge & \\
                                   \hspace{0.8cm}\neg(y_i\ip 2 z_p)), & \\
   \end{array} \right. \]
\end{small}

\medskip

\begin{table}[t]
\small
\begin{center}
\begin{tabular}{|p{0.20\textwidth}|p{0.20\textwidth}|p{0.20\textwidth}|}
\hline
Proved & Implied & Deduction Chain\\
\hline
$\{\ip 0,\ip2 \}$                &$\{\ip0, \ip3 \}$& $\ip 2$  \\
$\{\ip 0,\ip4 \}$                &$\{\ip0, \ii03, <\}$& $\ip 2$  \\
                                 &$\{\ip1, \ip2 \}$& $\ip 0$  \\
                                 &$\{\ip1, \ip3 \}$& $\ip 0,\ip4$  \\
                                 &$\{\ip1, \ip4 \}$& $\ip 0$  \\
                                 &$\{\ip1, \ii03\}$& $\ii 34$  \\
                                 &$\{\ip2, \ip3 \}$& $\ip 0$  \\
                                 &$\{\ip2, \ip4 \}$& $\ip 0$  \\
                                 &$\{\ii34 \}$      &Section~\ref{subsec:allimlin}\\
                                 &$\{\ip2, \ii03\}$ &$\ii34$  \\
                                 &$\{\ii24, \ii03\}$&Section~\ref{subsec:allimlin}\\
                                 &$\{\ii03, \ii04\}$&Section~\ref{subsec:allimlin}\\
                                 &$\{\ii03, \ii44\}$&Section~\ref{subsec:allimlin}\\
\hline
\end{tabular}
\caption{The spectrum of the $\mathsf{mcs}(\ii 14)$. - Class: $\Lin$.}(Lemma~\ref{lem:tab:Ii14}.)\label{tab:Ii14}
\end{center}
\end{table}

\noindent As for \underline{$\{\ip 4,\ii 14\}$}, assume, first, that $\mathcal{F} \models \varphi([a,b],[c,d])$. We wish to show that $\mathcal{F} \models [a,b]\ii 34\cup\ii 44[c,d]$, i.e., that $b \leq c$. Suppose, by way of contradiction, that $c < b$.  The only possible interval which satisfies the first conjunct is $[a,d]$. Specifically we have $b < d$. Now the interval $[c,b]$ falsifies the second conjunct. Conversely assume that $\mathcal{F} \models [a,b]\ii 34\cup\ii 44[c,d]$, i.e., $a<b \leq c<d$. Then the interval $z_i = [a,d]$ witnesses the first conjunct of the definition, and, for any $z_i$, if $z_i$ {\em starts} $y_i$, then it cannot {\em end} at $b$, proving that the second conjunct also holds. As for \underline{$\{\ip 2,\ii 14\}$} assume, first, that $\mathcal{F} \models \varphi([a,b],$ $[c,d])$. We wish to show that $\mathcal{F} \models [a,b]\ii 34\cup\ii 44[c,d]$, i.e., that $b \leq c$. Because of the first four conjuncts, there must be an interval {\em started by} $x_i$, and $z_p$ must be placed as the {\em ending} point of $[a,b]$, that is, $z_p=b$; then, $y_i$ must start at $z_p$ or after it, otherwise it would happen that either $y_i\ip 2 z_p$ or some $k_i$ {\em started} by $y_i$ is such that $k_i\ip 2 z_p$. On the other hand, assume that $\mathcal{F} \models [a,b]\ii 34\cup\ii 44[c,d]$, i.e., that $b \leq c<d$. In this case, it suffices to take $z_p=b$ and $z_i=[a,d]$ to satisfy all conjuncts. Finally, the $\ii34$-completeness of \underline{$\{\ip 0,\ip 2\}$} can be seen as follows:

\medskip

\begin{small}
\[
\begin{array}{llll}
x_i\ii 14 y_i &\leftrightarrow& \forall k_p(x_i\ip 0 k_p\leftrightarrow y_i\ip 0 k_p)\wedge\exists k_p(y_i\ip 2 k_p \wedge \neg(x_i\ip 2 k_p)) &  \{\ip0,\ip2\}
\end{array}
 \]
\end{small}

\noindent which is immediate to prove, as we stipulate that $x_i$ {\em starts} $y_i$ if and only if  $x_i$ and $y_i$ have the same points {\em before} them and there exists a point (the {\em ending} point of $x_i$) that it is {\em during} $y_i$ and not {\em during} $x_i$, and which allows us to obtain the result via the set $\{\ip 2,\ii 14 \}$.
\end{proof}

\medskip


The only non-symmetric interval-interval relation, $\ii 14$, does not present particularly difficult problems.

\begin{lem}\label{lem:tab:Ii14}
Tab.~\ref{tab:Ii14} is correct.
\end{lem}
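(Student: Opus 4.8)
The plan is to observe from the table that only the two sets in the \emph{Proved} column, namely $\{\ip 0,\ip 2\}$ and $\{\ip 0,\ip 4\}$, require a genuine definability argument; each of the remaining thirteen \emph{Implied} rows is discharged by the chain of earlier definitions recorded in its \emph{Deduction Chain} entry. Thus the real content is to produce two $FO+S$-formulas defining the \emph{starts} relation $\ii 14$ and to verify them over the class $\Lin$ of all linear orders, where no structural assumptions (density, unboundedness) are available.

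First I would handle the new case $\{\ip 0,\ip 4\}$ with the formula
\[
x_i\ii 14 y_i \leftrightarrow \forall k_p(x_i\ip 0 k_p\leftrightarrow y_i\ip 0 k_p)\wedge\exists k_p(x_i\ip 4 k_p\wedge\neg(y_i\ip 4 k_p)).
\]
Writing $x_i=[a,b]$ and $y_i=[c,d]$, the first conjunct equates the sets of points lying strictly to the left of the two intervals and so forces $a=c$ (this remains correct even when $a$ is the least point, both sides then being vacuous exactly when $c=a$). The second conjunct asserts a point $k_p$ with $b<k_p\le d$, which holds precisely when $b<d$; crucially the right endpoint $d$ of $y_i$ is itself such a witness, so the existential is satisfiable in an arbitrary linear order. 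Together $a=c$ and $b<d$ give exactly $x_i\ii 14 y_i$, and the converse is immediate.

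For $\{\ip 0,\ip 2\}$ I would recall that the required definition
\[
x_i\ii 14 y_i \leftrightarrow \forall k_p(x_i\ip 0 k_p\leftrightarrow y_i\ip 0 k_p)\wedge\exists k_p(y_i\ip 2 k_p\wedge\neg(x_i\ip 2 k_p))
\]
has already been exhibited and checked inside the proof of Lemma~\ref{lem:tab:Ii34}, where it served to lift $\{\ip 0,\ip 2\}$ to $\ii 34$; here the witness for the second conjunct is the right endpoint $b$ of $x_i$. With both definitions available, the \emph{Implied} rows follow mechanically: those with chains $\ip 0$, $\ip 2$ or $\ip 4$ first define the indicated interval-point relation via Lemmas~\ref{lem:tab:Ip0}, \ref{lem:tab:Ip1}, \ref{lem:tab:Ip2} (invoking Lemma~\ref{lem:symm} for $\ip 4$, the reverse of $\ip 0$) and thereby reduce to one of the two proved sets, while those labelled $\ii 34$ or Section~\ref{subsec:allimlin} rest on the $\alli^+$-completeness of $\{\ii 34\}$ and of the other interval-interval sets established in Lemma~\ref{lem:tab:allilin}, from which the Allen relation $\ii 14$ is definable.

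I do not anticipate a serious obstacle, since both formulas are short and the implied cases amount to bookkeeping. The two points that demand care are (i) ensuring that every witness used in an existential conjunct is an actual endpoint of one of the intervals, so that the argument is valid in bounded and non-dense orders alike, and (ii) checking that each deduction chain terminates at a set already proved $\ii 14$-complete and does not depend circularly on the present lemma. Minimality of these complete sets is not argued here but follows later from the undefinability results of Section~\ref{sec:lin-incomp}.
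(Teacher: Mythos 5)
Your proposal is correct and follows essentially the same route as the paper: the $\{\ip 0,\ip 2\}$ case is discharged by citing the definition already given in the proof of Lemma~\ref{lem:tab:Ii34}, the $\{\ip 0,\ip 4\}$ case uses exactly the formula $\forall k_p(x_i\ip 0 k_p\leftrightarrow y_i\ip 0 k_p)\wedge\exists k_p(x_i\ip 4 k_p\wedge\neg(y_i\ip 4 k_p))$ that the paper states (and which you verify in slightly more detail than the paper's ``it is easily verified''), and the implied rows are handled via their deduction chains as in the paper's general scheme.
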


\begin{proof}
The fact that the set \underline{$\{\ip 0,\ip 2\}$} is $\ii 14$-complete has already been
proved as a necessary step to delineate the $\mathsf{mcs}(\ii34)$. As for the set \underline{$\{\ip 0,\ip 4\}$}, it is easily verified that:

\medskip

\begin{small}
\begin{align*}
\begin{array}{llll}
x_i\ii 14 y_i &\leftrightarrow& \forall z_p(x_i\ip0 z_p\leftrightarrow y_i\ip 0 z_p)\wedge \exists z_p(x_i\ip 4 z_p \wedge \neg(y_i\ip 4 z_p)), & \{\ip 0,\ip 4\}
\end{array}
\tag*{\qEd}
\end{align*}
\end{small}

\def\popQED{}
\end{proof}

\medskip


\begin{lem}\label{lem:tab:Ii24}
Tab.~\ref{tab:Ii24} is correct.
\end{lem}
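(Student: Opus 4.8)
The plan is to reuse verbatim the three-column template of Lemmas~\ref{lem:tab:Ii34} and~\ref{lem:tab:Ii14}. For each set $S$ in the \emph{Proved} column of Table~\ref{tab:Ii24} I would write down an explicit $FO+S$ formula $\varphi(x_i,y_i)$ and check, in a generic point-interval structure $\mathcal F=\langle\mathbb D,\mathbb I(\mathbb D)\rangle$, that $\mathcal F\models\varphi([a,b],[c,d])$ iff $a<c<b<d$, which is precisely $[a,b]\ii{2}{4}[c,d]$. The organizing observation is that an overlap is characterised by two independent endpoint conditions: the left endpoint of $y_i$ lies strictly inside $x_i$ (an $\ip{2}$-type requirement, forcing $a<c<b$) while the right endpoint of $y_i$ lies beyond the right endpoint of $x_i$ (an $\ip{4}$-type requirement, forcing $b<d$). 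Whenever the primitives of $S$ can name both endpoints of $y_i$ and locate them in the partition induced by $x_i$, a direct definition modelled on $\exists z_p(y_i\ip{1}z_p\wedge x_i\ip{2}z_p)\wedge\exists w_p(y_i\ip{3}w_p\wedge x_i\ip{4}w_p)$ works, and its correctness is a routine case analysis ruling out the competing configurations \emph{meets}, \emph{before}, \emph{during}, \emph{starts}, and \emph{finishes}.

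The \emph{Symmetric} column is free: since $\ii{2}{4}$ is symmetric but not self-symmetric, Lemma~\ref{lem:symm} gives $FO+S\rightarrow\ii{2}{4}$ iff $FO+S'\rightarrow\ii{2}{4}$ for any $S\sim S'$, so each symmetric partner of a \emph{Proved} set inherits $\ii{2}{4}$-completeness. The \emph{Implied} column I would clear through the recorded \emph{Deduction Chain} entries, each of which names the intermediate relation to be defined first. Two kinds of chain occur: those that route through Section~\ref{subsec:allimlin}, where a set first defines $\ii{3}{4}$ (or another $\alli^{+}$ relation) and hence $\ii{2}{4}$ by the $\alli^{+}$-completeness of $\{\ii{3}{4}\}$; and those that route through Lemmas~\ref{lem:tab:Ip0}, \ref{lem:tab:Ip1}, and~\ref{lem:tab:Ip2}, which recover a missing $\ip{k}$ primitive so that one of the \emph{Proved} definitions can then be applied.

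The hard part will be the \emph{Proved} sets that do not expose both endpoints of $y_i$ directly, where I expect to need the same indirect device used in Lemma~\ref{lem:tab:allilin}: first define a coarser auxiliary relation (in the spirit of $\ii{3}{4}\cup\ii{4}{4}$) or synthesise the unavailable endpoint predicate from $<$ and the available $\ip{k}$'s --- e.g.\ recovering $\ip{3}$ as the unique point that is above every point inside $x_i$ yet below every point strictly after it --- and only then assemble the overlap condition. The delicate points in the verifications are the boundary cases over arbitrary (possibly bounded, non-dense) linear orders: I must confirm the existential witnesses exist even when $b$ or $d$ is extremal and that the universally quantified clauses are not vacuously satisfied by degenerate configurations, exactly as in the endpoint-of-the-order arguments running through the proof of Lemma~\ref{lem:tab:Lt}.
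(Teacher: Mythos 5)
Your overall architecture (explicit formulas for the \emph{Proved} column, Lemma~\ref{lem:symm} for the \emph{Symmetric} column since $\ii24$ is symmetric but not self-symmetric, and deduction chains through $\ii34$ and Section~\ref{subsec:allimlin} for the \emph{Implied} column) matches the paper, and your least-upper-bound idea for synthesising the right endpoint of $x_i$ from $\ip2$ and $<$ is essentially the paper's definition for $\{\ip2,<\}$. But your organizing heuristic --- locate the left endpoint of $y_i$ inside $x_i$ and its right endpoint after $x_i$ --- applies to \emph{none} of the three sets that actually have to be proved. The \emph{Proved} column of Tab.~\ref{tab:Ii24} is $\{\ip2,<\}$, $\{\ip0,\ii44,\ii04\}$ and $\{\ip1,\ii44,\ii04\}$; every set that can name both endpoints of $y_i$ and place them in the partition induced by $x_i$ (such as $\{\ip1,\ip3\}$ or $\{\ip0,\ip2\}$) already defines $\ii34$ and therefore sits in the \emph{Implied} column. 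So the template $\exists z_p(y_i\ip{1}z_p\wedge x_i\ip{2}z_p)\wedge\exists w_p(y_i\ip{3}w_p\wedge x_i\ip{4}w_p)$ never gets used, and the real work is concentrated exactly where your proposal is vaguest.

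The genuine gap is $\{\ip1,\ii44,\ii04\}$ and $\{\ip0,\ii44,\ii04\}$. Here there is no $<$, no second $\ip{k}$, and $\ip1$ (resp.\ $\ip0$) speaks only about left endpoints, so neither of your two fallback devices --- defining a coarser union relation in the spirit of $\ii34\cup\ii44$, or recovering a missing endpoint predicate from $<$ and the available $\ip{k}$'s --- is available. The paper instead uses an elimination argument: it posits auxiliary intervals $z_i,t_i,w_i$ (and $s_i$) pinned to share left endpoints with $x_i$ and $y_i$ via $\ip1$ (resp.\ $\ip0$), imposes $z_i\ii44 t_i$ together with a large conjunction forbidding $\ii04$ and $\ii44$ between almost all other pairs of the named intervals, and then rules out, case by case, each of $=_i$, $\ii14$, $\ii04$, $\ii44$, both orientations of $\ii34$ and $\ii03$, and the reverse overlap, leaving $x_i\ii24 y_i$ as the only consistent configuration. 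Without identifying this (or some equivalent) mechanism, your plan does not yield a definition of $\ii24$ from these two signatures, and the lemma is not established.
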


\begin{table}[t]
\small
\begin{center}
\begin{tabular}{|p{0.20\textwidth}|p{0.20\textwidth}|p{0.20\textwidth}|p{0.20\textwidth}|}
\hline
Proved & Symmetric & Implied & Deduction Chain\\
\hline
$\{\ip 2,<\}$                    &-                           & $\{\ip0,\ip2 \}$ & $\ii34$ \\
$\{\ip 0,\ii 44,\ii04\}$         &$\{\ip 4,\ii 44,\ii04\}$    & $\{\ip0,\ip3 \}$ & $\ii34$ \\
$\{\ip 1,\ii 44,\ii04\}$         &$\{\ip 3,\ii 44,\ii04\}$    & $\{\ip0,\ip4 \}$ & $\ii34$ \\
                                 &                            & $\{\ip0,\ii03\}$ & $\ii34$ \\
                                 &                            & $\{\ip1,\ip2 \}$ & $\ii34$ \\
                                 &                            & $\{\ip1,\ip3 \}$ & $\ii34$ \\
                                 &                            & $\{\ip1,\ip4 \}$ & $\ii34$ \\
                                 &                            & $\{\ip1,\ii03\}$ & $\ii34$ \\
                                 &                            & $\{\ip2,\ip3 \}$ & $\ii34$ \\
                                 &                            & $\{\ip2,\ip4 \}$ & $\ii34$ \\
                                 &                            & $\{\ip2,\ii14\}$ & $\ii34$ \\
                                 &                            & $\{\ip2,\ii03\}$ & $\ii34$ \\
                                 &                            & $\{\ip3,\ii14\}$ & $\ii34$ \\
                                 &                            & $\{\ip4,\ii14 \}$& $\ii34$ \\
                                 &                            & $\{\ii14,\ii03\}$& $\ii34$ \\
                                 &                            & $\{\ii14,\ii04\}$& $\ii34$ \\
                                 &                            & $\{\ii14,\ii44\}$& $\ii34$ \\
                                 &                            & $\{\ii03,\ii44\}$& $\ii34$ \\
                                 &                            & $\{\ii04,\ii 03\}$ & $\ii34$ \\
                                 &                            & $\{\ii34\}$ & Section~\ref{subsec:allimlin}\\
\hline
\end{tabular}
\caption{The spectrum of the $\mathsf{mcs}(\ii 24)$. - Class: $\Lin$.}(Lemma~\ref{lem:tab:Ii24}.)\label{tab:Ii24}
\end{center}
\end{table}

\begin{proof}
Consider the following definitions:

\medskip

\begin{small}
\[ x_i\ii 24 y_i \leftrightarrow \left\{ \begin{array}{ll}
                                   \exists z_p k_p(x_i \ip 2 z_p \wedge \forall t_p (x_i \ip 2 t_p \rightarrow t_p< k_p) & \{\ip 2,<\}\\
                                   \hspace{1cm} {} \wedge \forall s_p (\forall t_p (x_i \ip 2 t_p \rightarrow t_p< s_p) \rightarrow \neg (s_p < k_p))\\
                                    \hspace{1cm} {} \wedge y_i \ip 2 k_p \wedge \neg(y_i \ip 2 z_p)) & \\
                                    & \\
                                    \exists z_i t_i w_i (\forall k_p (x_i \ip 0 k_p \leftrightarrow z_i \ip 0 k_p) \wedge \forall k_p (y_i \ip 0 k_p \leftrightarrow w_i \ip 0 k_p) \wedge {} & \{\ip 0,\ii 44,\ii 04\}\\
                                    \hspace{0.7cm}  z_i \ii 44 t_i \wedge\exists s_i(w_i\ii 04 s_i\wedge\neg(y_i\ii 04 s_i))\wedge{} \\
                                    \hspace{0.7cm} \neg \forall k_p (x_i \ip 0 k_p \leftrightarrow y_i \ip 0 k_p) \wedge {}\\
                                    \hspace{0.7cm} \displaystyle{\bigwedge_{\substack{u_i, v_i \in \{w_i, x_i, y_i, z_i, t_i\}\\ \{u_i, v_i \} \neq \{z_i, t_i\}}}  (\neg(u_i \ii 04 v_i) \wedge \neg(u_i\ii 44 v_i))}) & \\
&\\
 \exists z_i t_i w_i (\exists k_p (x_i \ip 1 k_p \wedge z_i \ip 1 k_p) \wedge \exists k_p (y_i \ip 1 k_p \wedge w_i \ip 1 k_p) \wedge {} & \{\ip 1,\ii 44,\ii 04\}\\
                                    \hspace{0.7cm}  z_i \ii 44 t_i \wedge\exists s_i(w_i\ii 04 s_i\wedge\neg(y_i\ii 04 s_i))\wedge{} \\
                                    \hspace{0.7cm} \neg \exists k_p (x_i \ip 1 k_p \wedge y_i \ip 1 k_p) \wedge {}\\
                                   \hspace{0.7cm} \displaystyle{\bigwedge_{\substack{u_i, v_i \in \{w_i, x_i, y_i, z_i, t_i\}\\ \{u_i, v_i \} \neq \{z_i, t_i\}}}  (\neg(u_i \ii 04 v_i) \wedge \neg(u_i\ii 44 v_i))}) & \\
   \end{array} \right. \]
\end{small}

\medskip

\noindent Let us start proving the correctness of the case \underline{$\{\ip 2,<\}$}. Suppose that $\mathcal{F} \models [a,b]\ii 24 [c,d]$. Then, by definition, $a<c<b<d$; by setting $z_p=c$ and $k_p=b$ we satisfy all requirements. Conversely, suppose that $\mathcal{F} \models \varphi([a,b],[c,d])$. First, observe that $x_i=[a,b]$ must have and internal point $z_p$. Then, the only way to place $k_p$ is by having that $k_p$ is the smallest point which is greater than every internal point of $x_i$, that is, $k_p$ must {\em end} $x_i$. Now, $y_i$ must {\em contain} $k_p$ (so, it must {\em end} after it), but not $z_p$ (so, it must {\em start} after $z_p$ or on it). Consider, now, \underline{$\{\ip 1,\ii 44,\ii04\}$}. Suppose that $\mathcal{F} \models [a,b]\ii 24 [c,d]$. Then, by definition, $a<c<b<d$; by setting $z_i=[a,c]$, $w_i = [c,b]$, $t_i=[b,d]$, and $s_i=[a,d]$, we satisfy all requirements. Suppose, now, that $\mathcal{F} \models \varphi([a,b],[c,d])$. We prove that $[a,b]\ii 24[c,d]$ by eliminating every other possibility. First, observe that they cannot have their {\em starting} point in common (third line), and they cannot {\em contain} each other nor can one of them be {\em after} the other (fourth line). Thus, we eliminated $=_i,\ii 14,\ii 04,\ii 44$ among the possible relation between $x_i$ and $y_i$. Assume, by way of contradiction, that $y_i\ii34 x_i$. The second and fourth conjuncts together imply that $w_i \ii14 y_i$, and since $w_i$ (resp., $z_i$) has the same {\em starting} point as $y_i$ (resp., $x_i$), and since
$z_i\ii 44 t_i$, obviously $w_i\ii 44 t_i$, which is in contradiction with the fourth line. Assume now that $x_i\ii34 y_i$. First, observe that
$z_i$ cannot be shorter than $x_i$, otherwise it would be $z_i\ii 44 w_i$ (contradicting the fourth line); but then, $z_i$ cannot be equal to $x_i$
or longer than it, as in that case $x_i\ii 44 t_i$ (contradicting, once again, the requirement in the fourth line). Now, suppose that $x_i\ii 03 y_i$.
Let $w_i=[e,f]$, where $e=c$. Clearly, to comply with the fourth line, it cannot be that $f<a$ or that $f>b(=d)$. If, on the other hand, $a\leq f\leq b$, then it is impossible to place $z_i=[g,h]$ ($g=a$), as it must be that $h<f$ (since $w_i\ii 44 t_i$ leads to a contradiction), but, then, $z_i\ii 04 y_i$, in
contradiction with the fourth line. Therefore, we cannot place $w_i$, which means that we have a contradiction. Next suppose that $y_i\ii 03 x_i$. Let $w_i=[e,f]$, where $e=c$; clearly, $f\ge d$, as, otherwise, we would have that $x_i$ {\em contains} $w_i$ (forbidden by the fourth line); but, then, it is impossible to find any $s_i$ that {\em contains} $w_i$ and not $y_i$ (second line), leading to a contradiction. Finally, if $y_i\ii 24 x_i$, it is impossible to place $w_i,z_i$ and $t_i$ in such a way that $z_i\ii 44 t_i$, $\neg(z_i\ii 04 w_i)$, and $\neg(w_i\ii 04 t_i)$. Having eliminated every other possibility,
it must be that $x_i\ii 24 y_i$. Finally, observe that he case \underline{$\{\ip 0,\ii 44,\ii04\}$} can be dealt with the very same argument, only having $\ip 0$ playing the same role as $\ip 1$. The corresponding definition is therefore almost identical to the one of the previous case.
\qedhere

%

\end{proof}




\begin{lem}\label{lem:tab:Ii04}
Tab.~\ref{tab:Ii04} is correct.
\end{lem}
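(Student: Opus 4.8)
The plan is to reuse verbatim the template of the preceding interval-interval lemmas (Lemmas~\ref{lem:tab:Ii34}--\ref{lem:tab:Ii24}). For every set $S$ listed in the \emph{Proved} column I would write down an explicit $FO+S$ formula $\varphi(x_i,y_i)$ and check, over an arbitrary concrete point-interval structure $\mathcal F=\langle\mathbb D,\mathbb I(\mathbb D)\rangle$, that $\mathcal F\models\varphi([a,b],[c,d])$ holds exactly when $[a,b]\ii04[c,d]$, i.e.\ when $c<a$ and $b<d$. Because $\ii04$ is self-symmetric, Lemma~\ref{lem:symm} turns each such result into the $\ii04$-completeness of the symmetric set $S'$ at no extra cost, which is precisely what the \emph{Symmetric} column records. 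Finally, the \emph{Implied} rows are discharged along the stated deduction chains; the overwhelming majority of them route through the $\alli^+$-completeness of $\{\ii34\}$ from Section~\ref{subsec:allimlin}, since any set already shown (in the tables for $\ip0$, $\ip1$, $\ip2$ and $\ii34$) to define \emph{meets} automatically defines every Allen relation, $\ii04$ among them.

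The heart of the lemma is therefore the small number of genuinely new explicit definitions, and the same guiding identity underlies all of them: $[a,b]\ii04[c,d]$ asserts that \emph{both} endpoints of the first interval lie strictly inside the second, so once the two endpoints of $x_i$ are named it suffices to demand that each stands to $y_i$ in the interior relation $\ip2$. When the signature supplies endpoint-namers together with $\ip2$ — e.g.\ with $\ip1,\ip2,\ip3$ to hand — the definition reads off directly as $x_i\ii04 y_i \leftrightarrow \exists z_p(x_i\ip1 z_p\wedge y_i\ip2 z_p)\wedge\exists k_p(x_i\ip3 k_p\wedge y_i\ip2 k_p)$; for leaner signatures I would recover the missing endpoint or the missing interior relation indirectly — an endpoint as the extremity of an auxiliary interval meeting or met by $x_i$, exactly the intermediate device $\ii34\cup\ii44$ used in Lemmas~\ref{lem:tab:allilin} and~\ref{lem:tab:Ii34}, and an order-theoretic location of $a,b$ when $<$ is present, in the style of the $\{\ip2,<\}$ definition of Lemma~\ref{lem:tab:Ii24}.

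I expect the main obstacle to be the definition(s) built purely from \emph{before}/\emph{after}-style interval relations — the $\ii04$ analogues of the $\{\ip0,\ii44,\ii04\}$ and $\{\ip1,\ii44,\ii04\}$ cases treated for $\ii24$ — in which no conjunct names an endpoint and the correctness proof cannot be read off directly. There I would argue by exhaustion: introduce the auxiliary intervals, then eliminate one Allen configuration of $(x_i,y_i)$ at a time until only \emph{during} survives, checking in each case that some conjunct is falsified and, conversely, that the \emph{during} configuration satisfies every conjunct. Keeping this case split complete and correct is the delicate, error-prone step; once it is done, together with the self-symmetry reduction and the $\alli^+$-completeness of $\ii34$, every remaining entry of the table follows routinely, and the minimality of the listed complete sets is then guaranteed by the undefinability results of Section~\ref{sec:lin-incomp}, as explained in the paper's strategy.
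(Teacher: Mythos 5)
Your plan matches the paper's proof in structure and substance: the three genuinely new cases are exactly the ones you anticipate (one endpoint-naming relation $\ip0$, $\ip1$ or $\ip2$ together with $\ii24$ and $\ii44$), and the paper handles them precisely by introducing auxiliary intervals and exhaustively eliminating every Allen configuration of $(x_i,y_i)$ other than \emph{during}, while the symmetric column follows from the self-symmetry of $\ii04$ via Lemma~\ref{lem:symm} and the implied column routes through the $\alli^+$-completeness of $\{\ii34\}$. The only caveat is that your proposal defers the actual formulas and the case split for the hard signatures, which is where all the work of the paper's proof lies, but the approach you describe for producing them is the one the paper uses.
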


\begin{proof} Consider the following definitions:

\medskip
\begin{small}
\[ x_i\ii 04 y_i \leftrightarrow \left\{ \begin{array}{ll}
\exists z_i t_iz_p t_p x_i' x_i'' (y_i\ip 2 z_p\wedge y_i\ip 2 t_p\wedge z_i\ii 24 t_i\wedge x_i'\ii 44 x_i''\wedge z_i \ip 2 z_p \wedge t_i \ip 2 t_p\wedge & \{\ip 2,\ii 24,\ii 44\}\\
                                                     \hspace{2cm}\neg(z_i\ip 2 t_p)\wedge\neg(t_i\ip 2 z_p)\wedge & \\
                                                     \hspace{2cm}\bigwedge\limits_{s_i\in\{x_i,x_i',x_i''\}}\hspace{-0.5cm}(\neg(s_i\ip 2 z_p)\wedge\neg(s_i\ip 2 t_p))\wedge & \\
                                                     \hspace{2cm}\bigwedge\limits_{s_i\in\{y_i,x_i,z_i,t_i\},r\in\{\ii 44,\ii 24\}}\hspace{-1.5cm}(\neg(x_i'~r~s_i)\wedge\neg(s_i~r~x_i'))\wedge & \\
                                                     \hspace{2cm}\bigwedge\limits_{s_i\in\{y_i,x_i',x_i'',z_i,t_i\},r\in\{\ii 44,\ii 24\}}\hspace{-1.8cm}(\neg(x_i~r~s_i)\wedge\neg(s_i~r~x_i))\wedge & \\
                                                     \hspace{2cm}\bigwedge\limits_{s_i\in\{y_i,x_i,z_i,t_i\},r\in\{\ii 44,\ii 24\}}\hspace{-1.5cm}(\neg(x_i''~r~s_i)\wedge\neg(s_i~r~x_i'')))& \\
&\\
                                    \exists z_i t_i(\forall k_p(x_i \ip 0 k_p \leftrightarrow z_i \ip 0 k_p) \wedge \forall k_p (y_i\ip 0 k_p\leftrightarrow t_i\ip 0 k_p)\wedge {} & \{\ip 0,\ii 24,\ii 44\}\\
                                    \hspace{0.8cm}\forall w_i(y_i\ii 44 w_i \leftrightarrow z_i\ii 44 w_i)\wedge t_i \ii 24 z_i \wedge {}& \\
                                    \hspace{0.8cm}\neg \forall k_p (x_i \ip 0 k_p \leftrightarrow y_i \ip 0 k_p)\wedge &\\
                                    \hspace{0.8cm}\displaystyle{\bigwedge\limits_{\substack{s_i, w_i \in \{x_i, y_i, z_i, t_i\}\\ \{s_i, w_i \} \neq \{t_i, z_i \}}}\hspace{-0.7cm}(\neg(s_i\ii 24 w_i)\wedge\neg(s_i\ii 44 w_i))}) & \\
&\\
\exists z_i t_i(\forall k_p(x_i \ip 1 k_p \leftrightarrow z_i \ip 1 k_p) \wedge \forall k_p (y_i\ip 1 k_p\leftrightarrow t_i\ip 1 k_p)\wedge {} & \{\ip 1,\ii 24,\ii 44\}\\
                                    \hspace{0.8cm}\forall w_i(y_i\ii 44 w_i \leftrightarrow z_i\ii 44 w_i)\wedge t_i \ii 24 z_i \wedge {}& \\
                                    \hspace{0.8cm}\neg \exists k_p (x_i \ip 1 k_p \wedge y_i \ip 1 k_p) \wedge & \\
                                    \hspace{0.8cm}\displaystyle{\bigwedge\limits_{\substack{s_i, w_i \in \{x_i, y_i, z_i, t_i\}\\ \{s_i, w_i \} \neq \{t_i, z_i \}}}\hspace{-0.7cm}(\neg(s_i\ii 24 w_i)\wedge\neg(s_i\ii 44 w_i))})
\end{array} \right. \]
\end{small}

\medskip

\noindent Let us start by \underline{$\{\ip 2,\ii 24,\ii 44\}$}. Suppose that $\mathcal{F} \models [c,d]\ii 04 [a,b]$. Then, by definition, $a<c<d<b$; to satisfy $\varphi$ we set: $x_i'=[a,c],x_i''=[d,b],z_i=[a,d],t_i=[c,b],z_p=c$, and $t_p=d$. Suppose, now, that $\mathcal{F} \models \varphi([c,d],[a,b])$ and that $z_i = [b_z, e_z]$, $t_i = [b_t, e_t]$, $x' = [b', e']$ and $x'' = [b'', e'']$. First, observe that $y_i= [a,b]$ must have two internal points $z_p$ and $t_p$, and that $z_p$ ($t_p$) must be {\em contained} in  $z_i$ ($t_i$) and not in $t_i$ ($z_i$).  Since $z_i$ {\em overlaps} $t_i$ we must have  $z_p < t_p$. We deduce that $a < z_p \leq b_t < e_z \leq t_p < b$. The constraints on $x'$ and $x''$ in lines 1, 3, 4 and 6 of the definition imply that we must have $e' = b_t = z_p$ and $b'' = e_z = t_p$. This gives $a < z_p = b_t = e' < b'' = e_z = t_p < b$. Now, the only way to place $x_i = [c,d]$  without violating the third and fifth line of the definition is that $e'= c < d = b''$. Consequently $a < c < b < d$, as desired. Consider, now, \underline{$\{\ip 1,\ii 24,\ii 44\}$}. Suppose that $\mathcal{F}\models [c,d]\ii 04 [a,b]$. Then, by definition, $a<c<d<b$; by setting $z_i=[c,b]$ and $t_i=[a,d]$, we satisfy all requirements. Suppose, now, that $\mathcal{F} \models \varphi([c,d],[a,b])$. Once more, we eliminate every possible relation between $x_i$ and $y_i$, except for $x_i\ii 04 y_i$. As before, $\ii 14$, $\ii 24$, and $\ii 44$ are immediately eliminated (third and fourth line). If $x_i\ii 34 y_i$, then, since $z_i$ (resp., $t_i$) and $x_i$ (resp., $y_i$) have the same {\em starting} point (first line), it is not possible to have $t_i\ii 24 z_i$, contradicting the second line. Suppose, now, that $y_i\ii 34 x_i$. First, observe that $z_i$ (which {\em starts} at the same point as $x_i$) cannot {\em start} or be {\em equal to} $x_i$, since in that case $t_i\ii 24 x_i$ (contradicting the fourth line). But then, if $z_i$ is {\em started by} $x_i$, there will be some interval that is {\em after} $y_i$ which is not {\em after}
$z_i$ itself, contradicting the second line. Thus, $\ii 34$ is eliminated.  Assume that $x_i\ii 03 y_i$. As before,
$z_i$ cannot {\em start} or be {\em equal to} $x_i$, since in that case $t_i\ii 24 x_i$ (contradicting the fourth line). Moreover, if $z_i$ is {\em started by} $x_i$, it would imply that $y_i\ii 24 z_i$, again, a contradiction with the fourth line. If $y_i\ii 03 x_i$, $y_i=_i x_i$, or $y_i\ii 04 x_i$, then it is impossible to correctly place $z_i$ and $t_i$ in such a way that $t_i\ii 24 z_i$. Having eliminated in this case all other possibilities, we conclude that $x_i\ii 04 y_i$. Once again, we can deal with \underline{$\{\ip 0,\ii 24,\ii 44\}$} with the exact same argument, where $\ip 0$ plays the role of $\ip1 $.
\qedhere
\end{proof}

\begin{table}[t]
\small
\begin{center}
\begin{tabular}{|p{0.20\textwidth}|p{0.20\textwidth}|p{0.20\textwidth}|p{0.20\textwidth}|}
\hline
Proved & Symmetric & Implied & Deduction Chain\\
\hline
$\{\ip 0,\ii 24,\ii44\}$         &$\{\ip 4,\ii 24,\ii44\}$    & $\{\ip0, \ip2 \}$ & $\ii 34$   \\
$\{\ip 1,\ii 24,\ii44\}$         &$\{\ip 3,\ii 24,\ii44\}$    & $\{\ip0, \ip3 \}$ & $\ii 34$   \\
$\{\ip 2,\ii 24,\ii44\}$         &-                           & $\{\ip0, \ip4 \}$ & $\ii 34$   \\
                                 &                            & $\{\ip0, \ii03\}$ & $\ii 34$   \\
                                 &                            & $\{\ip1, \ip2 \}$ & $\ii 34$   \\
                                 &                            & $\{\ip1, \ip3 \}$ & $\ii 34$   \\
                                 &                            & $\{\ip1, \ip4 \}$ & $\ii 34$   \\
                                 &                            & $\{\ip1, \ii03 \}$ & $\ii 34$   \\
                                 &                            & $\{\ip2, \ip3 \}$ & $\ii 34$   \\
                                 &                            & $\{\ip2, \ip4 \}$ & $\ii 34$   \\
                                 &                            & $\{\ip2, \ii14 \}$ & $\ii 34$   \\
                                 &                            & $\{\ip2, \ii03 \}$ & $\ii 34$   \\
                                 &                            & $\{\ip2, \ii44,< \}$ & $\ii 24$ \\
                                 &                            & $\{\ip3, \ii14 \}$ & $\ii 34$   \\
                                 &                            & $\{\ip4, \ii14 \}$ & $\ii 34$   \\
                                 &                            & $\{\ii14, \ii24 \}$ & $\ii 34$   \\
                                 &                            & $\{\ii14, \ii03 \}$ & $\ii 34$   \\
                                 &                            & $\{\ii14, \ii44 \}$ & $\ii 34$   \\
                                 &                            & $\{\ii24, \ii03 \}$ & $\ii 34$   \\
                                 &                            & $\{\ii03, \ii44 \}$ & $\ii 34$   \\
                                 &                            & $\{\ii34 \}$ & Section~\ref{subsec:allimlin}\\
\hline
\end{tabular}
\caption{The spectrum of the $\mathsf{mcs}(\ii 04)$. - Class: $\Lin$.}(Lemma~\ref{lem:tab:Ii04}.)\label{tab:Ii04}
\end{center}
\end{table}


\begin{lem}\label{lem:tab:Ii44}
Tab.~\ref{tab:Ii44} is correct.
\end{lem}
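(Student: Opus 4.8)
The plan is to follow the template established in the preceding lemmas (in particular Lemma~\ref{lem:tab:Ii24} and Lemma~\ref{lem:tab:Ii04}): for each set $S$ in the \emph{Proved} column of Tab.~\ref{tab:Ii44} I would exhibit an explicit $FO+S$ formula $\varphi(x_i,y_i)$ and verify, via the standard schema, that $\mathcal F \models \varphi([a,b],[c,d])$ if and only if $[a,b]\ii44[c,d]$, i.e.\ if and only if $b<c$, for every concrete point-interval structure $\mathcal F$; the \emph{Symmetric} column is then immediate from Lemma~\ref{lem:symm}, noting that $\ii44$ is symmetric but not self-symmetric; and the \emph{Implied} column follows from the stated deduction chains together with the definability results already proved. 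The single most important enabling fact is that $\{\ii34\}$ already defines $\ii44$ through $x_i\ii44 y_i \leftrightarrow \exists z_i(x_i\ii34 z_i \wedge z_i\ii34 y_i)$, established in Lemma~\ref{lem:tab:allilin}; consequently every $\ii34$-complete set is automatically $\ii44$-complete, which accounts for all entries in the Implied column whose deduction chain is $\ii34$, each inherited from the corresponding $\ii34$-complete set identified in Tab.~\ref{tab:Ii34}.

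For the genuinely new \emph{Proved} cases I would proceed by locating the relevant endpoints and asserting the strict separation $b<c$. Where the signature gives direct access to endpoints -- for instance using $\ip1$ to mark the left endpoint $c$ of $y_i$ and $\ip4$ (or $\ip3$ together with $<$) to place the right endpoint $b$ of $x_i$ strictly below it -- the definition is short and its correctness is immediate: one picks the witnessing endpoint in the forward direction and reads off $b<c$ in the converse. Where endpoints are not directly nameable, I expect to fall back on the elimination strategy used for $\ii24$ and $\ii04$: write a conjunction that admits the configuration $x_i\ii44 y_i$ while ruling out, one Allen relation at a time, every competing interval-interval placement (meets, overlaps, starts, finishes, during, contains, equality, and their converses), producing witnessing or forbidden auxiliary intervals exactly as in those proofs. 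Throughout, both directions of each biconditional must be checked explicitly, and any use of an interval-interval relation inside a Proved formula is legitimate since the models are point-interval structures, in which all such relations are available.

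The main obstacle will be precisely those Proved sets that are \emph{not} $\ii34$-complete, since for them $\ii44$ cannot be obtained through the easy meets-composition above and must be defined directly. For such a set the formula is likely to be a long conjunction, and the verification -- especially the contrapositive direction, eliminating every non-$\ii44$ placement of $[c,d]$ relative to $[a,b]$ -- is the delicate part, mirroring the case analyses already carried out for sets such as $\{\ip0,\ii44,\ii04\}$ and $\{\ip1,\ii24,\ii44\}$. The remaining effort is bookkeeping: confirming that each Implied entry's deduction chain is valid, typically by checking that the set first defines some auxiliary mixed relation or $\ii34$ from which $\ii44$ then follows, and that the Symmetric column is correctly obtained by replacing every reversible relation by its reverse. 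Minimality of these complete sets is then secured once the matching undefinability results of Section~\ref{sec:lin-incomp} are in place.
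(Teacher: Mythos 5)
Your outline matches the paper's proof in every structural respect: the \emph{Implied} column is discharged exactly as you say, via the fact that $\{\ii34\}$ defines $\ii44$ (so every $\ii34$-complete set from Tab.~\ref{tab:Ii34} is $\ii44$-complete); the \emph{Symmetric} column follows from Lemma~\ref{lem:symm} since $\ii44$ is symmetric; and the \emph{Proved} sets are handled by a direct definition verified through elimination of all competing Allen relations. You also correctly identify that the hard cases are precisely the Proved sets that are not $\ii34$-complete.

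The gap is that for this table those hard cases are the \emph{only} cases, and you never actually produce them. The Proved column consists of $\{\ip1,\ii24,\ii04\}$ and $\{\ip0,\ii24,\ii04\}$; neither contains $<$ nor two point relations, so your ``short definition via nameable endpoints'' route applies to nothing here, and everything rests on the elimination argument you only promise. The paper's actual content is a single long formula for $\{\ip1,\ii24,\ii04\}$: it posits auxiliary intervals $z_i$ and $t_i$ with $z_i$ sharing its starting point with $x_i$ (via $\ip1$), $t_i$ sharing its starting point with... (in fact constrained through $y_i$), requires $z_i\ii24 t_i$, and forbids $\ii24$ and $\ii04$ between every other pair drawn from $\{x_i,y_i,z_i,t_i\}$; the verification then walks through each of $=_i$, $\ii14$, $\ii24$, $\ii04$, $\ii34$, $\ii03$ and their converses, showing each placement forces a forbidden overlap or containment. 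That case analysis (e.g.\ the subcase $x_i\ii34 y_i$ alone splits into five placements of $z_i$) is the entire mathematical substance of the lemma, and asserting that it ``mirrors'' the $\ii24$ and $\ii04$ proofs does not establish it: the forbidden-pair constraints and the witnesses differ from those lemmas in detail, and one must check that the chosen conjunction is simultaneously satisfiable in the intended configuration $b<c$ (witnessed by $z_i=[a,c]$, $t_i=[b,d]$) and unsatisfiable in all others. Until that formula is written down and checked in both directions, the two Proved entries --- and hence the table --- are not proved.
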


\begin{proof}
As always, we start with the definition:

\medskip

\begin{small}
\[ \begin{array}{llll}x_i\ii 44 y_i &\leftrightarrow&
                                    \exists z_it_i\big( \exists k_p(x_i\ip 1 k_p \wedge z_i \ip 1 k_p) \wedge \neg \exists k_p (x_i \ip 1 k_p \wedge y_i \ip 1 k_p) \wedge  &  \{\ip 1,\ii 24,\ii 04\}\\
                                    &&\hspace{0.8cm}\forall w_i(\forall k_p(y_i\ip 1 k_p\rightarrow w_i\ip 1 k_p)\rightarrow\neg(z_i\ii 04 w_i))\wedge {}& \\
                                    &&\hspace{0.8cm}\forall w_i(\forall k_p(t_i\ip 1 k_p\rightarrow w_i\ip 1 k_p)\rightarrow\neg(w_i\ii 04 x_i))\wedge {}& \\
                                    &&\hspace{0.8cm}\forall w_i(\forall k_p(y_i\ip 1 k_p\rightarrow w_i\ip 1 k_p)\rightarrow\neg(z_i\ii 24 w_i))\wedge {}& \\
                                    &&\hspace{0.8cm}\forall w_i(\forall k_p(y_i\ip 1 k_p\rightarrow w_i\ip 1 k_p)\rightarrow\neg(x_i\ii 04 w_i))\wedge {}& \\
                                    &&\hspace{0.8cm}\forall w_i(\forall k_p(y_i\ip 1 k_p\rightarrow w_i\ip 1 k_p)\rightarrow\neg(w_i\ii 24 z_i))\wedge {}& \\
                                    &&\hspace{0.8cm} z_i\ii 24 t_i\wedge \displaystyle{\bigwedge\limits_{\substack{s_i,w_i \in \{x_i,y_i,z_i,t_i\} \\  \{ s_i, w_i\} \neq \{ z_i, t_i \}}}\hspace{-0.5cm}(\neg(s_i\ii 24 w_i)\wedge\neg(s_i\ii 04 w_i))})
   \end{array}  \]
\end{small}

\medskip

\begin{table}[t]
\small
\begin{center}
\begin{tabular}{|p{0.20\textwidth}|p{0.20\textwidth}|p{0.20\textwidth}|p{0.20\textwidth}|}
\hline
Proved & Symmetric & Implied & Deduction Chain\\
\hline
$\{\ip 0,\ii 24,\ii04\}$         &$\{\ip 4,\ii 24,\ii04\}$    & $\{\ip0, \ip2 \}$ & $\ii 34$   \\
$\{\ip 1,\ii 24,\ii04\}$         &$\{\ip 3,\ii 24,\ii04\}$    & $\{\ip0, \ip3 \}$ & $\ii 34$   \\
                                 &                            & $\{\ip0, \ip4 \}$ & $\ii 34$   \\
                                 &                            & $\{\ip0, \ii03\}$ & $\ii 34$   \\
                                 &                            & $\{\ip1, \ip2\}$  & $\ii 34$   \\
                                 &                            & $\{\ip1, \ip3 \}$ & $\ii 34$   \\
                                 &                            & $\{\ip1, \ip4 \}$ & $\ii 34$   \\
                                 &                            & $\{\ip1, \ii03 \}$ & $\ii 34$   \\
                                 &                            & $\{\ip2, \ip3 \}$  & $\ii 34$   \\
                                 &                            & $\{\ip2, \ip4 \}$  & $\ii 34$   \\
                                 &                            & $\{\ip2, \ii14 \}$ & $\ii 34$   \\
                                 &                            & $\{\ip2, \ii03 \}$  & $\ii 34$   \\
                                 &                            & $\{\ip3, \ii14 \}$  & $\ii 34$ \\
                                 &                            & $\{\ip4, \ii14 \}$  & $\ii 34$   \\
                                 &                            & $\{\ii14, \ii24 \}$ & $\ii 34$   \\
                                 &                            & $\{\ii14, \ii03 \}$ & $\ii 34$   \\
                                 &                            & $\{\ii14, \ii04 \}$ & $\ii 34$   \\
                                 &                            & $\{\ii24, \ii03 \}$ & $\ii 34$   \\
                                 &                            & $\{\ii03, \ii04 \}$ & $\ii 34$   \\
                                 &                            & $\{\ii34 \}$        & Section~\ref{subsec:allimlin}   \\
\hline
\end{tabular}
\caption{The spectrum of the $\mathsf{mcs}(\ii 44)$. - Class: $\Lin$.} (Lemma~\ref{lem:tab:Ii44}.)\label{tab:Ii44}
\end{center}
\end{table}

\noindent Following the same schema of the previous two relations, consider the case of \underline{$\{\ip 1,\ii 24,\ii 04\}$}, and
suppose that $\mathcal{F} \models [a,b]\ii 44 [c,d]$. Then, by definition, $a<b<c<d$; by setting $z_i=[a,c]$ and $t_i=[b,d]$, all requirements
are satisfied. Suppose, now, that $\mathcal{F} \models \varphi([a,b],[c,d])$. We prove that $[a,b]\ii 44[c,d]$ by eliminating every other possible relation
that may hold between $x_i$ and $y_i$. The relations $=_i$, $\ii 14$, $\ii 24$ and $\ii 04$ are eliminated immediately thanks to the first and last lines of the definition. Suppose, by way of contradiction, that $x_i\ii 34 y_i$. The interval $z_i$, which must {\em start} at $a$ (first line), cannot end between $c$ and $d$ in order to comply with the fact that it cannot {\em overlap} $y_i$ (last line). If $z_i$ {\em ends} at $d$ then, since $z_i \ii 24 t_i$, we must have (i) $t_i$ beginning between $a$ and $b$ which causes it to contain $y_i$ contradicting the last line, or (ii) $t_i$ beginning at $b = c$ which causes a contradiction with the fourth line, or (iii)  $t_i$ beginning between at $c$ and $d$ causing $y_i$ to {\em overlap} $t_i$, contradicting the last line. If $z_i$ {\em ends} after $d$, then $y_i$ is {\em contained} in $z_i$ contradicting the last line. If $z_i$ {\em ends} at $c$  then $t_i$ {\em overlaps} $x_i$ contradicting the last line. If $z_i$ {\em ends} between $a$ and $b$ the overlap between $z_i$ and $t_i$ begins with $t_i$ and is contained in $x_i$, contradicting the third line. This eliminates all possibilities for the placement of $z_i$ and hence the case $x_i\ii 34 y_i$. Suppose, now, that $y_i\ii 34 x_i$. Then there exists an interval starting with $y_i$ which {\em overlaps} $z_i$, contradicting the sixth line.  Thus, $\ii 34$ is eliminated. Similarly, if $x_i\ii 03 y_i$, then there exists an interval starting with $y_i$ which {\em overlaps} $z_i$, contradicting the sixth line. Suppose next that $y_i\ii 03 x_i$. Note that $z_i$ must end after $x_i$ as the overlap between $z_i$ and $t_i$ cannot be {\em contained} in $x_i$ (third line) and $x_i$ cannot overlap $t_i$ (last line). This means that $z_i$ {\em contains} $y_i$, contradicting once more the last line. Thus, $\ii 03$ is eliminated as well. Finally, if $y_i\ii 44 x_i$, it is impossible to correctly place $z_i$ and $t_i$. Having eliminated every other possibility, we conclude that $x_i\ii 44 y_i$. As in the proofs of the previous two relations,  the same argument solves the case of \underline{$\{\ip 0,\ii 24,\ii 04\}$}:

\begin{small}
  \begin{align*}
    \begin{array}[b]{llll}x_i\ii 44 y_i &\leftrightarrow&
                                    \exists z_it_i\big( \forall k_p(x_i\ip 0 k_p \leftrightarrow z_i \ip 0 k_p) \wedge \neg \forall k_p (x_i \ip 0 k_p \leftrightarrow y_i \ip 0 k_p) \wedge  &  \{\ip 0,\ii 24,\ii 04\}\\
                                    &&\hspace{0.8cm}\forall w_i(\forall k_p(y_i\ip 0 k_p\rightarrow w_i\ip 0 k_p)\rightarrow\neg(z_i\ii 04 w_i))\wedge {}& \\
                                    &&\hspace{0.8cm}\forall w_i(\forall k_p(t_i\ip 0 k_p\rightarrow w_i\ip 0 k_p)\rightarrow\neg(w_i\ii 04 x_i))\wedge {}& \\
                                    &&\hspace{0.8cm}\forall w_i(\forall k_p(y_i\ip 0 k_p\rightarrow w_i\ip 0 k_p)\rightarrow\neg(z_i\ii 24 w_i))\wedge {}& \\
                                    &&\hspace{0.8cm}\forall w_i(\forall k_p(y_i\ip 0 k_p\rightarrow w_i\ip 0 k_p)\rightarrow\neg(x_i\ii 04 w_i))\wedge {}& \\
                                    &&\hspace{0.8cm}\forall w_i(\forall k_p(y_i\ip 0 k_p\rightarrow w_i\ip 0 k_p)\rightarrow\neg(w_i\ii 24 z_i))\wedge {}& \\
                                    &&\hspace{0.8cm} z_i\ii 24 t_i\wedge \displaystyle{\bigwedge\limits_{\substack{s_i,w_i \in \{x_i,y_i,z_i,t_i\} \\  \{ s_i, w_i\} \neq \{ z_i, t_i \}}}\hspace{-0.5cm}(\neg(s_i\ii 24 w_i)\wedge\neg(s_i\ii 04 w_i))})
   \end{array}
\tag*{\qEd}
\end{align*}
\end{small}
\def\popQED{}
\end{proof}

\medskip

\section{Incompleteness Results in The Class \texorpdfstring{$\Lin$}{Lin} and The Class \texorpdfstring{$\Dis}{Dis}$}\label{sec:lin-incomp}

We now describe and prove the `other half' of the picture, by identifying all maximally incomplete sets for each relation $r\in\allr^+$. To treat incompleteness, since most incomplete set appears as $\mathsf{MIS}$ for more than one relation, we present these results as follows. In the table we list, in the leftmost column the maximal incomplete sets, and in the topmost row all $\allr^+$-relations. Whenever the crossing point of a column and a row is marked, the set corresponding to its row is a $\mathsf{MIS}$ for the relation corresponding to its column, and that fact will be justified in the proof. The section at the top contains those sets for which we give an explicit construction, while the section at the bottom contains all symmetric results. Finally, the sets in the topmost part which are symmetric to themselves will be proved $r$-incomplete up to reverse of relations (i.e., their $r'$-incompleteness relative to the reverse of a given $r$ is implied, but not mentioned). The results for the sub-languages induced by $\alli^+$ and $\allm^+$, which turn out to be much simpler than those for $\allr^+$, are included in the next section. As already mentioned, we shall prove specific undefinability results via surjective truth-preserving relations between models, and we shall be therefore forced to choose a specific class of linearly ordered sets. At the end of this analysis, it shall turn out that the expressive power in the classes $\Lin$ and $\Dis$ is identical (every counterexample for the class $\Lin$ is based on a discrete structure) while $\Den$ and $\Unb$ are different from the former two, and (slightly) different from each other. The class $\Fin$ (treated in Part II, along with $\Den$ and $\Unb$), shall require, as we shall see, a deeper analysis. Throughout the following proofs, let $\mathcal{F}$ and $\mathcal{F'}$ be two concrete structures, and, for every given case, let $S$ be a set of relations that we claim to be $r$-incomplete for a given $r$. Case-by-case, we shall define two domains $\mathbb D,\mathbb D'$ and a surjective $S$-relation between $\mathcal{F}$ and $\mathcal{F'}$ that breaks $r$. Notation- and terminology-wise, most of the constructions are based on the {\em same} domain (in terms of
their elements and relative ordering): we distinguish them by means of the superscript $'$, and, with an abuse of terminology, we use the term {\em identity} relation (over points or over intervals), denoted by $Id_p$ or $Id_i$ to indicate the relation that respects the element's name (i.e., it relates $a$ with $a'$, $[a,b]$ with $[a',b']$, and so on). Notice that, all together, the results presented here imply those reported in Section~\ref{sec:harvest}, which
will therefore require no formal proof. Notice also that each set $S$ listed as $r$-incomplete for some $r$ in this section has been obtained by means of the technique shown in Fig.~\ref{fig:undef}. This means that it must be maximal by definition: if that was not the case, then $S\cup\{r'\}$ would be $r$-incomplete for some $r'\notin S$, but $S\cup\{r'\}$ contains some $r$-complete set, which is a contradiction.

\medskip

In what follows, two cases are particularly difficult, and they require an additional concept defined here.

\begin{defi}\label{def:iterated}
We say that the structure $\langle\mathbb D^*,\prec \rangle$ is {\em iterated discrete} if and only if:

\begin{enumerate}[label={(\emph{\roman*})}]
\item $\mathbb D^* = \mathbb Z\times\mathbb Q$;
\item $\prec$ is defined as follow: $(n,q)\prec(m,r)$ if and only if (1) $q<r$, or (2) $q=r$ and $n<m$, that is, it defines a linear order as the reverse lexicographic order between pairs.
\end{enumerate}
\end{defi}

\noindent Notice that an iterated discrete structure is, in fact, discrete: the direct predecessor (resp., successor) or $(n,q)$ is $(n-1,q)$ (resp., $(n+1,q)$). The two cases that require a construction based on an iterated discrete structure (one in the sub-language induced by $\alli^+$, the other one in the general case of $\allr^+$) could be sorted out by a similar (and simpler)  construction based on $\mathbb{Q}$; however, basing it on a discrete structure yields the stronger incompleteness result we want, that applies to both $\Dis$ and $\Lin$.

\subsection{(Maximal) $\alli^+$- and $\allm^+$-Incompleteness}

As we have done while analyzing the definability of relations, we focus our attention, first, on the sub-languages induced by $\alli^+$ and by $\allm^+$.

\begin{table}[t]
\small
\begin{center}
\begin{tabular}{p{0.34\textwidth}|p{0.04\textwidth}|p{0.04\textwidth}|p{0.04\textwidth}|p{0.04\textwidth}|p{0.04\textwidth}|p{0.04\textwidth}|p{0.04\textwidth}|}
{\em Proved}                     & $=_i$           & $\ii34$          & $\ii14$          & $\ii03$          & $\ii24$          & $\ii04$          & $\ii44$           \\
\hline
$\{\ii 44, \ii 04, \ii 24\}$     &$\bullet$        &                  &                  &                  &                  &                  &
\\
\hline
$\{\ii 44, \ii 04, \ii 24,=_i\}$ &                 &$\bullet$         &$\bullet$         &$\bullet$         &                  &                  &
\\
\hline
$\{\ii 03,=_i\}$                 &                 &$\bullet$         &$\bullet$         &                  &$\bullet$         &$\bullet$         &$\bullet$
\\
\hline
\\
{\em Symmetric}                     & $=_i$           & $\ii34$          & $\ii14$          & $\ii03$          & $\ii24$          & $\ii04$          & $\ii44$           \\
\hline
$\{\ii 14,=_i\}$                 &                 &$\bullet$         &                     & $\bullet$        &$\bullet$         &$\bullet$         &$\bullet$
\\
\hline
\end{tabular}                                                                                                                                                                                                                                         \caption{$\mathsf{MIS}(r)$, for each $r\in\alli^+$; upper part: sets for which we give an explicit construction; lower part: symmetric ones. - Classes: $\Lin$, $\Dis$.}(Lemma~\ref{lem:tab:MISIlin}.)\label{tab:MISIlin}
\end{center}
\end{table}

\begin{lem}\label{lem:tab:MISIlin}
Tab.~\ref{tab:MISIlin} is correct and complete.
\end{lem}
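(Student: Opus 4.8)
The plan is to prove correctness and completeness separately, obtaining correctness by treating the three sets in the upper (``Proved'') block directly and deriving the lower (``Symmetric'') block by order-duality. For a listed pair $(S,r)$, correctness has two halves: \emph{$r$-incompleteness}, which I will establish by exhibiting concrete structures $\mathcal F,\mathcal F'$ and a surjective $S$-relation $\zeta$ breaking $r$, so that Theorem~\ref{theo:truth} rules out any $FO+S$-definition of $r$; and \emph{maximality}, which is automatic from Lemma~\ref{lem:tab:allilin}, since every proper superset of $S$ inside $\alli^+$ contains one of the minimally complete sets tabulated there and is therefore $r$-complete. All witness structures will be discrete (finite chains, $\mathbb Z$, or the iterated discrete $\mathbb D^{*}$ of Definition~\ref{def:iterated}), so that each counterexample lives simultaneously in $\Lin$ and $\Dis$ and the incompleteness holds for both classes at once.

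The two sets built on $\{\ii{4}{4},\ii{0}{4},\ii{2}{4}\}$ are easy, because each of \emph{before}, \emph{during} and \emph{overlaps} forces strict separations of endpoints and so holds vacuously on very small orders. To break $=_i$ with $S=\{\ii{4}{4},\ii{0}{4},\ii{2}{4}\}$ I will take a short chain on which these three relations are empty and let $\zeta_i$ be a (deliberately non-functional) relation linking a single interval to two distinct images; the relations in $S$ are then respected vacuously, while the two images witness the failure of $=_i$. Adding $=_i$ gives $S=\{\ii{4}{4},\ii{0}{4},\ii{2}{4},=_i\}$, which forces $\zeta_i$ to be a bijection; here the three-point chain suffices, for on it \emph{before}, \emph{during} and \emph{overlaps} are all empty, so any permutation of its three intervals is an $S$-relation, and a single transposition of two of them breaks $\ii{3}{4}$, $\ii{1}{4}$ and $\ii{0}{3}$ at once.

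The genuinely delicate row is $S=\{\ii{0}{3},=_i\}$, which must be shown incomplete for all five positional relations $\ii{3}{4},\ii{1}{4},\ii{2}{4},\ii{0}{4},\ii{4}{4}$ simultaneously. The reason is that \emph{finishes} together with equality records only the partition of the intervals by right endpoint and the nesting order within each such class, and nothing linking different classes; the difficulty is that on any finite order this information is rigid (finite chains of equal size are isomorphic and admit only the trivial finishes-and-equality-preserving bijection), so an \emph{infinite} structure is unavoidable, and to keep discreteness I will work over $\mathbb Z$ or, for the most delicate simultaneous construction, over the iterated discrete $\mathbb D^{*}$, whose finishes-classes carry enough automorphism freedom while remaining discrete. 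Taking $\mathcal F=\mathcal F'$ to be the $\{\ii{0}{3},=_i\}$-reduct and letting $\zeta_i$ be the bijection induced by a \emph{non-order-preserving} permutation of the right endpoints (with the within-class maps that this forces) yields an $S$-relation that preserves finishes and equality but scrambles the relative positions of intervals with different right endpoints; choosing witnesses appropriately breaks each of the five relations. The symmetric row $\{\ii{1}{4},=_i\}$ then follows immediately from Lemma~\ref{lem:symm}, since $\ii{1}{4}$ is the reverse of $\ii{0}{3}$ and the five broken relations transform into the five listed for it.

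Finally, completeness of the table, namely that these are \emph{all} the maximal incomplete sets, follows from the enumeration carried out by the algorithm of Figure~\ref{fig:undef}: run with the minimally complete sets of Lemma~\ref{lem:tab:allilin} as its definability rules, it returns, for each $r\in\alli^+$, exactly the maximal subsets of $\alli^+\setminus\{r\}$ containing no minimally $r$-complete set, and these coincide with the tabulated entries; the incompleteness constructions above certify that each returned candidate really is incomplete, which by the mutual-confirmation argument of the strategy simultaneously closes the loop on the minimality of the complete sets. The main obstacle throughout is the $\{\ii{0}{3},=_i\}$ case: the other constructions collapse to degenerate finite chains, whereas here one must produce an infinite, \emph{discrete} witness together with a finishes-preserving bijection that destroys all five positional relations at once.
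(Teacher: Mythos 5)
Your proposal is correct, and for most of the table it coincides with the paper's own proof: the same three-point chains (a deliberately non-functional $\zeta_i$ to break $=_i$ while the empty relations $\ii{4}{4},\ii{0}{4},\ii{2}{4}$ are respected vacuously, and a transposition of two of the three intervals once $=_i$ forces bijectivity, breaking $\ii{3}{4}$, $\ii{1}{4}$ and $\ii{0}{3}$ simultaneously), the same appeal to the completeness results of Lemma~\ref{lem:tab:allilin} for maximality, to Lemma~\ref{lem:symm} for the symmetric row, and to the enumeration behind Fig.~\ref{fig:undef} for exhaustiveness of the table. Where you genuinely diverge is the hard row $\{\ii{0}{3},=_i\}$. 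The paper also insists on an infinite discrete witness (the iterated discrete $\mathbb D^{*}$ of Definition~\ref{def:iterated}), but it keeps every right endpoint fixed and rescales interval lengths: one map doubles the rational length of every infinite interval, which respects finishes and equality and breaks $\ii{3}{4},\ii{1}{4},\ii{2}{4},\ii{4}{4}$ but leaves $\ii{0}{4}$ unbroken, so a second, length-halving map is needed to kill $\ii{0}{4}$. You instead permute the right endpoints non-order-preservingly and take the unique nesting-preserving bijections between the resulting finishes-classes; over $\mathbb Z$, where every class is order-isomorphic to $\omega$, the transposition of $0$ and $1$ already breaks all five relations in a single construction (for instance $[-1,0]\mapsto[0,1]$ and $[-2,1]\mapsto[-3,0]$ destroys the during-instance $[-1,0]\,\ii{0}{4}\,[-2,1]$), so you get one construction on a simpler discrete order where the paper needs two on $\mathbb D^{*}$. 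The only thing to flag is that ``choosing witnesses appropriately'' is doing real work and must be made explicit relation by relation --- above all for $\ii{0}{4}$, which is precisely the relation the paper's first construction fails to break --- but once a concrete permutation is fixed that verification is routine, and your argument that no finite chain can serve here (the finishes-classes have pairwise distinct finite sizes, so the only $S$-respecting bijection is the identity) is sound.
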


\begin{proof}
The fact that \underline{$\{\ii 44, \ii 04, \ii 24\}$} is maximally incomplete for $=_i$ is justified as follows.
Let $\mathbb D= \mathbb D'=\{a<b<c\}$, and $\zeta=\zeta_i=\{([a,c],[a,b])\}\cup Id_i$. We have that $\zeta$ is total and the interval-interval relations in $S$ are trivially respected. But $[a,b]$ is the image of both $[a,b]$ and $[a,c]$, proving that $=_i$ is broken. Now, we prove that \underline{$\{\ii 44, \ii 04, \ii 24,=_i\}$} is $\ii34$-,$\ii 14$-, and $\ii 03$-incomplete. To this end, we take $\mathbb D = \mathbb D' =\{a<b<c\}$ and $\zeta=\zeta_i=\{([a,c],[a',c']),([a,b],[b',c']),([b,c],[a',b'])\}$. The relation $\zeta:\mathcal F\rightarrow\mathcal F'$ is clearly surjective (actually, it is a bijection, so equality between intervals is preserved) with respect to interval, and every relations in $S$ is trivially respected. However, it clearly brakes all intended relations. Consider, now, the case of \underline{$\{\ii 03,=_i\}$}, which we have to prove that it is $\ii 34$-,$\ii 14$-,$\ii 24$-,$\ii 04$-,$\ii 44$-incomplete. We proceed as follows: first, we give a construction that proves this set to be $\ii 34$-,$\ii 14$-,$\ii 24$-, and $\ii 44$-incomplete, and, then, give a second construction that proves it to be $\ii 04$-incomplete. As for the first step, define $\mathcal F$ and $\mathcal F'$ both based on an iterated discrete structure $\mathbb D^*$ (see Def.~\ref{def:iterated}). Define a relation $\zeta=\zeta_i$ between them as follows: $\zeta_i$ is the union of $\{([(n,q),(m,r)],[(n',q'),(m',r')]) \mid q = r \mbox{ and } n, m \in  \mathbb{Z}\}$ and $\{ ([(n,q),(m,r)],[(n',q'-|r'-q'|),(m',r')]) \mid q < r \mbox{ and } n, m \in  \mathbb{Z}\}$. In this way, finite intervals (i.e., those containing a finite number of points) are related to themselves, while infinite ones are related to intervals with the same {\em ending} point but twice the length (with respect the rational coordinate). We want to prove that $\zeta$ is an $S$-relation. Obviously, the only interesting cases are those that involve at least one infinite interval. Consider two intervals $[(n,q),(m,r)]$ and $[(l,s),(m,r)]$ with $(l,s) \prec (n,q)$; they are $\ii{0}{3}$-related, so either both are infinite or $[(n,q),(m,r)]$ is finite and $[(l,s),(m,r)]$ is infinite, so under $\zeta$ the endpoints are kept constant and either both are doubled in length or $[(n,q),(m,r)]$ is kept fixed and only $[(l,s),(m,r)]$ is doubled. Either way, $\ii{0}{3}$ is respected. Since $\zeta$ is a bijection by definition, equality between intervals is respected too, and therefore $\zeta$ is a proper $S$-relation. It is clear, on the other hand, that $\ii{3}{4}, \ii{14}, \ii{2}{4}$ and $\ii{4}{4}$ are broken. Now, as for the second step, consider once again $\mathcal F$ and $\mathcal F'$ both based on an iterated discrete structure $\mathbb D^*$ and define a new $S$-relation $\zeta=\zeta_i$ between them as follows: $\zeta_i$ is the union of $\{([(n,q),(m,r)],[(n',q'),(m',r')]) \mid q = r \mbox{ and } n, m \in  \mathbb{Z}\}$ and $\{ ([(n,q),(m,r)],[(n',q'+(|r'-q'|/2)),(m',r')]) \mid q < r \mbox{ and } n, m \in  \mathbb{Z}\}$. In this way, finite intervals are related to themselves, while infinite ones are related to intervals with the same {\em ending} point but {\em half} the length (with respect to the rational coordinate). It is straightforward to check that this is a bijective $S$-relation which breaks $\ii{0}{4}$ and respects both equality between intervals and the relation $\ii 03$. The fact that the table is complete is a consequence of the following observation: every set not listed is either contained in some of the listed ones (and, thus, it is not maximal with respect to incompleteness) or $\alli^+$-complete (from the results of Section~\ref{sec:lin}).
\end{proof}

\begin{table}[t]
\small
\begin{center}
\begin{tabular}{p{0.34\textwidth}|p{0.04\textwidth}|p{0.04\textwidth}|p{0.04\textwidth}|p{0.04\textwidth}|p{0.04\textwidth}|}
{\em Proved}                     & $\ip 0$           & $\ip 1$          & $\ip 2$          & $\ip 3$          & $\ip 4$ \\
\hline
$\{\ip 3,\ip 4\}$                &$\bullet$          &$\bullet$         &$\bullet$         &                  &          \\
\hline
$\{\ip 0,\ip 2,\ip 4\}$          &                   &$\bullet$         &                  & $\bullet$        &          \\
\hline
\\
{\em Symmetric}                  & $\ip 0$           & $\ip 1$          & $\ip 2$          & $\ip 3$          & $\ip 4$ \\
\hline
$\{\ip 0,\ip 1\}$                &                   &                  &$\bullet$         &$\bullet$         &$\bullet$\\
\hline
\end{tabular}                                                                                                                                                                                                                                         \caption{$\mathsf{MIS}(r)$, for each $r\in\allm^+$; upper part: sets for which we give an explicit construction; lower part: symmetric ones. - Classes: $\Lin$, $\Dis$.}(Lemma~\ref{lem:tab:MISMlin}.)\label{tab:MISMlin}
\end{center}
\end{table}

\begin{lem}\label{lem:tab:MISMlin}
Tab.~\ref{tab:MISMlin} is correct and complete.
\end{lem}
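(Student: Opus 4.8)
The plan is to follow verbatim the recipe used in Lemma~\ref{lem:tab:MISIlin}: for each set $S$ and each relation $r$ that is marked in Table~\ref{tab:MISMlin}, I produce two \emph{finite} (hence trivially discrete, so that the witness lives simultaneously in $\Lin$ and $\Dis$) concrete point-interval structures $\mathcal F,\mathcal F'$ together with a surjective $S$-relation $\zeta=\zeta_p\cup\zeta_i$ that breaks $r$; by Theorem~\ref{theo:truth} this certifies $FO+S\not\rightarrow r$. Maximality then comes for free by the argument at the head of the section: any proper superset of a listed $S$ contains one of the minimally complete sets of Table~\ref{tab:allmlin}, and is therefore $r$-complete. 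The symmetric row $\{\ip 0,\ip 1\}$ requires no separate construction at all: since $\{\ip 0,\ip 1\}\sim\{\ip 3,\ip 4\}$, with reverses $\ip 0\leftrightarrow\ip 4$ and $\ip 1\leftrightarrow\ip 3$ and $\ip 2$ self-symmetric, Lemma~\ref{lem:symm} converts the $\{\ip 3,\ip 4\}$-incompleteness for $\ip 0,\ip 1,\ip 2$ into $\{\ip 0,\ip 1\}$-incompleteness for $\ip 4,\ip 3,\ip 2$, which is exactly the bottom row of the table.

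For $\{\ip 3,\ip 4\}$ I use that both relations mention only the \emph{right} endpoint of the interval, so any two intervals sharing a right endpoint are $\{\ip 3,\ip 4\}$-indistinguishable. I take $\mathbb D=\mathbb D'=\{a<b<c\}$, fix the points ($\zeta_p=Id_p$), and let $\zeta_i=Id_i\cup\{([a,c],[b,c]),([b,c],[a,c])\}$, i.e. I let the left endpoint slide while keeping the right endpoint $c$ fixed. Every pair in $\zeta_i$ preserves the right endpoint, so $\ip 3$ and $\ip 4$ are respected, and the $Id_i$ part secures totality and surjectivity. A single such $\zeta$ breaks all three targets: $\ip 0$ via $([b,c],[a,c])$ with the point $a$ (we have $[b,c]\ip 0 a$ but not $[a,c]\ip 0 a$), $\ip 1$ via the same interval pair with the point $b$, and $\ip 2$ via $([a,c],[b,c])$ with the point $b$ (we have $[a,c]\ip 2 b$ but not $[b,c]\ip 2 b$).

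The delicate case, and the one I expect to be the main obstacle, is $\{\ip 0,\ip 2,\ip 4\}$ against $\ip 1$ and $\ip 3$. The natural move --- reflecting the order so as to turn left endpoints into right endpoints --- fails here, because reversing the order interchanges $\ip 0$ and $\ip 4$, and both of these lie in $S$; reflection therefore does \emph{not} respect $S$. The escape is to pick a structure in which $\ip 0,\ip 2,\ip 4$ have \emph{no} instances whatsoever, so that every sort-respecting surjection is vacuously $S$-preserving. Concretely I take $\mathbb D=\mathbb D'=\{a<b\}$, whose only interval is $[a,b]$, set $\zeta_i=Id_i=\{([a,b],[a,b])\}$, and let $\zeta_p$ be the transposition $\{(a,b),(b,a)\}$. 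No point is strictly before, strictly inside, or strictly after $[a,b]$, so the three relations of $S$ are empty in both structures and $\zeta$ respects them trivially; nevertheless $\zeta$ breaks $\ip 1$ (since $[a,b]\ip 1 a$ holds while $[a,b]\ip 1 b$ fails, with $(a,b)\in\zeta_p$) and, symmetrically, breaks $\ip 3$ via $(b,a)\in\zeta_p$.

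Finally, completeness of the table is dispatched exactly as in Lemma~\ref{lem:tab:MISIlin}: I run the procedure of Fig.~\ref{fig:undef} on the minimally complete sets catalogued in Table~\ref{tab:allmlin} (and their symmetric images), and check that the maximal incomplete subsets of $\allm^+$ it returns are precisely the three listed, every remaining subset either containing one of those minimally complete sets --- hence being $r$-complete --- or sitting inside a listed maximal incomplete set. Because $\allm^+$ has so few relations this bookkeeping is entirely routine, so the substance of the lemma is concentrated in the two explicit constructions above, and especially in the vacuity trick needed for $\{\ip 0,\ip 2,\ip 4\}$.
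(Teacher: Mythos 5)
Your proposal is correct and follows essentially the same route as the paper: the same two witness structures ($\{a<b<c\}$ with the points fixed and the intervals sharing right endpoint $c$ permuted for $\{\ip 3,\ip 4\}$, and $\{a<b\}$ with the point transposition and $\zeta_i=Id_i$ for $\{\ip 0,\ip 2,\ip 4\}$), the same appeal to symmetry for $\{\ip 0,\ip 1\}$, and the same maximality/completeness bookkeeping. The only cosmetic difference is that your $\zeta_i$ in the first case is the non-functional union of $Id_i$ with the swap rather than the paper's bijection, which is equally valid.
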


\begin{proof} Let us start by proving that \underline{$\{\ip 3,\ip 4\}$} is $\ip0$-,$\ip1$-, and $\ip2$-incomplete. Take $\mathbb D=\mathbb D'=\{a<b<c\}$, $\zeta_p=Id_p$, and $\zeta_i=\{([b,c],[a',c']),([a,c],[b',c']),$ $([a,b],[a',b'])\}$. It is easy to verify that $\ip{3}$, $\ip{4}$ are respected, but none of the other mixed relations are. Then, to prove that \underline{$\{\ip 0,\ip2,\ip 4\}$} is $\ip1$- and $\ip3$-incomplete, we simply take $\mathbb D=\mathbb D'=\{a<b\}$, $\zeta_p=\{(a,b'),(b,a')\}$, and $\zeta_i=Id_i$. \end{proof}

\subsection{(Maximal) $\allr^+$-Incompleteness}

To conclude this section, we analyze the results shown in Tab.~\ref{tab:MISRlin}. Notice that some of the constructions presented here can be actually considered as generalizations of those seen above.

\begin{table}[t]
\small
\begin{center}
\begin{tabular}{p{0.34\textwidth}|p{0.022\textwidth}|p{0.022\textwidth}|p{0.022\textwidth}|p{0.022\textwidth}|p{0.022\textwidth}|p{0.022\textwidth}|p{0.022\textwidth}|p{0.022\textwidth}|p{0.022\textwidth}|p{0.022\textwidth}|p{0.022\textwidth}|p{0.022\textwidth}|p{0.022\textwidth}|p{0.022\textwidth}|p{0.022\textwidth}|}
{\em Proved}                                        & $=_p$             & $=_i$             & $<$              & $\ip0$           & $\ip1$           & $\ip2$           & $\ip3$           & $\ip4$           & $\ii34$          & $\ii14$          & $\ii03$          & $\ii24$          & $\ii04$          & $\ii44$           \\
 \hline
$\{\ip 0, \ip 2, \ip 4 \}\cup\alli^{+}$             & $\bullet$         &                   &          &                  &                  &                  &                  &                  &                  &                  &                  &                  &                  &                   \\
 \hline
$\{=_p, <, \ip 0,\ip 1, \ii 44, \ii 04, \ii 24\}$                       &                   &$\bullet$           &                 &                  &                  &                  &                  &                  &                  &                  &                  &                  &                  &                   \\
\hline
$\{=_p, <, \ip2, \ii 44, \ii 04, \ii 24\}$                       &                   &$\bullet$           &                 &                  &                  &                  &                  &                  &                  &                  &                  &                  &                  &                   \\
\hline
$\{=_p, =_i, \ip 1, \ii 14\}$                       &                   &                  &$\bullet$         &$\bullet$         &                  &                  &                  &                  &                  &                  &                  &                  &                  &                   \\
\hline
$\{=_p, \ip 1, \ii 44, \ii 04, \ii 24\}$            &                &   &$\bullet$         &$\bullet$         &                  &                  &                  &                  &                  &                  &                  &                  &                  &                   \\
\hline
$\{=_p, \ip 0, \ip 2, \ip 4\}\cup\alli^{+}$         &                 &  &$\bullet$         &                  &$\bullet$         &                  &$\bullet$         &                  &                  &                  &                  &                  &                  &                   \\
\hline
$\{=_p,\ip 2\}\cup\alli^+$                          &                 &  &                  &$\bullet$         &                  &                  &                  & $\bullet$        &                  &                  &                  &                  &                  &                   \\
\hline
$\{=_p,=_i,<,\ip 2,\ii 44, \ii 04, \ii 24\}$        &                &   &                  &$\bullet$         & $\bullet$        &                  &$\bullet$         & $\bullet$        & $\bullet$        & $\bullet$        & $\bullet$        &                  &                  &                   \\
\hline                                                                                                                                                                                                                                                                                                       $\{=_p,=_i, <, \ip3, \ip4, \ii 44, \ii 04, \ii 24\}$&                 &  &                  &$\bullet$         & $\bullet$        &$\bullet$         &                  &                  & $\bullet$        &  $\bullet$       & $\bullet$        &                  &                  &                   \\
\hline
$\{=_p,=_i, <, \ip3, \ip4, \ii03\}$                 &                  & &                  &$\bullet$         & $\bullet$        &$\bullet$         &                  &                  & $\bullet$        & $\bullet$        &                  & $\bullet$        & $\bullet$        & $\bullet$         \\
\hline
$\{=_p, \ip4\}\cup\alli^{+}$                        &                  & &                  &$\bullet$         &                  &$\bullet$         &                  &                  &                  &                  &                  &                  &                  &                   \\
\hline
$\{=_p, < \}\cup\alli^{+}$                          &                   & &                  & $\bullet$       & $\bullet$        &$\bullet$         & $\bullet$                 &  $\bullet$       &                  &                  &                  &                  &                  &                   \\
\hline
$\{=_p, =_i, <, \ip 0, \ip 1, \ii 04\}$             &                  & &                  &                  &                  &                  &                  &                  &                  &                  &                  & $\bullet$        &                  & $\bullet$        \\
\hline
$\{=_p, =_i, <, \ip 0, \ip 1, \ii 44\}$             &                   & &                  &                  &                  &                  &                  &                  &                  &                  &                  & $\bullet$        & $\bullet$        &                   \\
\hline
$\{=_p, =_i, \ip 2, \ii 04, \ii 44\}$               &                   & &                  &                  &                  &                  &                 &                  &                  &                  &                  & $\bullet$        &                  &                   \\
\hline
$\{=_p, =_i, <, \ii 04, \ii 44\}$                   &                   & &                  &                  &                  &                  &                  &                  &                  &                  &                  & $\bullet$        &                  &                   \\
\hline
$\{=_p, =_i, <, \ip 2, \ii 24, \ii 44\}$                   &             &      &                  &                  &                  &                  &                  &                  &                  &                  &                  &                  & $\bullet$        &                   \\
\hline
$\{=_p, =_i, <, \ip 0, \ip 1, \ii 24\}$             &                  &  &                  &                  &                  &                  &                  &                  &                  &                  &                  &                  & $\bullet$        & $\bullet$         \\
\hline
\\
{\em Symmetric}                                     & $=_p$             & $=_i$              & $<$              & $\ip0$           & $\ip1$           & $\ip2$           & $\ip3$           & $\ip4$           & $\ii34$          & $\ii14$          & $\ii03$          & $\ii24$          & $\ii04$          & $\ii44$           \\
\hline
$\{=_p, <, \ip 3,\ip 4, \ii 44, \ii 04, \ii 24\}$                       &                   &$\bullet$           &                 &                  &                  &                  &                  &                  &                  &                  &                  &                  &                  &                   \\
\hline
$\{=_p, =_i, \ip 3, \ii 03\}$                       &                   & &$\bullet$         &                  &                  &                  &                  & $\bullet$        &                  &                  &                  &                  &                  &                   \\
\hline
$\{=_p, \ip 3, \ii 44, \ii 04, \ii 24\}$            &                   & &$\bullet$         &                  &                  &                  &                  & $\bullet$        &                  &                  &                  &                  &                  &                   \\
\hline
$\{=_p,=_i, <, \ip0, \ip1, \ii 44, \ii 04, \ii 24\}$&                   & &                  &                  &                  &$\bullet$         & $\bullet$        & $\bullet$        & $\bullet$        & $\bullet$        & $\bullet$         &                  &                  &                   \\
\hline
$\{=_p,=_i, <, \ip0, \ip1, \ii14\}$                 &                   & &                  &                  &                  &$\bullet$         & $\bullet$        & $\bullet$        & $\bullet$        &                  & $\bullet$        & $\bullet$        & $\bullet$        & $\bullet$         \\
\hline
$\{=_p, \ip0\}\cup\alli^{+}$                        &                   & &                  &                  &                  &$\bullet$         &                  & $\bullet$        &                  &                  &                  &                  &                  &                   \\
\hline
$\{=_p, =_i, <, \ip 3, \ip 4, \ii 04\}$             &                   & &                  &                  &                  &                  &                  &                  &                  &                  &                  & $\bullet$        &                  & $\bullet$         \\
\hline
$\{=_p, =_i, <, \ip 3, \ip 4, \ii 44\}$             &                   & &                  &                  &                  &                  &                  &                  &                  &                  &                  & $\bullet$        & $\bullet$        &                   \\
\hline
$\{=_p, =_i, <, \ip 3, \ip 4, \ii 24\}$             &                   & &                  &                  &                  &                  &                  &                  &                  &                  &                  &                  & $\bullet$        & $\bullet$         \\
\hline
\end{tabular}                                                                                                                                                                                                                                         \caption{$\mathsf{MIS}(r)$, for each $r\in\allr^+$; upper part: sets for which we give an explicit construction; lower part: symmetric ones. - Classes: $\Lin$, $\Dis$.}(Lemma~\ref{lem:tab:MISRlin}.)\label{tab:MISRlin}
\end{center}
\end{table}

\begin{lem}\label{lem:tab:MISRlin}
Tab.~\ref{tab:MISRlin} is correct and complete.
\end{lem}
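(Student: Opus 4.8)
The plan is to establish the two claims of the lemma separately: \emph{correctness} (every pair $(S,r)$ marked with a $\bullet$ in Tab.~\ref{tab:MISRlin} genuinely has $FO+S\not\rightarrow r$) and \emph{completeness} (no maximally $r$-incomplete set is omitted). For correctness, by Theorem~\ref{theo:truth} and the remark following it, it suffices, for each marked pair, to exhibit two concrete point-interval structures $\mathcal{F}=\langle\mathbb D,\mathbb I(\mathbb D),S\rangle$ and $\mathcal{F}'=\langle\mathbb D',\mathbb I(\mathbb D'),S\rangle$ together with a surjective $S$-relation $\zeta=\zeta_p\cup\zeta_i$ that breaks $r$. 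I would base every construction on a \emph{discrete} (typically finite, and in two cases iterated discrete) linear order, so that the same witness simultaneously establishes incompleteness over $\Lin$ and over $\Dis$, as the text anticipates. Two economies drive the organization: a single $\zeta$ usually breaks all the relations marked in its row at once, and several rows can be handled by minor variants of one construction, many of them generalizing the finite witnesses already used in Lemmas~\ref{lem:tab:MISIlin} and~\ref{lem:tab:MISMlin}.

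For the \emph{Proved} (upper) block I would group the constructions by the \emph{type} of map needed. To break $=_p$ or a mixed relation I make $\zeta_p$ non-injective, or otherwise endpoint-blind: over a small finite order two distinct points can be collapsed or swapped without disturbing any relation in $S$ precisely when $S$ omits the relations ($\ip 1$, $\ip 3$, or $=_p$) that would detect the difference --- this is the mechanism behind rows such as $\{\ip 0,\ip 2,\ip 4\}\cup\alli^{+}$ (breaking $=_p$) and $\{=_p,\ip 2\}\cup\alli^{+}$ (breaking $\ip 0,\ip 3$). To break an interval relation I instead let $\zeta_i$ fail to be a length-preserving bijection: collapsing, permuting, or rescaling intervals over a finite order handles most interval-interval bullets (as in the $3$-point witnesses reused from Lemma~\ref{lem:tab:MISIlin}), while the two genuinely hard rows require the iterated discrete structure $\mathbb D^{*}$ of Definition~\ref{def:iterated}, where finite intervals are fixed and infinite ones are rescaled (doubled or halved in the rational coordinate). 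In each case the real content is the routine-but-delicate verification that $\zeta$ \emph{respects} every relation of $S$, not merely the conspicuous ones, while breaking the targeted $r$; the rescaling maps on $\mathbb D^{*}$ are where this is most demanding, since one must check preservation of $\ii{0}{3}$, $\ii{2}{4}$, $\ii{4}{4}$, and $\ii{0}{4}$ under maps that alter interval lengths.

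The \emph{Symmetric} (lower) block would require no new constructions. Each set $S'$ there is the symmetric $S\sim S'$ of an upper-block set $S$, and the relation it is claimed to break is the reverse of the corresponding upper-block relation; Lemma~\ref{lem:symm} then transfers $FO+S\not\rightarrow r$ into $FO+S'\not\rightarrow r'$, with $r'=r$ when $r$ is symmetric. For upper-block sets that are symmetric to themselves, the same lemma yields, at no extra cost, incompleteness for the reverses of the listed relations, exactly as noted in the preamble to this section.

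The \emph{completeness} claim splits into maximality and exhaustiveness, and both are largely delivered by machinery already set up. Maximality of each listed $S$ is the argument recorded just before the lemma: every listed set is an output of the algorithm of Fig.~\ref{fig:undef}, so adding any $r'\notin S$ produces a superset that, after closure under the definability rules of Section~\ref{sec:lin}, contains an $r$-complete set; hence once correctness pins down that the listed sets really are $r$-incomplete, no proper superset can be. Exhaustiveness follows because that algorithm enumerates \emph{all} subsets of $\allr^{+}$ and, after closing them under the now-verified definability rules, retains precisely the maximal ones failing to define $r$; consequently any set not appearing in Tab.~\ref{tab:MISRlin} is either $r$-complete (detected by the closure, i.e.\ by the results of Section~\ref{sec:lin}) or a proper subset of some listed set and therefore not maximal. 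I expect the principal obstacle to be the volume and uniformity of the correctness verifications, concentrated in the two $\mathbb D^{*}$-based rows: designing the rescaling $\zeta_i$ so that it respects all interval-interval relations in $S$ simultaneously, and confirming that the break of the target relation is not accidentally repaired by surjectivity, is the one place where a careless construction would fail.
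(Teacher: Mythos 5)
Your plan coincides with the paper's proof in every structural respect: explicit surjective $S$-truth-preserving relations over discrete orders for each row of the \emph{Proved} block (so that one witness serves both $\Lin$ and $\Dis$), Lemma~\ref{lem:symm} for the \emph{Symmetric} block, maximality from the fact that each listed set is an output of the procedure of Fig.~\ref{fig:undef}, and exhaustiveness because any unlisted set is either contained in a listed one or is $r$-complete by Section~\ref{sec:lin}. The gap is that you never actually produce any of the eighteen witnesses, and for this lemma the witnesses \emph{are} the proof: classifying them as ``collapse a point, swap points, permute or rescale intervals'' does not determine them, and several are not mechanical to find (e.g.\ for $\{=_p,=_i,\ip 1,\ii 14\}$ breaking $<$ and $\ip 0$ one needs an infinite discrete order, $\mathbb Z$ with the reflection $a\mapsto -a$ and $[a,b]\mapsto[-a,-a+|b-a|]$, not a finite one; and the row $\{=_p,=_i,\ip 2,\ii 04,\ii 44\}$ breaking $\ii 24$ needs a carefully chosen four-point permutation of both sorts). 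Two details of your sketch are also off: the row $\{=_p,\ip 2\}\cup\alli^{+}$ is marked for $\ip 0$ and $\ip 4$, not $\ip 0$ and $\ip 3$ (and since that set contains $=_p$, the mechanism is a point \emph{swap}, which is a bijection, rather than a collapse); and only one row of this table uses the iterated discrete structure $\mathbb D^{*}$ (though it needs two separate relations $\zeta$, one doubling and one halving infinite intervals), while two further rows use $\mathbb Z$. So the blueprint is the right one, but as it stands the correctness half of the lemma remains unproved.
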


\begin{proof}
We need to give a construction for 18 different sets. Most of these constructions are similar to each other, but
no part of any of them can be actually be re-used. Notice that every construction is based on a discrete set: this is why definability results over the classes $\Dis$ and $\Lin$ coincide. Observe, also, that most of the constructions are actually based on finite sets, which suggests that the
behaviour of these languages on $\Fin$ is very similar to that of $\Lin$ and $\Dis$; nevertheless, as we have mentioned, in Part II a deeper analysis
is required to complete the case $\Fin$.

\medskip

Let $S$ be \underline{$\{\ip 0, \ip 2, \ip 4, \}\cup\alli^{+}$}; we have to prove that it is $=_p$-incomplete. To this end, we simply take $\mathbb D = \mathbb D' =\{a<b\}$ and $\zeta=(\zeta_p,\zeta_i)$, where $\zeta_i=Id_i$, $\zeta_p=\{(a,a'),(a,b'),(b,a')\}$. The relation $\zeta:\mathcal F\rightarrow\mathcal F'$ is clearly total and surjective with respect to both interval and points, and every relations in $S$ is trivially respected. However, the pairs $(a,a')$ and $(a,b')$ show that equality between points is not respected. Let $S$ be \underline{$\{=_p, <, \ip 0, \ip 1, \ii 44, \ii 04, \ii 24\}$}. We have to prove here that it is $=_i$-incomplete. In this case, let $\mathbb D= \mathbb D'=\{a<b<c\}$,  $\zeta_p=Id_p$ and $\zeta_i=\{([a,c],[a,b])\}\cup Id_i$. Again, $S$ is total and surjective. Moreover, interval-interval relations are trivially respected, and point-interval relations are respected thanks to the fact that the {\em beginning} points of intervals are maintained through $\zeta$. But $[a,b]$ is the image of both $[a,b]$ and $[a,c]$, proving that $=_i$ is broken. A very similar construction, based on the same structures, with $\zeta_p=Id_p$ and $\zeta_i=\{([b,c],[a,b])\}\cup Id_i$, covers the $=_i$-incompleteness of \underline{$\{=_p, <, \ip 2, \ii 44, \ii 04, \ii 24\}$}. Let $S$ be \underline{$\{=_p, =_i, \ip 1, \ii 14\}$}; we shall prove that it is $\ip0$- and $<$-incomplete. In this case, let $\mathbb D=\mathbb D'=\mathbb Z$, and define $\zeta=(\zeta_p,\zeta_i)$ as follows: $(a,-a)\in\zeta_p$ for every $a\in\mathbb Z$, and $([a,b],[-a,-a+|b-a|])\in\zeta_i$ for every $[a,b]\in\mathbb I(\mathbb D)$, so that the length of every interval is preserved. Now, the interval-interval relation $\ii{1}{4}$ must be respected by $\zeta$, as intervals are mapped along with their beginning point, while the ending point does not move relative to the beginning point; for the same reason, $\ip 1$ is also respected. Finally, $\zeta$ is a bijection, so $=_i$ and $=_p$ are respected too; as it breaks both $<$ and $\ip 0$, these cannot be expressed in this language. Let $S$ be \underline{$\{=_p, \ip 1, \ii 44, \ii 04, \ii 24\}$}; we will prove that it is $<$- and $\ip 0$-incomplete. We take $\mathbb D=\mathbb D'=\{a<b<c\}$ and define $\zeta_p=\{(a,b'),(b, a'),(c,c')\}$ and $\zeta_i=\{([a,b],[b',c']),([b,c],[a',b']),([b,c],[a',c']),([a,c],[b',c'])\}$. All interval-interval relations are (vacuously) respected, $\ip 1$ is respected as well (as {\em beginning} points of intervals are preserved), and equality between points is respected too (as $\zeta_p$ is a bijection), but $<$ and $\ip 0$ are broken. If $S$ be \underline{$\{=_p, \ip 0, \ip 2, \ip 4\}\cup\alli^{+}$}, then take $\mathbb D=\mathbb D'=\{a<b\}$, $\zeta_i = \{([a,b], [a', b']) \}$ and $\zeta_p=\{(a,b'),(b,a')\}$, which respects $S$ and breaks $<$, $\ip 1$, and $\ip 3$, as it can be immediately verified. Now, let $S$ be \underline{$\{=_p, \ip 2\}\cup\alli^{+}$}. To prove that $S$ is $\ip 0$- and $\ip 4$-incomplete we take $\mathbb D=\mathbb D'=\{a<b<c\}$, and $\zeta=(\zeta_p,\zeta_i)$, where $\zeta_i=Id_i$, $\zeta_p=\{(a,c'),(c,a'),(b,b')\}$. The relation $\zeta:\mathcal F\rightarrow\mathcal F'$ is clearly surjective respectively to both interval and points, and every relations in $S$ is trivially preserved; on the other hand, both $\ip 0$ and $\ip 4$ are broken.
Let $S$ be \underline{$\{=_p,=_i,<,\ip 2,\ii 44, \ii 04, \ii 24\}$}; we prove that it is $\ip 0$-,$\ip1$-,$\ip 3$-,$\ip 4$-,$\ii 14$-,$\ip 03$-, and $\ii 34$-incomplete. It is enough to take $\mathbb D=\mathbb D'=\{a<b<c\}$, with $\zeta=(\zeta_p,\zeta_i)$ defined as $\zeta_p=Id_p$ and $\zeta_i=\{([a,c],[a',c']),([a,b],$ $[b',c']),([b,c],[a',b'])\}$, to have a relation $\zeta$ which is surjective respectively to both intervals and points, preserves every relations in $S$, and breaks all the indicated relations. If $S$ is \underline{$\{=_p,=_i, <, \ip3, \ip4, \ii 44, \ii 04, \ii 24\}$}, we have to prove that it is $\ip 0$-,$\ip1$-,$\ip 2$-,$\ii 14$-,$\ii 03$-, and $\ii 34$-incomplete. Take $\mathbb D=\mathbb D'=\{a<b<c\}$, $\zeta_p=Id_p$, and $\zeta_i=\{([b,c],[a',c']),([a,c],[b',c']),$ $([a,b],[a',b'])\}$. It is easy to verify that all 9 relations in $S$ are respected, but none of the indicated ones are. Now, let $S$ be \underline{$\{=_p,=_i, <, \ip3, \ip4, \ii03\}$}; we prove that it is $\ip0$-,$\ip1$-, and $\ip2$-incomplete, and, also, $r$-incomplete for every $r \in\alli^{+}\setminus\{\ii03,=_i\}$. As in Lemma~\ref{lem:tab:MISIlin}, we proceed in two steps, first proving the incompleteness for all indicated relations but $\ii 04$, and, then, dealing with $\ii 04$ on its own. As for the first step, define $\mathcal F$ and $\mathcal F'$ both based on the iterated discrete structure $\mathbb D^*$, as in Def.~\ref{def:iterated}. Define a relation $\zeta$ between them as follows:
\begin{enumerate}[label={(\emph{\roman*})}]
\item $\zeta_p=Id_p$; \item $\zeta_i$ is the union of
  \begin{align*}
    \{([(n,q),(m,r)],[(n',q'),(m',r')]) \mid q = r \mbox{ and } n, m \in  \mathbb{Z}\}
  \end{align*}
  and
  \begin{align*}
    \{ ([(n,q),(m,r)],[(n',q'-|r'-q'|),(m',r')]) \mid q < r \mbox{ and } n, m \in  \mathbb{Z}\}.
  \end{align*}
\end{enumerate} In this construction, which generalizes to the case of $\allr^+$ the one already seen in Lemma~\ref{lem:tab:MISIlin} by adding the $\zeta_p$ component, finite intervals (i.e., those containing a finite number of points) are related to themselves, while infinite ones are related to intervals with the same {\em ending} point but twice the length (w.r.t. the rational coordinate). We want to prove that $\zeta$ is an $S$-relation. Obviously, the only interesting cases are those that involve at least one infinite interval. Suppose, first, that $[(n,q),(m,r)]$ is $\ip 3$- or $\ip 4$-related to some point; under $\zeta$, these relations are clearly respected. So, consider two intervals $[(n,q),(m,r)]$ and $[(l,s),(m,r)]$ with $(l,s) \prec (n,q)$; they are $\ii{0}{3}$-related, so either both are infinite or $[(n,q),(m,r)]$ is finite and $[(l,s),(m,r)]$ is infinite, so under $\zeta$ the endpoints are kept constant and either both are `doubled' in length or $[(n,q),(m,r)]$ is kept fixed and only $[(l,s),(m,r)]$ is `doubled'. Either way, $\ii{0}{3}$ is respected. Since $\zeta$ is a bijection, equalities of both sort is respected too, and therefore $\zeta$ is a proper $S$-relation. It is clear, on the other hand, that $\ip 0,\ip1,\ip 2$ and $\ii{3}{4}, \ii{14}, \ii{2}{4}$ and $\ii{4}{4}$ are broken. As for the second step, define, again, $\mathcal F$ and $\mathcal F'$ both based on an iterated discrete structure $\mathbb D^*$, and define a new relation $\zeta$ as follows: \begin{enumerate*}[label={(\emph{\roman*})}] \item $\zeta_p=Id_p$; \item $\zeta_i$ is the union of $\{([(n,q),(m,r)],[(n',q'),(m',r')]) \mid q = r \mbox{ and } n, m \in  \mathbb{Z}\}$ and $\{ ([(n,q),(m,r)],[(n',q'+(|r'-q'|/2)),(m',r')]) \mid q < r \mbox{ and } n, m \in  \mathbb{Z}\}$\end{enumerate*}. In this way, finite intervals are related to themselves, while infinite ones are related to intervals with the same {\em ending} point but {\em half} the length (with respect to the rational coordinate). It is straightforward to check that this relation preserves both equalities and $\ii{0}{3}$, and that it breaks $\ii{0}{4}$ as we wanted. Let $S$ be \underline{$\{=_p, \ip4\}\cup\alli^{+}$}; we prove that it is $\ip 0$- and $\ip 2$-incomplete. For this, we take $\mathbb D=\mathbb D'=\{a<b<c\}$;  we define $\zeta_p=\{(a,b'),(b,a'),(c,c')\}$ and $\zeta_i=Id_i$, and we have that all interval-interval relations are respected, $\ip 4$ is respected as well ($\zeta_p(c,c')$ holds, and $c$ and $c'$ are the only points in relation $\ip 4$ with some interval), and equality between points is respected too (as $\zeta_p$ is a bijection), but $\ip 0$ and $\ip 2$ are broken. Let $S$ be \underline{$\{=_p, <\}\cup\alli^{+}$}; we prove here that this set is $\allm^+$-incomplete. To this aim we can simply take two copies of the integers, $\zeta_p=\{(n,n'+1)\}$ for every $n\in\mathbb Z$, and $\zeta_i=Id_i$; clearly, $\zeta$ respects $S$, as the only broken relations are the mixed ones. If $S$ is \underline{$\{=_p, =_i, <,\ip 0, \ip 1, \ii 04\}$}, for which we have to prove its $\ii 44$- and $\ii 24$-incompleteness, we take $\mathbb D=\mathbb D'=\{a<b<c<d\}$, $\zeta_p=Id_p$, and $\zeta_i=\{([a,b],[a',c']),([a,c],[a',b'])$ plus the identity relation on every other interval; relations $\ip 0$ and $\ip1$ are preserved, since the {\em starting} point of every interval does not change, the relation $\ii 04$ is preserved as well, as the only intervals that are not identically related have no {\em contained} intervals (neither they are {\em contained} by any other), and since $\zeta$ is a bijection, equalities are preserved too. Nevertheless, the two indicated relations are broken. Let $S$ be \underline{$\{=_p, =_i, <,\ip 0, \ip 1, \ii 44\}$}; we have to prove that it is $\ii 04$- and $\ii 24$-incomplete. We proceed as before, by taking $\mathbb D=\mathbb D=\{a<b<c<d\}$, $\zeta_p=Id_p$, and $\zeta_i=\{([a,c],[a',d']),([a,d],[a',c'])\}$ plus the identity relation on every other interval, and, as it is immediate to check, all interval-interval and point-interval relations in $S$ are respected, but $\ii 04$ and $\ii 24$ are not respected. When $S$ is \underline{$\{=_i, =_p, \ip 2, \ii04, \ii44\}$}, we need to prove that it is $\ii 24$-incomplete. To this end, we take $\mathbb D=\mathbb D'=\{a<b<c<d\}$, $\zeta_p(b)=c'$, $\zeta_p(c)=b'$ plus the identity relation over the other points, and $\zeta_i=\{([a,c],[b',d']),([b,d],[a',c'])$ plus the identity relation on every other interval.
The relation $\ip 2$ is preserved, since the internal points of every interval involved in $\zeta$ are moved along the interval that {\em contains} them, the relations $\ii 04$ and $\ii 44$ are respected, as the only intervals that are not identically related via $\zeta$ are not {\em contained} in, or {\em before} any other interval, and since $\zeta$ is a bijection, equalities are respected as well. Conversely, $\ii 24$ is broken.
An identical construction, where we simply take $\zeta_p=Id_p$, proves the $\ii 24$-incompleteness of \underline{$\{=_i, =_p, <, \ii04, \ii44\}$}.
When we have to prove that \underline{$\{=_i, =_p, <, \ip2, \ii24,\ii 44\}$} is $\ii 04$-incomplete, we simply take $\mathbb D=\mathbb D'=\{a<b<c<d\}$, $\zeta_p=Id_p$, and $\zeta_i=\{([b,c],[c',d']),([c,d],[b',c'])$ plus the identity relation on every other interval, and, finally,
to prove that \underline{$\{=_p, =_i, <,\ip 0, \ip 1, \ii 24\}$} is $\ii 44$- and $\ii 04$-incomplete, we take $\mathbb D=\mathbb D'=\{a<b<c<d\}$, $\zeta_p=Id_p$, and $\zeta_i=\{([a,b],[a',d']),([a,d],[a',b'])$ plus the identity relation on every other interval. \end{proof}


\section{Harvest: The Complete Picture for \texorpdfstring{$\Lin$}{Lin} and \texorpdfstring{$\Dis$}{Dis}}\label{sec:harvest}

\begin{figure}[t]
\scriptsize
\centering
\begin{code}{60mm}
\FUNCTION{MinDef\_MaxUndef}{}
\BEGIN
\FORALL S\subseteq\allr^{+}\\
\BEGIN
\IF \forall r\in\allr^{+} (r\in Closure(S)) \AND\\
    S\ is\ minimal, \ list\ S \ as\ \mathrm{mcs(\allr^{+})}\\
\IF \exists r\in\allr^{+} (r\notin Closure(S)) \AND\\
    S\ is\ maximal, \ list\ S \ as\ \mathrm{MIS(\allr^{+})}\\
\END\\
\RETURN;
\END \\ \\
\end{code}
\begin{code}{50mm}
\FUNCTION{IsExpressiveAs}{set\ S,S'\ }
\BEGIN
S=Closure(S);\\
S'=Closure(S');\\
\IF \ S=S' \RETURN \ Yes;\\
\RETURN\ No;
\END \\ \\
\end{code}
\caption{Pseudo-code to compare the expressive power of subsets of $\allr^{+}$.}\label{fig:diff}
\end{figure}

We are now capable to identify all expressively different subsets of $\allr^{+}$, $\alli^+$, and $\allm^+$. One can easily establish the expressive power of a particular subset $S$ with respect to another subset $S'$ by simply comparing, containment-wise, the closure (by definability) of $S$ with the closure of $S'$ (see Fig.~\ref{fig:diff}). Clearly, it is very difficult to display the resulting Hasse-diagram; we then limit ourselves to list all maximally incomplete sets for each case. The following theorem is a consequence of all results seen so far.

\begin{table}[t]
\small
\begin{center}
\begin{tabular}{|p{0.18\textwidth}|p{0.34\textwidth}|}
\hline
\multicolumn{2}{|c|}{$\allr^+$}\\
\hline
$\mathsf{mcs}$ & $\mathsf{MIS}$ \\
\hline
$\{\ip0, \ip2, < \}$                  &$\{=_i, =_p, <,\ip0, \ip1, \ii14\}$               \\
$\{\ip0, \ip3 \}$                     &$\{=_i, =_p, <,\ip0, \ip1, \ii24, \ii04, \ii44\}$ \\
$\{\ip0, \ip4, < \}$                  &$\{=_p, \ip0, \ip2, \ip4\}\cup\alli^{+}$          \\
$\{\ip0, \ii14, \ii24, < \}$          &$\{=_i, =_p, <,\ip2, \ii24, \ii04, \ii44\}$       \\
$\{\ip0, \ii14, \ii04, < \}$          &$\{=_p, <\}\cup\alli^{+}$                         \\
$\{\ip0, \ii14, \ii44, < \}$          &$\{=_i, =_p, <,\ip3, \ip4, \ii03\}$               \\
$\{\ip0, \ii03, < \}$                 &$\{=_i, =_p, <,\ip3, \ip4, \ii24, \ii04, \ii44\}$ \\
$\{\ip0, \ii34, < \}$                 &                                                  \\
$\{\ip1, \ip2 \}$                     &                                                  \\
$\{\ip1, \ip3 \}$                     &                                                  \\
$\{\ip1, \ip4 \}$                     &                                                  \\
$\{\ip1, \ii14, \ii24 \}$             &                                                  \\
$\{\ip1, \ii14, \ii04 \}$             &                                                  \\
$\{\ip1, \ii14, \ii44 \}$             &                                                  \\
$\{\ip1, \ii03 \}$                    &                                                  \\
$\{\ip1, \ii34 \}$                    &                                                  \\
$\{\ip2, \ip3 \}$                     &                                                  \\
$\{\ip2, \ip4, < \}$                  &                                                  \\
$\{\ip2, \ii14, < \}$                 &                                                  \\
$\{\ip2, \ii03, < \}$                 &                                                  \\
$\{\ip2, \ii34, < \}$                 &                                                  \\
$\{\ip3, \ii14 \}$                    &                                                  \\
$\{\ip3, \ii24, \ii03 \}$             &                                                  \\
$\{\ip3, \ii03, \ii04 \}$             &                                                  \\
$\{\ip3, \ii03, \ii44 \}$             &                                                  \\
$\{\ip3, \ii34 \}$                    &                                                  \\
$\{\ip4, \ii14, < \}$                 &                                                  \\
$\{\ip4, \ii24, \ii03, < \}$          &                                                  \\
$\{\ip4, \ii03, \ii04, < \}$          &                                                  \\
$\{\ip4, \ii03, \ii44, < \}$          &                                                  \\
$\{\ip4, \ii34, < \}$                 &                                                  \\
\hline
\end{tabular}
\begin{tabular}{|p{0.16\textwidth}|p{0.20\textwidth}|}
\hline
\multicolumn{2}{|c|}{$\alli^+$}\\
\hline
$\mathsf{mcs}$ & $\mathsf{MIS}$ \\
\hline
$\{\ii34\}$       &$\{=_i,\ii14\}$               \\
$\{\ii14,\ii24\}$ &$\{=_i,\ii24, \ii04, \ii44\}$ \\
$\{\ii14,\ii04\}$ &$\{=_i,\ii03\}$               \\
$\{\ii14,\ii03\}$ &                              \\
$\{\ii14,\ii44\}$ &                              \\
$\{\ii03,\ii24\}$ &                              \\
$\{\ii03,\ii04\}$ &                              \\
$\{\ii03,\ii44\}$ &                              \\
\hline
\multicolumn{2}{c}{}\\
\hline
\multicolumn{2}{|c|}{$\allm^+$}\\
\hline
$\mathsf{mcs}$ & $\mathsf{MIS}$ \\
\hline
$\{\ip 0,\ip 3\}$ &$\{\ip 0,\ip 1\}$      \\
$\{\ip 1,\ip 2\}$ &$\{\ip0,\ip2,\ip 4\}$  \\
$\{\ip 1,\ip 3\}$ &$\{\ip 3,\ip 4\}$      \\
$\{\ip 1,\ip 4\}$ &                              \\
$\{\ip 2,\ip 3\}$ &                              \\
\hline
\end{tabular}
\caption{Left: minimally $\allr^+$-complete and maximally $\allr^+$-incomplete sets.
Right, top: minimally $\alli^+$-complete and maximally $\alli^+$-incomplete sets. Right, bottom: minimally $\allm^+$-complete and maximally $\allm^+$-incomplete sets. - Classes: $\Lin$, $\Dis$.}\label{tab:harvest}
\end{center}
\end{table}

\begin{thm}
If a set of relations is listed:
\begin{itemize}[nosep]
\item as $\mathsf{mcs}(\allr^{+})$ (resp., $\mathsf{mcs}(\alli^{+})$, $\mathsf{mcs}(\allm^{+})$) in Tab.~\ref{tab:harvest} left (resp., right-top, right-bottom) side, left column, then it is minimally $\allr^{+}$-complete (resp., minimally $\alli^{+}$-complete, minimally $\allm^{+}$-complete) in the class of all linearly ordered sets and in the class of all discrete linearly ordered sets.
\item as $\mathsf{MIS}(\allr^{+})$ (resp., $\mathsf{MIS}(\alli^{+})$, $\mathsf{MIS}(\allm^{+})$) in Tab.~\ref{tab:harvest} left (resp., right-top, right-bottom) side, right column, then it is maximally $\allr^{+}$-incomplete (resp., maximally $\alli^{+}$-incomplete, maximally $\allm^{+}$-incomplete) in the class of all linearly ordered sets and in the class of all discrete linearly ordered sets.
\end{itemize}
\end{thm}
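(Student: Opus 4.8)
The statement is an aggregation of the per-relation results, so the plan is to show that Tables~\ref{tab:allilin}--\ref{tab:Ii44} and \ref{tab:MISIlin}--\ref{tab:MISRlin} jointly determine the definability closure operator on subsets of $\allr^{+}$, and that the aggregate spectra of Table~\ref{tab:harvest} are then precisely the output of the fixpoint routine of Fig.~\ref{fig:diff}. First I would make the closure operator explicit: given $S\subseteq\allr^{+}$, let $\mathrm{Closure}(S)$ be obtained by repeatedly adjoining any relation $r$ for which some set listed as $\mathsf{mcs}(r)$ is already contained in the current set, iterating to a fixpoint (Lemma~\ref{lem:symm} supplies the rules for the reverses $\ip3,\ip4,\ii03$ from those for $\ip0,\ip1,\ii14$). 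That $r\in\mathrm{Closure}(S)$ implies $FO+S\rightarrow r$ --- the soundness half --- follows from Lemmas~\ref{lem:tab:allilin}--\ref{lem:tab:Ii44} together with the transitivity of $\rightarrow$, since an $FO+S$ formula may freely use previously defined relations; equivalently, $FO+S$ and $FO+\mathrm{Closure}(S)$ define exactly the same relations.

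Next I would establish the converse, that $r\notin\mathrm{Closure}(S)$ forces $FO+S\not\rightarrow r$; this is the only point where genuine work is invoked, and it is exactly what Section~\ref{sec:lin-incomp} supplies. Each $M$ listed as a maximally $r$-incomplete set was produced as a maximal fixpoint of $\mathrm{Closure}$ omitting $r$ (the routine of Fig.~\ref{fig:undef}), and Lemmas~\ref{lem:tab:MISIlin}, \ref{lem:tab:MISMlin} and \ref{lem:tab:MISRlin} verify for each such $M$ a surjective $S$-truth-preserving relation that breaks $r$, so that $FO+M\not\rightarrow r$ by Theorem~\ref{theo:truth}. Now if $r\notin\mathrm{Closure}(S)$ then $\mathrm{Closure}(S)$ is a closed set omitting $r$, hence contained in one of these maximal sets $M$; since $S\subseteq\mathrm{Closure}(S)\subseteq M$ and $FO+M\not\rightarrow r$, monotonicity of definability (a smaller signature defines fewer relations) gives $FO+S\not\rightarrow r$. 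Combining the two halves yields $r\in\mathrm{Closure}(S)$ if and only if $FO+S\rightarrow r$: the combinatorial closure computes semantic definability exactly.

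With a faithful closure operator, the routine \textsc{MinDef\_MaxUndef} of Fig.~\ref{fig:diff} is correct by construction: $S$ is $\allr^{+}$-complete iff $\mathrm{Closure}(S)=\allr^{+}$, minimally so iff moreover no proper subset closes to $\allr^{+}$, and dually for maximally $\allr^{+}$-incomplete sets. The verification then reduces to a finite enumeration over the $2^{14}$ subsets of $\allr^{+}$ --- cut down to the $11$ explicit relations by Lemma~\ref{lem:symm} --- whose output is exactly the left column of Table~\ref{tab:harvest}; restricting the same closure and routine to the signatures $\alli^{+}$ and $\allm^{+}$, and invoking Lemmas~\ref{lem:tab:allilin}, \ref{lem:tab:allmlin}, \ref{lem:tab:MISIlin} and \ref{lem:tab:MISMlin}, yields the two right-hand tables.

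Finally I would dispatch the coincidence of $\Lin$ and $\Dis$. Because $\Dis\subseteq\Lin$, every definability proved over $\Lin$ in Section~\ref{sec:lin} holds over $\Dis$, so each $\mathsf{mcs}(r)$ survives; conversely every undefinability witness of Section~\ref{sec:lin-incomp} is built on a discrete structure (an iterated-discrete or a finite order), so each $\mathsf{MIS}(r)$ is already $r$-incomplete over $\Dis$, whence $\Dis$-incompleteness propagates up to $\Lin$. Thus the relation $FO+S\rightarrow r$ is literally the same over the two classes for all $S$ and $r$, and since $\mathsf{mcs}$ and $\mathsf{MIS}$ are defined solely from this relation, their spectra --- minimality and maximality included --- transfer verbatim. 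The only conceptual subtlety, which must not be glossed over, is precisely this last transfer: minimality and maximality do not in general pass between a class and a subclass, and it is legitimate here only because the full definability relation, not merely the inclusion $\Dis\subseteq\Lin$, coincides. No single step is hard; the real burden of the theorem was discharged earlier, in assembling a provably complete list of maximally incomplete sets, since the tightness of $\mathrm{Closure}$ --- and hence the minimality of every listed complete set --- rests entirely on that enumeration.
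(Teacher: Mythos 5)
Your proposal is correct and follows essentially the same route as the paper, which states the theorem without a separate proof as ``a consequence of all results seen so far'': you simply make explicit the closure-operator bookkeeping, the two-sided argument (soundness from the definability lemmas, tightness from the verified maximally incomplete sets), and the $\Lin$/$\Dis$ transfer via discrete counterexamples that the paper's Strategy subsection and Sections~\ref{sec:lin}--\ref{sec:lin-incomp} already carry out. Your closing remark correctly locates the only real content --- the completeness of the enumeration of maximally incomplete sets --- and your caveat about minimality/maximality not transferring between classes in general matches the paper's own warning.
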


\section{Conclusions}

We considered here the two-sorted first-order temporal language that includes relations between intervals,
points, and inter-sort, and we treated equality between points and between intervals as any other relation, with no special role. Under four
different assumptions on the underlying structure, namely, linearity only, linearity+discreteness, linearity+density, and linearity+unboundedness, we
asked the question: which relation can be first-order defined by which subset of all relations? As a result, we identified all possible inter-definability
between relations, all minimally complete, and all maximally incomplete subsets of relations. These inter-definability results allow one to effectively compute all expressively different subsets of relations, and, with minimal effort, also all expressively different subsets of relations for the interesting
sub-languages of interval relations only or mixed relations only. Two out of four interesting classes of linearly ordered sets are treated in this Part I, and the remaining two are dealt with in Part II (forthcoming). There are several aspects of temporal reasoning in computer science to which this
extensive study can be related:
\begin{itemize}[nosep]
\item first-order logic over linear orders extended with temporal relations between points, intervals and mixed, are the very foundation of modal logics
for temporal reasoning, and it is necessary to have a complete understanding of the former in order to deal with the latter;
\item automated reasoning techniques for interval-based modal logics are at their first stages; an uncommon, but promising approach is to treat them as pure
modal logics over particular Kripke-frames, whose first-order properties are, in fact, representation theorems such as those (indirectly) treated in this paper. As a future work, we also plan to systematically study the area of representation theorems;
\item the decidability of pure first-order theories extended with interval relations is well-known \cite{lad87}; nevertheless, these results hinge on the decidability of MFO[$<$], while we believe that they could be refined both algorithmically and computationally;
\item the study of other related languages, important in artificial intelligence, can benefit from our results, such as first-order and modal logics for
spatial reasoning where basic objects are, for example, rectangles.
\end{itemize}

\bibliographystyle{alpha}

\bibliography{biblio}

\newpage

\section*{Appendix: Hasse Diagrams}

A pictorial representation of the results of this paper, in terms of relative expressive power of subsets of relations is shown here.

\begin{figure}
\begin{center}
    \begin{tikzpicture}
    [standard/.style={circle,fill=black!100, inner sep=0pt, minimum size=0mm}, scale=1.3]

    \node (top)   at (0, 2) [draw=none,fill=none]  {$\allm^{+}$};
    \node (01)   at  (-3, 1) [draw=none,fill=none]  {$\{\ip{0}, \ip{1} \}$};
    \node (34)   at  (3, 1) [draw=none,fill=none]  {$\{\ip{3}, \ip{4} \}$};
    \node (024)   at  (0, 1) [draw=none,fill=none]  {$\{\ip{0}, \ip{2}, \ip{4} \}$};
    \node (0)   at  (-2.5, 0) [draw=none,fill=none]  {$\{\ip{0} \}$};
    \node (1)   at  (-3.5, 0) [draw=none,fill=none]  {$\{\ip{1} \}$};
    \node (3)   at  (3.5, 0) [draw=none,fill=none]  {$\{\ip{3} \}$};
    \node (4)   at  (2.5, 0) [draw=none,fill=none]  {$\{\ip{4} \}$};
    \node (2)   at  (0, 0) [draw=none,fill=none]  {$\{\ip{2} \}$};
    \node (empty) at  (0, -1) [draw=none,fill=none]  {$\varnothing $};

    \draw[-] (top) to node[swap] {} (01);
    \draw[-] (top) to node[swap] {} (34);
    \draw[-] (top) to node[swap] {} (024);
    \draw[-] (01) to node[swap] {} (0);
    \draw[-] (01) to node[swap] {} (1);
    \draw[-] (34) to node[swap] {} (3);
    \draw[-] (34) to node[swap] {} (4);
    \draw[-] (024) to node[swap] {} (0);
    \draw[-] (024) to node[swap] {} (2);
    \draw[-] (024) to node[swap] {} (4);
    \draw[-] (0) to node[swap] {} (empty);
    \draw[-] (1) to node[swap] {} (empty);
    \draw[-] (2) to node[swap] {} (empty);
    \draw[-] (3) to node[swap] {} (empty);
    \draw[-] (4) to node[swap] {} (empty);

    \end{tikzpicture}
    \end{center}\caption{The lattice of expressively different fragments of $FO+\allm^{+}$}\label{Fig:RelExpress:Mixed}
\end{figure}

\begin{figure}
\begin{center}
    \begin{tikzpicture}
    [standard/.style={circle,fill=black!100, inner sep=0pt, minimum size=0mm}, scale=1.3]
    \node (m)   at (5, 2) [draw=none,fill=none]  {\footnotesize{$\alli^{+}$}};
    \node (s)   at (5, 1) [draw=none,fill=none] {$\{ \ii{1}{4}, =_i \}$};
    \node (f)   at (6.5, 1) [draw=none,fill=none] {$\{ \ii{0}{3}, =_i \}$};

    \node (bdoe) at (2.0, 1) [draw=none,fill=none] {\footnotesize{$\{ \ii{4}{4}, \ii{0}{4}, \ii{2}{4}, =_{i} \}$}};
    \node (bdo)  at (-1.2, 0) [draw=none,fill=none] {$\{ \ii{4}{4}, \ii{0}{4}, \ii{2}{4}\}$};
    \node (bde)  at ( 0.7, 0) [draw=none,fill=none] {$\{ \ii{4}{4}, \ii{0}{4}, =_{i}\}$};
    \node (boe)  at ( 2.4, 0) [draw=none,fill=none] {$\{ \ii{4}{4}, \ii{2}{4}, =_{i} \}$};
    \node (doe)  at ( 4.1, 0) [draw=none,fill=none] {$\{ \ii{0}{4}, \ii{2}{4}, =_{i} \}$};
    \node (bd)  at ( -2.1, -1.5) [draw=none,fill=none] {$\{ \ii{4}{4}, \ii{0}{4} \}$};
    \node (bo)  at ( -0.8, -1.5) [draw=none,fill=none] {$\{ \ii{4}{4}, \ii{2}{4} \}$};
    \node (be)  at ( 0.5, -1.5) [draw=none,fill=none] {$\{ \ii{4}{4}, =_{i} \}$};
    \node (do)  at ( 1.7, -1.5) [draw=none,fill=none] {$\{ \ii{0}{4}, \ii{2}{4} \}$};
    \node (de)  at ( 3.0, -1.5) [draw=none,fill=none] {$\{ \ii{0}{4}, =_{i} \}$};
    \node (oe)  at ( 4.3, -1.5) [draw=none,fill=none] {$\{ \ii{2}{4}, =_{i} \}$};

    \node (b)  at (1 , -3) [draw=none,fill=none] {$\{ \ii{4}{4} \}$};
    \node (d)  at (2 , -3) [draw=none,fill=none] {$\{ \ii{0}{4} \}$};
    \node (o)  at (3 , -3) [draw=none,fill=none] {$\{ \ii{2}{4} \}$};
    \node (e)  at (6.5 , -3) [draw=none,fill=none] {$\{  =_{i}\}$};
    \node (empty) at (5, -4) [draw=none,fill=none] {$\emptyset$};

    \draw[-] (m) to node[swap] {} (f);
    \draw[-] (m) to node[swap] {} (s);
    \draw[-] (m) to node[swap] {} (bdoe);

    \draw[-] (bdoe) to node[swap] {} (bdo);
    \draw[-] (bdoe) to node[swap] {} (bde);
    \draw[-] (bdoe) to node[swap] {} (boe);
    \draw[-] (bdoe) to node[swap] {} (doe);

    \draw[-] (bdo) to node[swap] {} (bd);
    \draw[-] (bdo) to node[swap] {} (bo);
    \draw[-] (bdo) to node[swap] {} (do);

    \draw[-] (bde) to node[swap] {} (bd);
    \draw[-] (bde) to node[swap] {} (be);
    \draw[-] (bde) to node[swap] {} (de);

    \draw[-] (boe) to node[swap] {} (bo);
    \draw[-] (boe) to node[swap] {} (be);
    \draw[-] (boe) to node[swap] {} (oe);

    \draw[-] (doe) to node[swap] {} (do);
    \draw[-] (doe) to node[swap] {} (de);
    \draw[-] (doe) to node[swap] {} (oe);

    \draw[-] (bd) to node[swap] {} (b);
    \draw[-] (bd) to node[swap] {} (d);
    \draw[-] (be) to node[swap] {} (b);
    \draw[-] (be) to node[swap] {} (e);
    \draw[-] (oe) to node[swap] {} (o);
    \draw[-] (oe) to node[swap] {} (e);
    \draw[-] (de) to node[swap] {} (d);
    \draw[-] (de) to node[swap] {} (e);
    \draw[-] (do) to node[swap] {} (d);
    \draw[-] (do) to node[swap] {} (o);
    \draw[-] (bo) to node[swap] {} (b);
    \draw[-] (bo) to node[swap] {} (o);

    \draw[-] (s) to node[swap] {} (e);
    \draw[-] (f) to node[swap] {} (e);

    \draw[-] (b) to node[swap] {} (empty);
    \draw[-] (d) to node[swap] {} (empty);
    \draw[-] (o) to node[swap] {} (empty);
    \draw[-] (e) to node[swap] {} (empty);

    \end{tikzpicture}
    \end{center}\caption{The lattice of expressively different fragments of $FO+\alli^{+}$}\label{Fig:RelExpress:All}
\end{figure}

\begin{figure}
\begin{center}
\includegraphics[width=15cm,height=21.5cm]{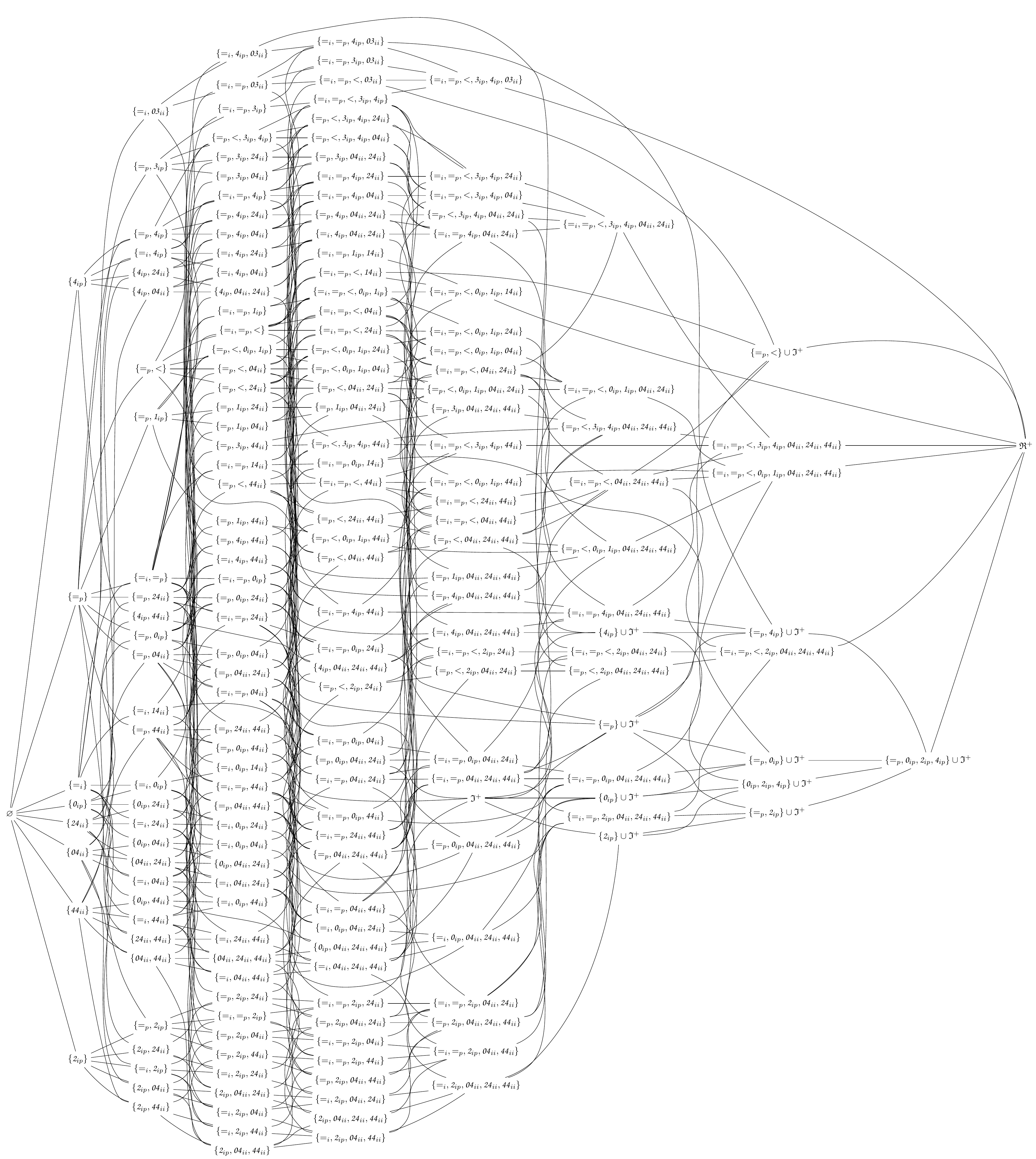}
\end{center}\caption{The lattice of expressively different fragments of $FO+\allr^{+}$}\label{Fig:RelExpress:Mixed}
\end{figure}

\end{document}